\def\BibTeX{{\rm B\kern-.05em{\sc i\kern-.025em b}\kern-.08em
		T\kern-.1667em\lower.7ex\hbox{E}\kern-.125emX}}
\newcolumntype{L}[1]{>{\raggedright\arraybackslash}p{#1}}
\newcolumntype{C}[1]{>{\centering\arraybackslash}p{#1}}
\newcolumntype{R}[1]{>{\raggedleft\arraybackslash}p{#1}}
\long\def\comment#1{}
\newcommand{\nop}[1]{}
\newtheorem{theorem}{\bf Theorem}[section]
\newtheorem{example}{\bf Example}
\theoremstyle{remark}
\theoremstyle{definition}
\newtheorem{definition}{\bf Definition}
\newcommand{\revision}[1]{\textcolor{black}{#1}}
\newcommand{\entity}[1]{\mathcal{#1}}
\newcommand{\algvar}[1]{\mathcal{#1}}
\newcommand{\vectorfont}[1]{\boldsymbol{#1}}
\newcommand{\problemDefineSimpleName}{PSDEP}
\newcommand{\problemDefineTotalName}{Private Spatial Distribution Estimation Problem}
\newcommand{\solutionGeneralName}{SAM}
\newcommand{\solutionGeneralTotalName}{Spatial Area Mechanism}
\newcommand{\solutionD}{HUEM}
\newcommand{\solutionDTotalName}{Hybrid Uniform-Exponential Mechanism}
\newcommand{\solutionB}{DAM}
\newcommand{\solutionBTotalName}{Disk Area Mechanism}
\newcommand{\solutionBNS}{DAM-NS}
\newcommand{\solutionBNSTotalName}{Disk Area Mechanism with Non-Shrink}
\newcommand{\solutionCMPm}{MDSW}
\newcommand{\solutionCMPmTotalName}{Multi-dimensional Square Wave Mechanism}
\newcommand{\solutionCMPg}{SEM-Geo-I}
\newcommand{\solutionCMPgTotalName}{Subset Exponential Mechanism with Geo-I}
\newcommand{\solutionCMPTrA}{LDPTrace}
\newcommand{\solutionCMPTrATotalName}{Locally Differentially Private Trajectory Synthesis}
\newcommand{\solutionCMPTrB}{PivotTrace}
\newcommand{\solutionCMPTrBTotalName}{Trajecotry Data Collectin with Local Differential Privacy}
\begin{document}

\title{Numerical Estimation of Spatial Distributions under Differential Privacy}

\author{
	{Leilei Du{\small$~^{1}$}, Peng Cheng{\small$~^{1}$}, Libin Zheng{\small$~^{2}$}, Xiang Lian{\small$~^{3}$}, Lei Chen{\small$~^{4, 5}$}, Wei Xi{\small$~^{6}$}, Wangze Ni{\small$~^{7*}$}}\\
	\fontsize{10}{10}\itshape
	$~^{1}$East China Normal University, China; $~^{2}$Sun Yat-sen University, China; $~^{3}$Kent State University, USA; \\
	$~^{4}$HKUST (GZ), China;
	$~^{5}$HKUST, China; $~^{6}$Xi'an Jiaotong University, China;
	$~^{7}$Zhejiang University, China\\
	\fontsize{9}{9}\upshape
	leileidu@stu.ecnu.edu.cn; pcheng@sei.ecnu.edu.cn; zhenglb6@mail.sysu.edu.cn; xlian@kent.edu; leichen@cse.ust.hk;\\ xiwei@xjtu.edu.cn; niwangze@zju.edu.cn
	\thanks{*Wangze Ni is also with The State Key Laboratory of Blockchain and Data Security; Hangzhou High-Tech Zone (Binjiang) Institute of Blockchain and Data Security.}
}

\maketitle

\begin{abstract}
	Estimating spatial distributions is important in data analysis, such as traffic flow forecasting and epidemic prevention. 
	To achieve accurate spatial distribution estimation, the analysis needs to collect sufficient user data. 
	However, collecting data directly from individuals could compromise their privacy. Most previous works focused on private distribution estimation for one-dimensional data, which does not consider spatial data relation and leads to poor accuracy for spatial distribution estimation. In this paper, we address the problem of private spatial distribution estimation, where we collect spatial data from individuals and aim to minimize the distance between the actual distribution and estimated one under Local Differential Privacy (LDP).
	To leverage the numerical nature of the domain, we project spatial data and its relationships onto a one-dimensional distribution. We then use this projection to estimate the overall spatial distribution.
	Specifically, we propose a reporting mechanism called Disk Area Mechanism (DAM), which projects the spatial domain onto a line and optimizes the estimation using the sliced Wasserstein distance. Through extensive experiments, we show the effectiveness of our DAM approach on both real and synthetic data sets, compared with the state-of-the-art methods, such as Multi-dimensional Square Wave Mechanism (MDSW) and Subset Exponential Mechanism with Geo-I (SEM-Geo-I). Our results show that our DAM always performs better than MDSW and is better than SEM-Geo-I when the data granularity is fine enough.
\end{abstract}

\section{Introduction}
With the popularity of smart devices and the high quality of wireless networks, people can easily access the Internet and communicate with online services. Convenient online service platforms, such as ride-hailing apps, collect user data, analyze it, and provide better services in return.
For example, collecting vehicle locations and analyzing traffic flow can help ride-hailing drivers avoid traffic jams.
However, directly collecting data could compromise individuals' privacy, leading to users refusing to share their information.
In traffic flow forecasting, if a driver submits his/her locations to the platform over a period (e.g., a month), a malicious platform attacker could predict the driver's activity range and surveil him/her.

Differential Privacy (DP) \cite{DBLP:conf/icalp/Dwork06} is a privacy standard that resolves conflicts between data privacy and data analysis with the aid of a trusted server. To further avoid information leakage on the trust server, a method for DP in a local setting, called Local Differential Privacy (LDP), has been proposed recently \cite{DBLP:conf/stoc/BassilyS15}.
In LDP, there are multiple \emph{users} and one \emph{analyst}.
First, all users randomize their actual information (to protect their privacy) themselves and send it to the analyst.
Then, the analyst estimates the users' distribution based on the randomized information.
These two steps by the users and analyst are encapsulated as Frequency Oracle \cite{DBLP:conf/uss/WangBLJ17} (FO) for count queries under LDP.
As an efficient tool, FO has been widely used to resolve distribution estimation under LDP~\cite{DBLP:conf/uss/WangBLJ17,DBLP:conf/icde/WangXYHSS018_Minimax,DBLP:conf/icde/WangXYHSS018_PrivTrie,DBLP:conf/sigmod/Li0LLS20,DBLP:journals/pvldb/CormodeMM21} and various private query issues, such as private range queries~\cite{DBLP:conf/ccs/Du0BLJ0021,DBLP:conf/sigmod/WangDZHHLJ19,DBLP:journals/pvldb/Yang0L0S20,DBLP:conf/ccs/Du0BLJ0021,DBLP:conf/icde/WangW0N0TG023}.

Traditional FO performs well for estimating categorical data distribution (i.e., data without order)~\cite{DBLP:conf/sigmod/Li0LLS20}.
Similar to existing studies \cite{DBLP:conf/uss/WangBLJ17,DBLP:journals/pvldb/CormodeMM21,DBLP:conf/infocom/WangNWXYH17}, we can directly dividing the spatial area into unrelated grids and apply traditional FO to estimate the spatial distribution.
However, in the spatial (2-Dim) context, there is a strong ordinal relationship between any pair of data points.
For example, a heavily congested traffic junction is more likely to cause blockages at nearby junctions than those farther away.
Similarly, a COVID-19 affected area is more likely to lead to outbreaks in surrounding areas than in distant ones.
Therefore, if we use traditional FO, the ordinal relationship in the  domain may be ignored, which may lead to poor estimation accuracy~\cite{DBLP:conf/sigmod/Li0LLS20}.

\begin{figure}[t!]
	\centering
	\includegraphics[width=0.5\linewidth]{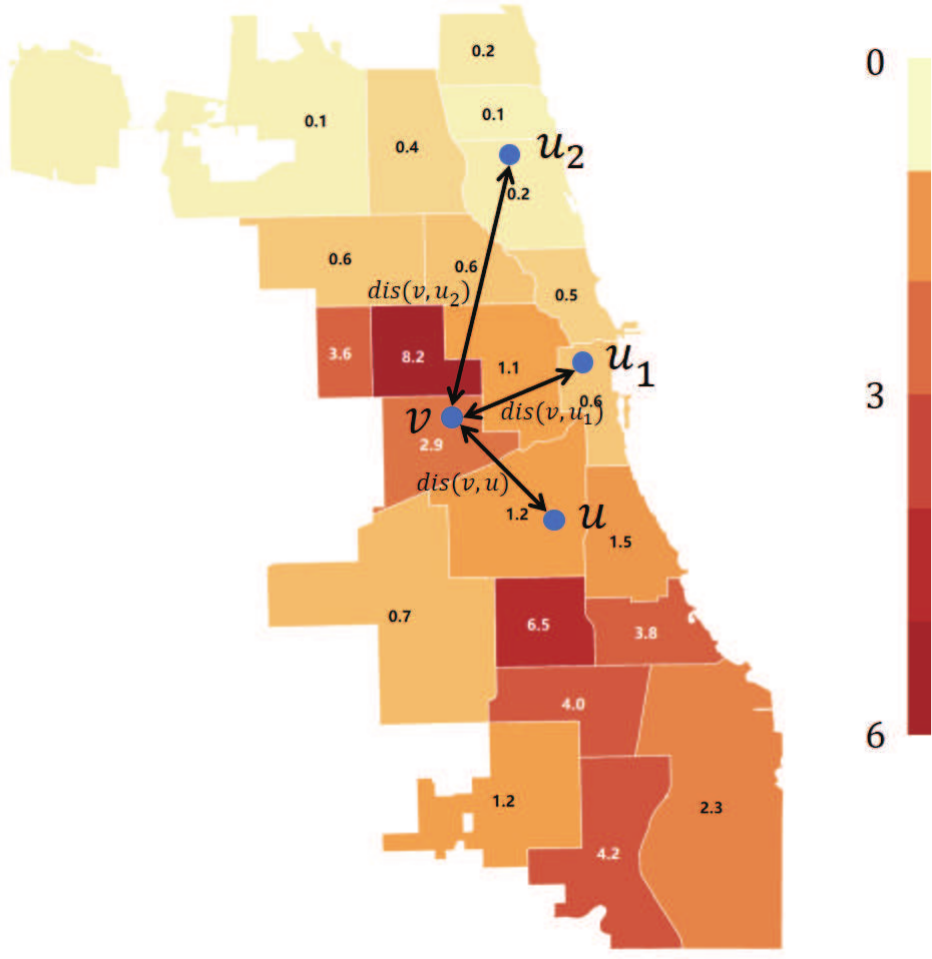}
	\caption{shooting victims per 1,000 residents of Chicago in 2021}\label{fig:Introduction_A}
\end{figure}

In this paper, we study the problem of private spatial distribution estimation, in which each user publishes their data under LDP, and the analyst estimates the distribution of these users to minimize the difference between the actual and recovered density distributions.
To illustrate the motivation for this problem, we consider the following example:\vspace{-2ex}
\begin{example}\label{example_1}
	As shown in Figure~\ref{fig:Introduction_A}, an analyst wants to estimate the shooting distribution in Chicago.
	To do this, the analyst needs to collect shooting points in this area and count the number of points at each location.
	In order to protect the actual shooting locations and maintain social stability, the police use an LDP mechanism to randomize the locations and send these randomized locations to the analyst.
	After collecting all randomized locations, the analyst can estimate the shooting distribution of Chicago.
	
	Take point $v$ as a shooting instance.
	Given another shooting instance point $u$, we denote the distance between $v$ and $u$ as $dis(v,u)$.
	If we regard locations as categorical points, according to traditional FO (i.e., Categorical Frequency Oracle (CFO)~\cite{DBLP:conf/sigmod/Li0LLS20}), $v$ will be published as $u_1$ or $u_2$ with the same low probability.
	This method neglects the location ordinal relationship among the nodes (i.e., $dis(v,u_1)<dis(v,u_2)$).
	In this way, it is both dangerous for the citizens of Chicago (the shooter can easily move to $u_1$ rather than $u_2$) and leads to poor estimation accuracy for shooting distribution.
	Considering the location ordinal relationship, $u_1$ is closer to $v$ than $u_2$.
	Thus, for $v$'s randomized points, the probability of choosing $u_1$ should be higher than that of $u_2$.
\end{example}

Many studies use mean absolute error, variances or Kullback-Leibler (KL) Divergence~\cite{ash2012information} as metrics for distribution estimation~\cite{DBLP:conf/sigmod/Li0LLS20}.
However, none of these metrics effectively capture the ordinal relationship.
In this paper, we use the common-used 2-Dim Wasserstein distance~\cite{panaretos2019statistical} to measure the difference between two distributions while capturing the spatial ordinal relationship.
Based on 2-Dim Wasserstein distance, we propose a general spatial distribution estimation mechanism called \emph{\solutionGeneralTotalName{}} (\solutionGeneralName{}), which achieves $\epsilon$-LDP.
Additionally, we introduce a simple implementation mechanism of \solutionGeneralTotalName{} called \emph{\solutionDTotalName{}} (\solutionD{}).
To optimize \solutionGeneralName{}, we need the close forms of 2-Dim Wasserstein Distance.
However, except for the 2-Dim normal (Gaussian) distribution, there are no closed forms for 2-Dim Wasserstein distance when the data dimension is greater than 1~\cite{panaretos2019statistical}, making it difficult to optimize the estimation utility.
A direct method is to optimize each dimension estimation according to the 1-Dim Wasserstein distance and then combine them (MSW~\cite{DBLP:journals/pvldb/Yang0L0S20}).
However, this approach loses the spatial ordinal relationship.
To address this problem, we use the Radon Transform~\cite{helgason1980radon} to project spatial data onto one-dimensional (1-Dim) data and transform the 2-Dim Wasserstein distance into the Sliced Wasserstein distance~\cite{DBLP:conf/nips/KolouriNSBR19}.
Based on the sliced Wasserstein distance metric, we propose \emph{\solutionBTotalName{}} (\solutionB{}) and prove that it is optimal among all types of \solutionGeneralName{}.
To implement our \solutionB{} on real data (discretized domain), we design a \emph{grid partition and shrinkage} method to effectively bucketize the data.
Our \solutionB{} achieves a lower 2-Dim Wasserstein distance (between the recovered and actual density distributions) than the state-of-the-art \solutionCMPmTotalName{} (\solutionCMPm{})~\cite{DBLP:journals/pvldb/Yang0L0S20} and the categorical \solutionCMPgTotalName{} (\solutionCMPg{})~\cite{DBLP:conf/infocom/WangNWXYH17}.
The contributions of this paper are as follows:
\begin{itemize}[leftmargin=*]
	\item We formally define our \problemDefineTotalName{} (\problemDefineSimpleName{}) in Section~\ref{ProblemDefinition} and propose a mechanism structure called \solutionGeneralTotalName{} (\solutionGeneralName{}) and a direct baseline method called \solutionDTotalName{} (\solutionD{}) in Section~\ref{DirectApproach}.
	\item We propose \solutionBTotalName{} (\solutionB{}) and analyze how to choose the best norm distance $b$ in Section~\ref{BestMethod}.
	\item We introduce the implementation of our \solutionB{} including grid partitioning, shrinkage and post-process in Section~\ref{Implement}.
	\item We unify local differential privacy mechanisms (e.g., \solutionB{}) and Geo-I mechanisms (e.g., \solutionCMPg{} \cite{DBLP:conf/infocom/WangNWXYH17}) by the \emph{local privacy mechanism} \cite{DBLP:conf/ccs/ShokriTTHB12} and conduct experimental evaluations of our proposed method on both real and synthetic datasets to demonstrate its efficiency and effectiveness in Section~\ref{Experiment}.
\end{itemize}

\section{Related Work}

\begin{table}[t!]
	\caption{The summary of related studies.}
	\label{relatedwork_table}\vspace{1ex}
	\centering
		\resizebox{8.8cm}{!}{
			\begin{tabular}{|c|c|c|c|c|}
				\hline
				\textbf{classified name}   & \textbf{mechanism name}       & \textbf{catch numeric}    & \textbf{meet SDP}    &\textbf{locally} \\ \hline
				central privacy        & PSD+Geocast~\cite{DBLP:journals/tmc/ToGFS17}  & $\times$    & $\surd$           & $\times$     \\ \hline
				\multirow{2}{*}{local privacy}  & ID-LDP~\cite{DBLP:conf/icde/Gu0XC20}    & $\times$   & $\surd$         & $\surd$     \\ \cline{2-5}
				&{Geo-I}~\cite{DBLP:conf/ccs/AndresBCP13}        & 2-Dim         & $\surd$           & $\surd$    \\ \hline
				categoric      & Bucket+CFO~\cite{DBLP:conf/uss/WangBLJ17,DBLP:journals/pvldb/CormodeMM21}              & $\times$               & $\surd$           & $\surd$          \\ \hline
				\multirow{4}{*}{numeric} & {SEM-Geo-I}~\cite{DBLP:conf/infocom/WangNWXYH17}               & 2-Dim                & $\surd$           & $\surd$          \\ \cline{2-5}
				& SR~\cite{DBLP:conf/icde/WangXYHSS018_Minimax}                      & 1-Dim                & $\times$          & $\surd$          \\ \cline{2-5}
				& PM~\cite{DBLP:conf/icde/WangXYHSS018_PrivTrie}                      & 1-Dim                & $\times$          & $\surd$          \\ \cline{2-5}
				& \revision{\multirow{2}{*}{SW-EMS~\cite{DBLP:conf/sigmod/Li0LLS20}}}                  & \revision{\multirow{2}{*}{1-Dim}}                & \revision{\multirow{2}{*}{$\times$}}          & \revision{\multirow{2}{*}{$\surd$}}          \\ \cline{1-1}
				one-dimensional                   &                   &                 &          &           \\ \hline
				\multirow{5}{*}{multi-dimensional} & PSD~\cite{DBLP:conf/icde/CormodePSSY12}      & $\times$               & $\surd$           & $\times$          \\ \cline{2-5}
				& AG~\cite{DBLP:conf/sigmod/QardajiYL14}        & $\times$               & $\surd$           & $\times$          \\ \cline{2-5}
				& HIO~\cite{DBLP:conf/sigmod/WangDZHHLJ19}        & $\times$               & $\surd$           & $\surd$          \\ \cline{2-5}
				& AHEAD~\cite{DBLP:conf/ccs/Du0BLJ0021}        & $\times$               & $\surd$           & $\surd$          \\ \cline{2-5}
				& {MSW, HDG}~\cite{DBLP:journals/pvldb/Yang0L0S20}                 & 1-Dim                & $\surd$           & $\surd$          \\ \hline
				$\backslash$                     &{\textbf{Our mechanism}}  & 2-Dim                & $\surd$           & $\surd$          \\ \hline
			\end{tabular}
		}
\end{table}

The privacy protection is an important issue in spatial data statistics. It requires obtaining accurate statistical results while protecting individuals' information from being released. Differential privacy~\cite{DBLP:conf/icalp/Dwork06} is a key tool for privacy protection and privacy-preserving data release. We classify related work based on differential privacy into three dimensions and summarize it in Table~\ref{relatedwork_table}.

\noindent\textbf{{Central / Local Differential Privacy.}}
Conventional differential privacy requires a trusted third party to collect individuals' data and randomize it under differential private mechanisms~\cite{DBLP:journals/tmc/ToGFS17}. However, the third party may be attacked by malicious entities, hindering individuals from sharing their information. To address this issue, local differential privacy (LDP)~\cite{DBLP:conf/focs/KasiviswanathanLNRS08,DBLP:conf/focs/DuchiJW13} is proposed, where individuals randomize their own information and then report the randomized messages to the estimator. Gu et al.~\cite{DBLP:conf/icde/Gu0XC20} propose input-discriminative LDP, which can satisfy different privacy levels required by different individuals simultaneously. Andr{\'{e}}s and Bordenabe~\cite{DBLP:conf/ccs/AndresBCP13} propose Geo-Indistinguishability (Geo-I), which provides high privacy within short distances and low privacy with long distances. However, both of these designs detriment the privacy of LDP.

\noindent\textbf{{Categorical / Numerical Frequency Oracle.}}
To handle the issue of releasing numeric data under Local Differential Privacy (LDP), a popular method is to apply Categorical Frequency Oracle (CFO, FO)~\cite{DBLP:conf/uss/WangBLJ17,DBLP:journals/pvldb/CormodeMM21}. The basic process is to first divide the numeric data into several buckets and then use CFO to estimate the result. However, simply dividing the data leads to information loss during comparison.
Duchi et al.~\cite{DBLP:conf/icde/WangXYHSS018_Minimax} propose Stochastic Rounding (SR) to handle numerical settings. In SR, a value $v$ in the interval $[-1,1]$ returns $-1$ with probability $\frac{1}{2}-\frac{e^\epsilon-1}{2(e^\epsilon+1)}v$ and $1$ with probability $\frac{1}{2}+\frac{e^\epsilon-1}{2(e^\epsilon+1)}v$.
Wang et al.~\cite{DBLP:conf/icde/WangXYHSS018_PrivTrie} propose the Piecewise mechanism (PM), where the input domain is $[-1,1]$ and the output domain is $[-s,s]$, with $s=\frac{e^{\epsilon/2}+1}{e^{\epsilon/2}-1}$. Given an input point $v$, it returns a point in the subinterval $[\frac{e^{\epsilon/2}v-1}{e^{\epsilon/2}-1},\frac{e^{\epsilon/2}v+1}{e^{\epsilon/2}-1}]$ with probability $\frac{e^{\epsilon/2}(e^{\epsilon/2}-1)}{2(e^{\epsilon/2}+1)}$, and the complement subinterval with probability $\frac{e^{\epsilon/2}-1}{2e^{\epsilon/2}(e^{\epsilon/2}+1)}$.
Note that both SR and PM focus on the specific task of mean estimation.
Li et al.~\cite{DBLP:conf/sigmod/Li0LLS20} propose the Square Wave mechanism with Expectation Maximization Smoothing (SW-EMS) to handle numerical distribution under local differential privacy (LDP).
SW-EMS is a new numeric frequency oracle that makes full use of ordinal relations to obtain much more accurate estimations without breaching privacy. However, SW-EMS only focuses on one-dimensional data, and is therefore not suitable for estimating spatial distributions (SDP).
Wang et al.~\cite{DBLP:conf/infocom/WangNWXYH17} propose the Subset Exponential Mechanism under $\epsilon$-Geo-I constraints (SEM-Geo-I). SEM-Geo-I can achieve accurate estimation, however, it only provides strictly weaker privacy based on Geo-I.

\noindent\textbf{{One / Multiple Dimensional Data Estimation.}}
Several works have been proposed for handling spatial data with traditional differential privacy. Cormode et al.~\cite{DBLP:conf/icde/CormodePSSY12} design a new structure called PSD, which utilizes indexing methods such as quadtrees and kd-trees to generate spatial decompositions for describing the data distribution. Similarly, Qardaji et al.~\cite{DBLP:conf/sigmod/QardajiYL14} present an Adaptive Grid (AG) approach to release a synopsis for 2-Dim geospatial data. However, both PSD and AG require the aid of a trusted third party.
Yang et al.~\cite{DBLP:journals/pvldb/Yang0L0S20} propose the Multiplied Square Wave (MSW) mechanism, which extends the SW-EMS~\cite{DBLP:conf/sigmod/Li0LLS20} mechanism. MSW provides an accurate estimation for multi-dimensional data under Local Differential Privacy (LDP). However, it can only capture the correlation in one dimension, which leads to high error. 
In order to capture the correlation among different dimensions, they propose the Hybrid-Dimensional Grids (HDG) method.
HDG divides the $n$-Dim dimensional data into 1-Dim and 2-Dim grids and use these grids to capture the correlation among different dimensions in range query. However 1-Dim grids in HDG may still destroy the correlation among different dimension data.
Du et al.~\cite{DBLP:conf/ccs/Du0BLJ0021} propose the Adaptive Hierarchical Decomposition~(AHEAD) method based on HIO~\cite{DBLP:conf/sigmod/WangDZHHLJ19} to further improve the private range query by adaptively choosing the granularity of domain composition.
However HDG and AHEAD do not catch the numeric (the distance) in spatial relationship from different randomized points to the real points.

Our mechanism can not only catch the numeric relationship and accurately estimate spatial distribution estimation under LDP, but also combine with the methods of HIO, HDG and AHEAD to further improve the accuracy in private range query.

\section{Problem definition}\label{ProblemDefinition}

In this section, we provide basic notations and preliminaries, distance metrics, and formal definition of our \textit{\problemDefineTotalName{}} (\problemDefineSimpleName{}).
\revision{
	Table~\ref{tbl:notations} summarizes the key notations used throughout this paper.
}
\begin{table}[t!]
	\caption{Notations.}
	\label{tbl:notations}
	\centering
	\scalebox{1}{
		\revision{
			\begin{tabular}{c|c}
				\hline
				\textbf{Notations}        & \textbf{Description}                                       \\ \hline
				$k$-Dim             & $k$-dimension                                        \\ \hline
				$\entity{D}$, $\mathcal{I}$             & the input domain                                        \\ \hline
				$\tilde{\entity{D}}$, $\mathcal{T}$             & the output domain                                        \\ \hline
				$\mathcal{\hat{I}}$            & the inferred (estimated) domain of $\mathcal{I}$ domain                                        \\ \hline
				$D$             & an input instance                                         \\ \hline
				$\tilde{D}, O$             & an output instance instance                                         \\ \hline
				$\vectorfont{v}$             & a spatial data point                                        \\ \hline
				$\tilde{\vectorfont{v}}$             & a disturbed point of $\vectorfont{v}$                                        \\ \hline
				$M_{\vectorfont{v}}(\tilde{\vectorfont{v}})$            & the probability of randomizing $\vectorfont{v}$ as  $\tilde{\vectorfont{v}}$                                \\ \hline
				$W(\cdot)$             & the wave function                                 \\ \hline
				\multirow{2}{*}{$W_k^p$}             & a $k$ dimensional Wasserstein distance                                          \\
				& with $p$ norm cost function\\ \hline
				$W_k(\cdot)$             & the $p$-th root of $W_k^p$ (i.e., $W_k(\cdot)=\sqrt[p]{W_k^p}$ )                                 \\ \hline
				\multirow{2}{*}{$SW_k^p$}             & a $k$ dimensional sliced Wasserstein distance                                          \\
				& with $p$ norm cost function\\ \hline
				$b$             & a high dimensional radius                                        \\ \hline
				$L$             & the side length of an input instance                                        \\ \hline
				$g$             & the side length of a grid cell                                        \\ \hline
				$d$             & the number of cells along a side of grid length                                        \\ \hline
				$n$             & the number of cells in the grid                                        \\ \hline
			\end{tabular}
		}
	}
\end{table}

\subsection{Basic Notations and Preliminaries}
We use the notation $[a_1:a_2]$ to denote an integer series from $a_1$ to $a_2$ and abbreviate $[a_1:a_2]$ to $[a_2]$ when $a_1=1$.
We use $x\xleftarrow{\$}X$ to indicate uniformly sampling an element $x$ from set $X$.
We use $v$ to indicate a point with index $(x_v,y_v)$ in the Plan-Rectangular coordinate system~\cite{bossler2010coordinates} and $(r_v, \theta_v)$ in the Polar coordinate system~\cite{bossler2010coordinates}.
The input domain is denoted as $\mathcal{D}$, and the output domain is denoted as $\tilde{\mathcal{D}}$.
We use $\|M\|_p$ to denote the $p$-norm of any matrix $M$, where $p\in \mathbb{N}\cup\{\infty\}$.
We denote the inner product of matrices $A_1$ and $A_2$ as $A_1\cdot A_2$, the element-wise product (also called Hadamard product) as $A_1\bigodot A_2$.
For example, let $A_1=(c_{i,j})_{n\times n}$ and $A_2=(d_{i,j})_{n\times n}$ for $1\leq i,j\leq n$.
Then we have $A_1\bigodot A_2 = (c_{i,j}d_{i,j})_{n\times n}$ for $1\leq i,j\leq n$.
We abbreviate $k$-dimension as $k$-Dim.
When $k=2$, we also call the data as \emph{spatial} data.

We utilize Local Differential Privacy (LDP)~\cite{DBLP:conf/stoc/BassilyS15} to protect the privacy of original data locations.

\begin{definition}($\epsilon$-Local differential privacy, $\epsilon$-LDP~\cite{DBLP:conf/stoc/BassilyS15})\label{Def_LDP}.
	An algorithm $M(\cdot): D\to\tilde{D}$ satisfies $\epsilon$-local differential privacy ($\epsilon$-LDP), where $\epsilon\geq 0$ if and only if for any input values $v_1,v_2\in D$, we have
	$$\forall S\subset \tilde{D}: \Pr[M(v_1)\in S]\leq e^\epsilon \Pr[M(v_2)\in S],$$
	where $\tilde{D}$ denotes the set of all possible output of $M$.
\end{definition}
Based on the Local Differential Privacy model, a standard protocol called Frequency Oracle ($FO$)~\cite{DBLP:conf/ndss/0001LLSL20} for frequency estimation has been proposed. $FO$ is composed of two functions, namely, the \textit{randomized reporting function} $FO.T$ and the \textit{estimation function} $FO.E$. $FO.T$ is used to randomize the raw data into a kind of randomized data, while $FO.E$ is used to estimate the raw data based on the randomized data.

\subsection{Distance Metrics}
\begin{definition}(Wasserstein Distance~\cite{panaretos2019statistical}).
	Let $\mathcal{P}_p(\mathbb{R}^k)$ be the space of Borel probability measures on $\mathbb{R}^k$ with finite $p$-th moments, i.e. for all $\mu\in \mathcal{P}_p(\mathbb{R}^k)$, $\int_{\mathbb{R}^k}|x|^p<\infty$. Let $\mu_A, \mu_B\in \mathcal{P}_p(\mathbb{R}^k)$ then we define the $L_k^p$-Wasserstein distance as:
	{\scriptsize
		$$W_k^p(\mu_A,\mu_B)=\inf\left\{\int_{\mathbb{R}^k\times\mathbb{R}^k}|x-y|^pd\pi(x,y):\pi\in\Pi(\mu_A,\mu_B)\right\},$$
	}
	where $\inf$ is the infimum (greatest lower bound) function and $\Pi(\mu_A,\mu_B)$ is the complete set of joint distributions of $\mu_A$ and $\mu_B$.
\end{definition}

Wasserstein distance~\cite{panaretos2019statistical} (also called Earth Mover's distance) is a metric on probability distributions used to measure the minimal effort of probability mass from one distribution to another. It can be used to measure the similarity between two distributions.

\subsection{\problemDefineSimpleName{} Definition}
We give our problem definition in Definition~\ref{def:problem} as follows.
\begin{definition}(\problemDefineTotalName{}, \problemDefineSimpleName{}).\label{def:problem}
	Given a set of ordinal spatial values $V\subseteq \mathbb{R}^2$ with $\chi$ distinct values,
	a privacy budget $\epsilon$, a \problemDefineSimpleName{} is to design a frequency oracle mechanism $FO=<T,E>$ satisfying that for the actual distribution $D\in \mathbb{R}^{\chi}$ of $V$, $FO$ outputs $\tilde{D}\in \mathbb{R}^{\chi}$, where $FO.T$ satisfies $\epsilon$-LDP and the $L_2^2$-Wasserstein distance $W_2^2(D,\tilde{D})$ is minimized.
\end{definition}

\section{The \solutionDTotalName{}}\label{DirectApproach}
In this section, we first declare the definition of the input/output domain and the randomized function.
Then we propose a general \solutionGeneralTotalName{} (\solutionGeneralName{}) and prove that it satisfies $\epsilon$-LDP. 
After that we introduce a direct implementation mechanism of \solutionGeneralName{} called \solutionDTotalName{} (\solutionD{}), and analyze its accuracy.

\noindent
\textbf{Input/Output Domain.}
Without loss of generality, we define the input domain $\mathcal{D}=\{\vectorfont{v}|x_{\vectorfont{v}}\in[0,1],y_{\vectorfont{v}}\in[0,1]\}$ as a square with side length $1$. 
For any input point $\vectorfont{v}$, we define its \textit{$b$ distance set} as $DS_{b}(\vectorfont{v})=\{\vectorfont{u}|\|\vectorfont{u}-\vectorfont{v}\|_2\leq b\}$.
We define the output domain as the union set of all points' $DS_{b}$ in $\mathcal{D}$, namely,  $\tilde{\mathcal{D}}=\bigcup_{\vectorfont{v}\in\mathcal{D}}\{DS_{b}\}$. 
We call $b$ as the \textit{high probability radius}.
Figure~\ref{fig:Input and output domain} shows the input and output domains. The input domain $\mathcal{D}$ is the black square with side length $1$. The output domain $\tilde{\mathcal{D}}$ is the red rounded square. For a point $v\in\mathcal{D}$, its $DS_{b}$ is the point located within the green circle.

\noindent
\textbf{Randomized Reporting Function.} 
For any input point $v\in\mathcal{D}$, let $M_{\vectorfont{v}}: \tilde{\mathcal{D}}\to [0,1]$ be the probability density functions (PDF) over the output domain $\tilde{\mathcal{D}}$.
Spatially, $M_{\vectorfont{v}}(\tilde{\vectorfont{v}})$ means the probability of randomizing $\vectorfont{v}$ as $\tilde{\vectorfont{v}}$. 
We define the randomized reporting functions as a family of PDF over the output domain (i.e., $\{M_{\vectorfont{v}}(\cdot)\}_{\vectorfont{v}\in \mathcal{D}}$).

\begin{figure}[t!]\centering\vspace{2ex}
	\scalebox{0.32}[0.32]{\includegraphics{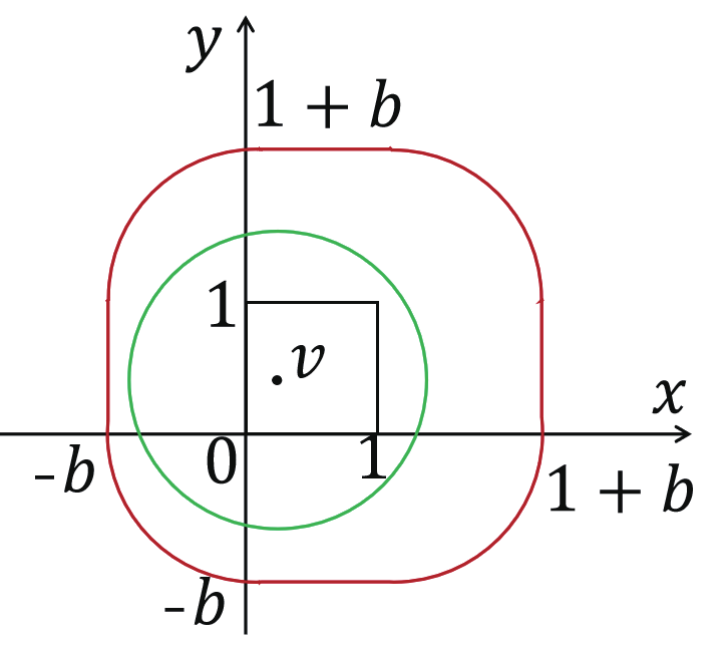}}
	\caption{\small Radon Transform and I/O domain with any real point.}
	\label{fig:Input and output domain}
\end{figure}
\noindent
\textbf{\solutionGeneralTotalName{} (\solutionGeneralName{}).} Based on the input/output domain and randomized reporting function, we propose our \solutionGeneralTotalName{} in Definition~\ref{def:SAM}. 
\begin{definition}(\solutionGeneralTotalName{} (\solutionGeneralName{})).\label{def:SAM}
	A randomized mechanism $\Psi: \mathcal{D}\to\tilde{\mathcal{D}}$ is an instance of Spatial Area Mechanism if for all $\vectorfont{v}\in\mathcal{D}$, there is a 2-dimension wave function $W:\mathbb{R}^2\to[q,e^\epsilon q]$ with constant $q>0$ and $\epsilon>0$ such that the output probability density function $M_{\vectorfont{v}}(\tilde{\vectorfont{v}})=W(\tilde{\vectorfont{v}}-\vectorfont{v})$ satisfies: 
	\begin{enumerate}[label=(\arabic*)]
		\item $W(\vectorfont{z})=q$ for $\parallel \vectorfont{z}\parallel_2>b$ \label{cond:w_1}
		\item $\iint_D W(\vectorfont{z})d\vectorfont{z}=1-(4b+1)q$ for $D=\{\vectorfont{z}| \parallel \vectorfont{z} \parallel_2 \leq b\}$ \label{cond:w_2}
	\end{enumerate}
\end{definition}
\revision{\solutionGeneralName{} is a general mechanism structure. 
	It declares a wave function $W$ with range between $q$ and $e^\epsilon q$.
	Based on the function $W$, it claims two conditions for the areas within and out of $b$ distance. 
	Noted that, in condition~\ref{cond:w_2} (distance within $b$), the distribution function of $W(\vectorfont{z})$ is not defined. 
	The only constrain is the integral in this area keeps $1-(4b+1)q$.}

\begin{theorem}\label{thrm:LDP}
	\solutionGeneralName{} satisfies $\epsilon$-LDP.
\end{theorem}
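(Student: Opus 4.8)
The plan is to reduce $\epsilon$-LDP to a uniform pointwise bound on the output densities and then read that bound directly off the prescribed range of the wave function $W$. Concretely, since \solutionGeneralName{} is specified through the family of densities $\{M_{\vectorfont{v}}(\cdot)\}_{\vectorfont{v}\in\mathcal{D}}$ over the output domain $\tilde{\mathcal{D}}$, for any measurable $S\subseteq\tilde{\mathcal{D}}$ we have $\Pr[\Psi(\vectorfont{v})\in S]=\int_S M_{\vectorfont{v}}(\tilde{\vectorfont{v}})\,d\tilde{\vectorfont{v}}$. Hence it suffices to establish, for every pair $\vectorfont{v}_1,\vectorfont{v}_2\in\mathcal{D}$ and every $\tilde{\vectorfont{v}}\in\tilde{\mathcal{D}}$, the inequality $M_{\vectorfont{v}_1}(\tilde{\vectorfont{v}})\le e^\epsilon M_{\vectorfont{v}_2}(\tilde{\vectorfont{v}})$; after that, Definition~\ref{Def_LDP} follows immediately by integrating both sides over $S$ and invoking monotonicity of the integral.

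For the pointwise bound I would simply unfold Definition~\ref{def:SAM}: $M_{\vectorfont{v}_1}(\tilde{\vectorfont{v}})=W(\tilde{\vectorfont{v}}-\vectorfont{v}_1)$ and $M_{\vectorfont{v}_2}(\tilde{\vectorfont{v}})=W(\tilde{\vectorfont{v}}-\vectorfont{v}_2)$. Since $W$ maps into $[q,e^\epsilon q]$ with $q>0$, we get $W(\tilde{\vectorfont{v}}-\vectorfont{v}_1)\le e^\epsilon q$ and $W(\tilde{\vectorfont{v}}-\vectorfont{v}_2)\ge q>0$, so chaining these gives $M_{\vectorfont{v}_1}(\tilde{\vectorfont{v}})\le e^\epsilon q\le e^\epsilon W(\tilde{\vectorfont{v}}-\vectorfont{v}_2)=e^\epsilon M_{\vectorfont{v}_2}(\tilde{\vectorfont{v}})$. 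Note that this step uses only that the \emph{range} of $W$ is contained in $[q,e^\epsilon q]$: conditions~\ref{cond:w_1} and~\ref{cond:w_2} are irrelevant here, as is the (deliberately unspecified) profile of $W$ inside the disk of radius $b$. I would also remark, to be careful, that $W$ is defined on all of $\mathbb{R}^2$ so that $\tilde{\vectorfont{v}}-\vectorfont{v}_i$ always lies in its domain, and that $M_{\vectorfont{v}_2}>0$ everywhere on $\tilde{\mathcal{D}}$ (again from $W\ge q>0$), ruling out any support-mismatch or division-by-zero obstruction.

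The only genuinely non-trivial book-keeping — which I would present as a preliminary sanity check rather than part of the privacy argument itself — is verifying that each $M_{\vectorfont{v}}$ is an honest probability density, i.e.\ $\int_{\tilde{\mathcal{D}}}M_{\vectorfont{v}}(\tilde{\vectorfont{v}})\,d\tilde{\vectorfont{v}}=1$. One splits $\tilde{\mathcal{D}}$ into the disk $B_{\vectorfont{v}}=\{\tilde{\vectorfont{v}}:\|\tilde{\vectorfont{v}}-\vectorfont{v}\|_2\le b\}$ (which lies inside $\tilde{\mathcal{D}}$ because $\tilde{\mathcal{D}}$ is the $b$-dilation of the unit square) and its complement within $\tilde{\mathcal{D}}$: on $B_{\vectorfont{v}}$ the substitution $\vectorfont{z}=\tilde{\vectorfont{v}}-\vectorfont{v}$ together with condition~\ref{cond:w_2} contributes mass $1-(4b+1)q$, while on $\tilde{\mathcal{D}}\setminus B_{\vectorfont{v}}$ condition~\ref{cond:w_1} makes the density the constant $q$, contributing $q\cdot(\mathrm{area}(\tilde{\mathcal{D}})-\pi b^2)=q(1+4b)$ since the rounded square $\tilde{\mathcal{D}}$ has area $1+4b+\pi b^2$; the two pieces sum to $1$. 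This is where the geometry of the output domain and both conditions on $W$ actually enter, but it is elementary area computation, not a real obstacle. In short, there is no hard step: the substance of Theorem~\ref{thrm:LDP} is the observation that, for a density-based mechanism, $\epsilon$-LDP is implied by the uniform density-ratio bound, which here is handed to us by the range $[q,e^\epsilon q]$ of $W$.
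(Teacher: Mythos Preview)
Your proposal is correct and follows essentially the same approach as the paper: both arguments bound the output probabilities via the pointwise density range $W\in[q,e^\epsilon q]$, so that $\frac{\Pr[\Psi(\vectorfont{v}_1)\in S]}{\Pr[\Psi(\vectorfont{v}_2)\in S]}=\frac{\int_S M_{\vectorfont{v}_1}}{\int_S M_{\vectorfont{v}_2}}\le\frac{\int_S e^\epsilon q}{\int_S q}=e^\epsilon$. Your additional sanity check that each $M_{\vectorfont{v}}$ integrates to $1$ (via conditions~\ref{cond:w_1} and~\ref{cond:w_2} and the area of the rounded square) is not in the paper's proof but is a welcome clarification of where those conditions are actually used.
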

\begin{proof}
	For any two possible input values $\vectorfont{v}_1,\vectorfont{v}_2\in\mathcal{D}$ and any set of possible outputs $O\subseteq \tilde{\mathcal{D}}$ of \solutionGeneralName{}, we have
	\begin{equation}\label{AM1_Proof}
		{\scriptsize
			\begin{aligned}
				\frac{\textrm{Pr}[\solutionGeneralName{}(\vectorfont{v}_1)\in O]}{\textrm{Pr}[\solutionGeneralName{}(\vectorfont{v}_2)\in O]} &= \frac{\iint_{\tilde{\vectorfont{v}}\in O}M_{v_1}(\tilde{\vectorfont{v}})d\tilde{\vectorfont{v}}}{\iint_{\tilde{\vectorfont{v}}\in O}M_{v_2}(\tilde{\vectorfont{v}})d\tilde{\vectorfont{v}}} \\
				&\leq \frac{\iint_{\tilde{\vectorfont{v}}\in O}e^\epsilon q d\tilde{\vectorfont{v}}}{\iint_{\tilde{\vectorfont{v}}\in O}q d\tilde{\vectorfont{v}}} = e^\epsilon
			\end{aligned}
		}
	\end{equation}\vspace{-2ex}
\end{proof}

For any $\vectorfont{v}\in{\mathcal{D}}$ and $\tilde{\vectorfont{v}}\in\tilde{\mathcal{D}}$, 
it is reasonable to assume that the reporting probability $M_{\vectorfont{v}}(\tilde{\vectorfont{v}})$ decreases as the $dis(\vectorfont{v},\tilde{\vectorfont{v}})$ increases (similar to Reference~\cite{DBLP:conf/ccs/AndresBCP13}).
To model this relationship, we propose our \solutionDTotalName{} in Definition~\ref{def:HUEM}.

\begin{definition} (\solutionDTotalName{}, \solutionD{}).\label{def:HUEM}
	A \solutionGeneralName{} is called a \solutionDTotalName{} if the $W$ function satisfies: 
	\begin{equation}\label{direct_solution_equation}
		{\scriptsize
			\begin{aligned}
				W(\vectorfont{z})=
				\begin{cases}
					qe^{(1-\frac{\|\vectorfont{z}\|_2}{b})\epsilon},& \textrm{if}\;\;\|\vectorfont{z}\|_2\leq b   \\
					q,& otherwise
				\end{cases}
			\end{aligned}
		}
	\end{equation}
	where $q=\frac{\epsilon^2}{2\pi(e^\epsilon-1-\epsilon)b^2+4\epsilon^2 b+\epsilon^2}$.
\end{definition}
\solutionD{} is a type of \solutionGeneralName{} by adding constrain that $W$'s value increases exponentially with distance $\|\vectorfont{z}\|_2$ decreases.
From Equation~\eqref{direct_solution_equation}, we can see when $\vectorfont{z}=\vectorfont{0}$, $W(\vectorfont{z})=qe^\epsilon$. 
This means $W$ achieves its maximum value (i.e., $qe^\epsilon$) when the output point is exactly the same as the input one.

Let $\mathcal{C}=\{\vectorfont{z}|\|\vectorfont{z}\|_2\leq b\}$ and $r=\|\vectorfont{z}\|_2$.
According to $\iint_\mathcal{C} qe^{(1-\frac{r}{b})\epsilon}rdrd\theta=1-(4b+1)q$, we can obtain $q=\frac{\epsilon^2}{2\pi(e^\epsilon-1-\epsilon)b^2+4\epsilon^2 b+\epsilon^2}$.
As $\epsilon\to 0$, $q\to \frac{1}{\pi b^2+4b+1}$.
In this case, \solutionD{} degenerates into uniform random mechanism. It reports any value uniformly and randomly, without any utility guarantee.
As $\epsilon\to +\infty$, $q\to 0$, which means \solutionD{} reports the truthful value without any privacy protection.

We take \solutionD{} as one of our basic mechanisms and compare it with others in the experiment.

\section{The \solutionBTotalName{}}\label{BestMethod}
Although \solutionD{} achieves $\epsilon$-LDP, it requires a strong assumption that the probability within radius $b$ decreases with distance (similar to Geo-I~\cite{DBLP:conf/ccs/AndresBCP13}).
In this section, we remove this assumption and study the best probability distribution mechanism of \solutionGeneralName{} to maximize the accuracy of distribution estimation.
We call this best mechanism as \solutionBTotalName{} (\solutionB{}).
Besides, we give the method for choosing the high probability radius $b$ to further improve the estimation accuracy.

\revision{\subsection{Sliced Wasserstein Distance}\label{subsection:RT_SW}}
\revision{In order to get the best \solutionGeneralName{}, we need to use the closed form of $\mathcal{L}_2^2$-Wasserstein distance $W_2^2(D,\tilde{D})$ to deduce the relationship between the wave function $W$ and $W_2^2(D,\tilde{D})$.}
However, except for the 2-Dim normal distribution, there is no closed form for Wasserstein distance of 2-Dim data distribution when the data dimension $k>1$~\cite{panaretos2019statistical}, making the distribution analysis challenging.

\revision{To solve the above issue, we use sliced Wasserstein distance~\cite{DBLP:conf/nips/KolouriNSBR19} instead of Wasserstein distance.}
Sliced Wasserstein distance is a variant of Wasserstein distance. 
It achieves the measurement effect of Wasserstein distance in high-dimension while simplifying the calculation process.
The definition of sliced Wasserstein is based on Radon Transforms~\cite{helgason1980radon}.
Next we first introduce Radon transform in Definition~\ref{def_RT} and then give the definition of sliced Wasserstein distance in Definition~\ref{def:SWD}.

\begin{figure}[t!]\centering
	\subfigure[][{\small Radon transform}]{
		\scalebox{0.25}[0.25]{\includegraphics{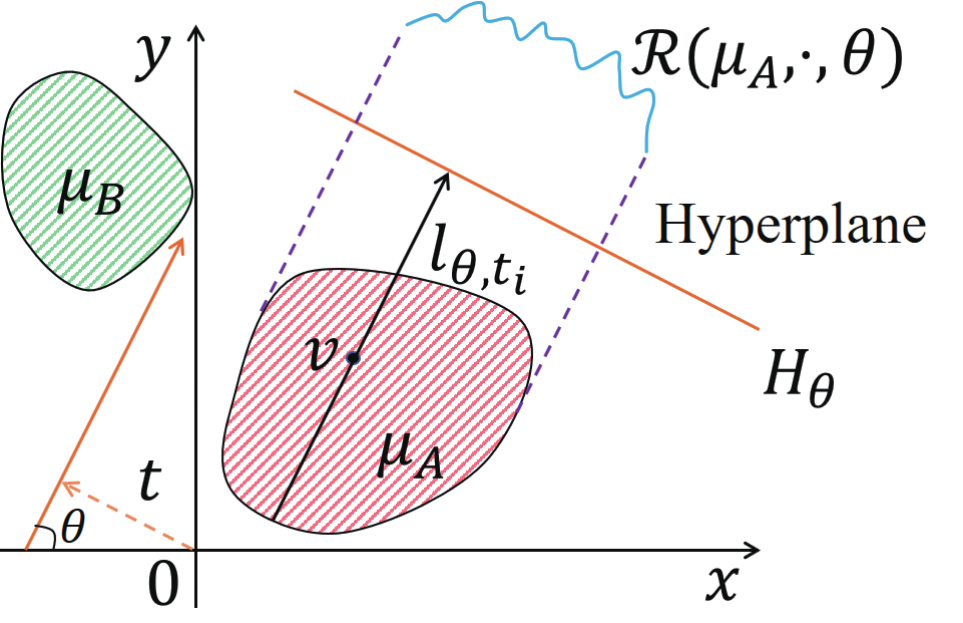}
			\label{subfig:Radon Transform}}}
	\subfigure[][{\small Wasserstein distance transform}]{
		\scalebox{0.25}[0.25]{\includegraphics{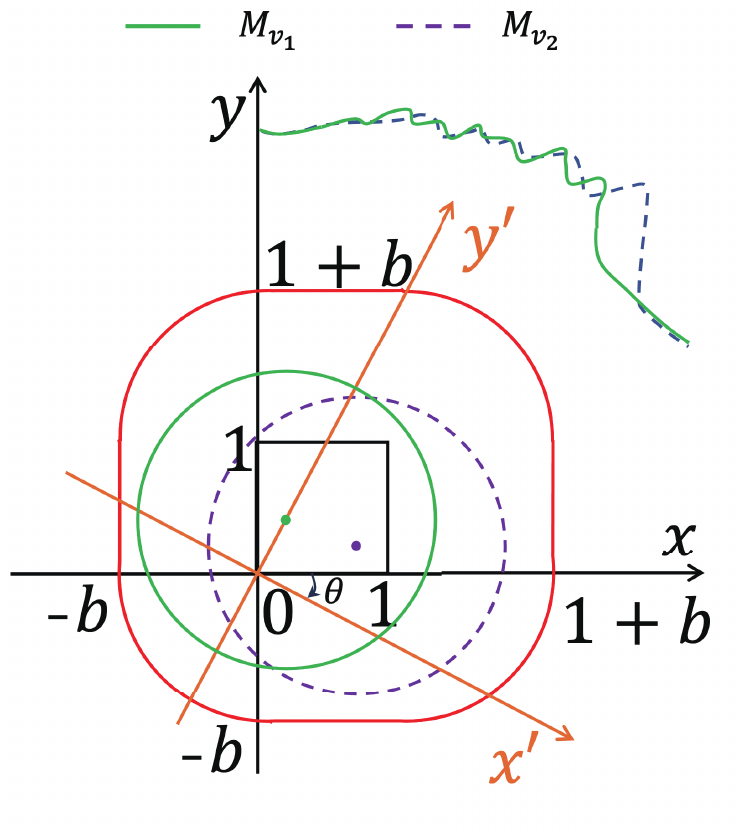}}
		\label{subfig:Construction_B}}
	\caption{\small Radon transform and sliced Wasserstein distance transform.}
	\label{fig:problem_norm}
\end{figure}

\begin{definition}(Radon Transform~\cite{helgason1980radon}).\label{def_RT}
	Let $\mathcal{L}^1(\mathbb{R}^k)=\{F:\mathbb{R}^k\to\mathbb{R}/\int_{\mathbb{R}^k}|F(x)|dx<\infty\}$ and $\mathbb{S}^{k-1}\subset\mathbb{R}^k$ the $k$-dimensional unit sphere.
	Let $\delta(\cdot)$ be the one-dimensional Dirac delta function.
	The standard Radon transform, defined as $\mathcal{R}$, maps a function $F\in \mathcal{L}^1(\mathbb{R}^k)$ to the infinite set of its integrals over the hyperplane of $\mathbb{R}^k$ and is defined as
	{\scriptsize $$\mathcal{R}(F,t,\theta)=\int_{\mathbb{R}^k}F(x)\delta(t- x\cdot\theta)dx,$$}
	where $(t,\theta)\in \mathbb{R}\times\mathbb{S}^{k-1}$.
\end{definition}

\begin{definition}(Sliced Wasserstein Distance~\cite{DBLP:conf/nips/KolouriNSBR19}).\label{def:SWD}
	Let $\mathcal{P}_p(\mathbb{R}^k)$ be the space of Borel probability measures on $\mathbb{R}^k$ with finite $p$-th moments.
	Given $\mu_A, \mu_B\in \mathcal{P}_p(\mathbb{R}^k)$, we define the $\mathcal{L}_k^p$-sliced Wasserstein distance as
	{\scriptsize $$SW_k^p(\mu_A,\mu_B)=\int_{\mathbb{S}^{k-1}}W_1^p(\mathcal{R}(\mu_A,\cdot,\theta),\mathcal{R}(\mu_B,\cdot,\theta))d\theta$$
	}
\end{definition}
The Radon transform maps a function $F(\mathbb{R}^k)$ to the infinite set of its integrals over hyperplanes in $\mathbb{R}^k$. The sliced Wasserstein distance in the domain $\mathbb{R}^k$ is defined as the integrals of the Wasserstein distance (between two distributions transformed by the Radon transform) over the unit sphere, denoted by $\mathbb{S}^{k-1}$, in $\mathbb{R}^k$. We give an example for Radon transforms and the sliced Wasserstein distance as follows.

Suppose the functions $\mu_A$ and $\mu_B$ are shown in Figure~\ref{subfig:Radon Transform} and the dimension $k=2$.
\revision{As for Radon transform, if we fix $\theta$ and set $t=t_i$, we can get the position and direction of integration (described as line $l_{\theta,t_i}$).}
When we want to get the Radon transform of $\mu_A$,
we can integrate all points in $\mu_A\cap l_{\theta,t_i}$ over $l_{\theta,t_i}$.
When we fix $\theta$ and alter $t$ from $-\infty$ to $+\infty$, we obtain the Radon transforms $\mathcal{R}(\mu_A,\cdot,\theta)$ for $\mu_A$, shown as the blue curve in Figure~\ref{subfig:Radon Transform}.
By altering $\theta$ from $0$ to $2\pi$ and $t$ from $-\infty$ to $+\infty$, we can compute the $\mathcal{L}_2^1$-sliced Wasserstein distance between $\mu_A$ and $\mu_B$ as $SW_2^1(\mu_A,\mu_B)=\int_{0}^{2\pi}W_1^1(\mathcal{R}(\mu_A,\cdot,\theta),\mathcal{R}(\mu_B,\cdot,\theta))d\theta$.

\subsection{Overview of \solutionB{}}

According to Reference~\cite{DBLP:conf/sigmod/Li0LLS20}, we can get the optimal mechanism by maximizing the $\mathcal{L}_2^2$ Wasserstein distance between $M_{\vectorfont{v}_1}$ and $M_{\vectorfont{v}_2}$ for any point pair $\vectorfont{v}_1$ and $\vectorfont{v}_2$. 
However, in the 2-Dim case, except for 2-Dim normal distributions, obtaining the closed-form solution of the $\mathcal{L}_2^2$ Wasserstein distance is difficult, which make the optimization objective hard.
To solve this problem, we substitute the $\mathcal{L}_2^1$-sliced Wasserstein distance for the $\mathcal{L}_2^2$-Wasserstein distance.
Thus, our optimization objective is transformed as \textit{Maximizing the $\mathcal{L}_2^1$ sliced Wasserstein between $M_{\vectorfont{v}_1}$ and $M_{\vectorfont{v}_2}$ for any different two points $\vectorfont{v}_1,\vectorfont{v}_2\in\mathcal{D}$.}
Figure~\ref{subfig:Construction_B} shows an example of this transform with a fix $\theta$.

\revision{Next, we describe \solutionBTotalName{} in Definition~\ref{solutionA_def}, and then prove that it is the best estimation among all kinds of \solutionGeneralName{}.
	\begin{definition}(\solutionBTotalName{}, \solutionB{}). \label{solutionA_def}
		A SAM is called a \solutionBTotalName{} if the $W$ function satisfies:
		\begin{equation}
			{\scriptsize
				\begin{aligned}
					W(\vectorfont{z})=
					\begin{cases}
						p,& \textrm{if}\;\;\|\vectorfont{z}\|_2\leq b   \\
						q,& otherwise
					\end{cases}
				\end{aligned}
			}
		\end{equation}
		where $p=\frac{e^\epsilon}{\pi b^2e^{\epsilon}+4b+1}$ and $q=\frac{1}{\pi b^2e^{\epsilon}+4b+1}$.
	\end{definition}
}

\revision{\solutionB{} is also a type of \solutionGeneralName{} with $W$'s value being constant in condition~\ref{cond:w_2}}.
In order to prove \solutionB{} is optimal, we need to get the partial derivative of the sliced Wasserstein in our optimization objective.
We give the partial derivative in Theorem~\ref{thrm:area} as follows.

\begin{theorem}\label{thrm:area}
	Given an angle $\theta\in[-\frac{\pi}{4},0]$ as the direction angle of projection line $l_{x'}$, $\vectorfont{v}_1, \vectorfont{v}_2\in \mathcal{D}$ as inputs to \solutionGeneralName{}, where $\Delta=(\vectorfont{v}_2-\vectorfont{v}_1)\cdot [\cos{\theta},\sin{\theta}]^T>0$, the partial derivative of sliced Wasserstein distance between the output distributions of \solutionGeneralName{} with respect to $\theta$ is $\Delta(1-(\pi b^2+4b+1)q)$.
\end{theorem}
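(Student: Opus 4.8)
The plan is to collapse this two-dimensional quantity to a one-dimensional one using the slice-shift behaviour of the Radon transform, and then to evaluate the resulting $1$-Dim Wasserstein distance with the classical CDF identity. First I would split the output density into a part that does not move with the input and a compactly supported ``bump'' that does. Since $\vectorfont{v}_i\in\mathcal{D}$ gives $DS_b(\vectorfont{v}_i)\subseteq\tilde{\mathcal{D}}$, condition~\ref{cond:w_1} of \solutionGeneralName{} lets me write, for $i\in\{1,2\}$,
\begin{equation*}
	M_{\vectorfont{v}_i}(\tilde{\vectorfont{v}})=q\,\mathbf{1}_{\tilde{\mathcal{D}}}(\tilde{\vectorfont{v}})+h(\tilde{\vectorfont{v}}-\vectorfont{v}_i),\qquad h(\vectorfont{z})=\bigl(W(\vectorfont{z})-q\bigr)\,\mathbf{1}_{\mathcal{C}}(\vectorfont{z}),
\end{equation*}
where $\mathcal{C}=\{\vectorfont{z}:\|\vectorfont{z}\|_2\le b\}$. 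The bump $h$ is supported in $\mathcal{C}$, and condition~\ref{cond:w_2} together with $\mathrm{Area}(\mathcal{C})=\pi b^2$ gives $\iint h=\bigl(1-(4b+1)q\bigr)-\pi b^2 q=1-(\pi b^2+4b+1)q$; crucially, the baseline term $q\,\mathbf{1}_{\tilde{\mathcal{D}}}$ is the same for $i=1$ and $i=2$.

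Next I would push this through the Radon transform. By linearity and the translation rule $\mathcal{R}(h(\cdot-\vectorfont{v}_i),t,\theta)=\mathcal{R}(h,\,t-\vectorfont{v}_i\cdot[\cos\theta,\sin\theta]^T,\,\theta)$, the common baseline cancels in the difference, so with $g:=\mathcal{R}(h,\cdot,\theta)$ (the $1$-Dim projection of the bump) and $a_i:=\vectorfont{v}_i\cdot[\cos\theta,\sin\theta]^T$ (hence $a_2-a_1=\Delta$) we get
\begin{equation*}
	\mathcal{R}(M_{\vectorfont{v}_1},t,\theta)-\mathcal{R}(M_{\vectorfont{v}_2},t,\theta)=g(t-a_1)-g(t-a_2).
\end{equation*}
Letting $G$ denote an antiderivative of $g$, the CDFs of $\mathcal{R}(M_{\vectorfont{v}_1},\cdot,\theta)$ and $\mathcal{R}(M_{\vectorfont{v}_2},\cdot,\theta)$ differ exactly by $G(t-a_1)-G(t-a_2)$, so the standard identity $W_1^1(\mu,\nu)=\int_{\mathbb{R}}|F_\mu(t)-F_\nu(t)|\,dt$ turns the $\theta$-integrand of the sliced Wasserstein distance (what the statement calls its partial derivative in $\theta$) into $\int_{\mathbb{R}}\bigl|G(t-a_1)-G(t-a_2)\bigr|\,dt$.

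The decisive point is that the range restriction $W:\mathbb{R}^2\to[q,e^\epsilon q]$ forces $W\ge q$, hence $h\ge0$, hence $g\ge0$, hence $G$ is nondecreasing; combined with the hypothesis $\Delta=a_2-a_1>0$ (which is exactly what taking $\theta\in[-\tfrac{\pi}{4},0]$ arranges, namely that $\vectorfont{v}_2$ projects ahead of $\vectorfont{v}_1$) the absolute value disappears and the integrand equals $\int_{\mathbb{R}}\bigl(G(t-a_1)-G(t-a_2)\bigr)\,dt=\int_{\mathbb{R}}\int_{t-a_2}^{t-a_1}g(s)\,ds\,dt$. A Fubini swap rewrites the inner range $s\in(t-a_2,t-a_1)$ as $t\in(s+a_1,s+a_2)$, an interval of length $\Delta$, so this equals $\Delta\int_{\mathbb{R}}g(s)\,ds=\Delta\iint_{\mathcal{C}}\bigl(W-q\bigr)=\Delta\bigl(1-(\pi b^2+4b+1)q\bigr)$, the claimed value. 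I expect the only real work to be the bookkeeping that makes the initial decomposition exact on the common output domain $\tilde{\mathcal{D}}$ (it rests on $DS_b(\vectorfont{v}_i)\subseteq\tilde{\mathcal{D}}$ and condition~\ref{cond:w_1}), together with the recognition that the constraint $W\ge q$ built into \solutionGeneralName{} is precisely what makes the $1$-Dim Wasserstein integrand sign-definite; once those are in place, the rest is the elementary Fubini step above.
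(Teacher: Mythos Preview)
Your argument is correct and follows essentially the same route as the paper: both decompose $M_{\vectorfont{v}_i}$ into the common baseline $q\,\mathbf{1}_{\tilde{\mathcal{D}}}$ plus the translated bump $W-q$ supported on the disk, pass to the projected CDFs, and observe that the baseline cancels so the $W_1^1$ integrand is the difference of two shifts of the bump's CDF. Your write-up is in fact more explicit than the paper's on the one point that matters---you spell out that $W\ge q$ forces the bump to be nonnegative, hence its projected CDF is monotone, which is what lets you drop the absolute value; the paper jumps directly from $\int|P(M_{\vectorfont{v}_1},\cdot)-P(M_{\vectorfont{v}_2},\cdot)|$ to the answer via its Equation~\eqref{AM1_Proof_CCMLTV} without stating this. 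One small remark: the restriction $\theta\in[-\tfrac{\pi}{4},0]$ is not what ``arranges'' $\Delta>0$ (that is a separate hypothesis); in the paper it governs the explicit integration limits for the projected output domain, but your Fubini computation over $\mathbb{R}$ makes those limits irrelevant anyway.
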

\begin{proof}
	Let $\vectorfont{u}=(\cos{\theta},\sin{\theta})$.
	Given two different inputs $\vectorfont{v}_1, \vectorfont{v}_2\in \mathcal{D}$, where $(\vectorfont{v}_2-\vectorfont{v}_1)\cdot \vectorfont{u}^T=\Delta>0$.
	Let $M_{\vectorfont{v}_1}$ and $M_{\vectorfont{v}_2}$ are corresponding output distributions.
	We define a difference function as $DIFF(\vectorfont{z})$ in Equation~\eqref{AM1_Proof_DIFF}:
	\begin{equation}\label{AM1_Proof_DIFF}
		{\scriptsize
			\begin{aligned}
				DIFF(\vectorfont{z}) =
				\begin{cases}
					0,& \textrm{if}\;\;\vectorfont{z}\cdot \vectorfont{u}^T\leq -b\cos(\theta)   \\
					1-(\pi b^2+4b+1)q,& \textrm{if}\;\;\vectorfont{z}\cdot \vectorfont{u}^T\geq b\cos(\theta) \\
					\iint_{\mathcal{D}'}((W(\vectorfont{z})-q))d\vectorfont{z},& otherwise
				\end{cases}
			\end{aligned}
		}
	\end{equation}
	Let the output domain be $\mathcal{\tilde{D}}$, $\tilde{Y}'$ be the projection domain of $\mathcal{\tilde{D}}$ on axis $y'$.
	Then we can write the cumulative function on the $l_{x'}$ as 
	\begin{equation}\label{AM1_Proof_CMLTV}
		{\scriptsize
			\begin{aligned}
				P(M_{\vectorfont{v}}, \tilde{\vectorfont{v}}) &= h(\theta,\tilde{\vectorfont{v}}) + DIFF(\tilde{\vectorfont{v}}-\vectorfont{v})
			\end{aligned}
		}
	\end{equation}
	where {\scriptsize $h(\theta,\tilde{\vectorfont{v}})=\int_{\sin{\theta}-b\cos{\theta}}^{\tilde{\vectorfont{v}}\cdot\vectorfont{u}^T}\left(\int_{\tilde{Y}'}q dy'\right)dx'$
	}.
	Therefore, we have
	
	\begin{equation}\label{AM1_Proof_CCMLTV}
		{\scriptsize
			\begin{aligned}
				\int_{\sin{\theta}-b\cos{\theta}}^{(1+b)\cos{\theta}}P(M_{\vectorfont{v}}, \tilde{\vectorfont{v}})d\tilde{\vectorfont{v}} &= \int_{\sin{\theta}-b\cos{\theta}}^{(1+b)\cos{\theta}} h(\theta,\tilde{\vectorfont{v}})d\tilde{\vectorfont{v}} \\
				&+ \int_{\sin{\theta-b\cos{\theta}}}^{(1+b)\cos{\theta}} DIFF(\tilde{\vectorfont{v}}-\vectorfont{v})d\tilde{\vectorfont{v}}\\
				&= H(\theta) + \int_{-b\cos{\theta}}^{b\cos{\theta}} DIFF(\vectorfont{z})d\vectorfont{z}\\
				&+ (1-(\pi b^2+4b+1)q)(\cos{\theta}-\vectorfont{v}\cdot\vectorfont{u}^T)\\
			\end{aligned}
		}
	\end{equation}
	where $H(\theta)=\int_{\sin{\theta}-b\cos{\theta}}^{(1+b)\cos{\theta}} h(\theta,\tilde{\vectorfont{v}})d\tilde{\vectorfont{v}}$.
	According to the definition of sliced Wasserstein distance, we have
	\begin{equation}\label{AM1_Proof_SWD}
		{\scriptsize
			\begin{aligned}
				\frac{\partial SW_2^1(M_{\vectorfont{v}_1},M_{\vectorfont{v}_2})}{\partial\theta} &= \int_{\sin{\theta}-b\cos{\theta}}^{(1+b)\cos{\theta}}|P(M_{\vectorfont{v}_1}, \tilde{\vectorfont{v}})-P(M_{\vectorfont{v}_2}, \tilde{\vectorfont{v}})|d\tilde{\vectorfont{v}} \\
				&= (1-(\pi b^2+4b+1)q)\cdot\Delta
			\end{aligned}
		}
	\end{equation}
\end{proof}

According to Equation~\eqref{AM1_Proof_SWD}, to maximize $SW_2^1(M_{\vectorfont{v}_1},M_{\vectorfont{v}_2})$, we  need to minimize $q$.
Thus, we have Theorem~\ref{thrm:BestMechanism} as follows.

\begin{theorem}\label{thrm:BestMechanism}
	For any fixed value $b$ and $\epsilon$, the minimum $q$ for 2-norm mechanism is $q=\frac{1}{\pi b^2e^\epsilon+4b+1}$. This minimum can be achieved if and only if the mechanism is \solutionB{}.
\end{theorem}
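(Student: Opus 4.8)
The plan is to reduce the claim to an elementary feasibility question about the wave function $W$ restricted to the disk $D=\{\vectorfont{z}:\|\vectorfont{z}\|_2\le b\}$. By Definition~\ref{def:SAM}, an instance of \solutionGeneralName{} is completely determined by a function $W:\mathbb{R}^2\to[q,e^\epsilon q]$ that equals $q$ outside $D$ and satisfies the normalization $\iint_D W(\vectorfont{z})\,d\vectorfont{z} = 1-(4b+1)q$; here $4b+1$ is exactly the area of the output domain $\tilde{\mathcal{D}}$ lying outside $D$ (the Minkowski sum of the unit square with a radius-$b$ disk has area $1+4b+\pi b^2$ and $|D|=\pi b^2$), so this normalization is precisely what makes each $M_{\vectorfont{v}}$ integrate to $1$. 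Hence the whole question becomes: for which $q>0$ does there exist a measurable function on $D$ with values in $[q,e^\epsilon q]$ whose integral over $D$ equals $1-(4b+1)q$?

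First I would observe that, since $|D|=\pi b^2$, any function on $D$ valued in $[q,e^\epsilon q]$ has integral in $[\pi b^2 q,\ \pi b^2 e^\epsilon q]$, and every value in this closed interval is realized (for instance by a suitable constant function). So a valid \solutionGeneralName{} with parameter $q$ exists if and only if $\pi b^2 q \le 1-(4b+1)q \le \pi b^2 e^\epsilon q$. The right-hand inequality rearranges to $q\ge \tfrac{1}{\pi b^2 e^\epsilon+4b+1}$, which is the asserted bound; the left-hand inequality rearranges to $q\le \tfrac{1}{\pi b^2+4b+1}$, and since $\pi b^2 e^\epsilon+4b+1\ge \pi b^2+4b+1$ the candidate value $\tfrac{1}{\pi b^2 e^\epsilon+4b+1}$ satisfies it as well. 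Therefore the minimum feasible $q$ is exactly $\tfrac{1}{\pi b^2 e^\epsilon+4b+1}$.

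Next I would pin down the minimizer. At $q=\tfrac{1}{\pi b^2 e^\epsilon+4b+1}$ the normalization gives $\iint_D W\,d\vectorfont{z} = 1-(4b+1)q = \tfrac{\pi b^2 e^\epsilon}{\pi b^2 e^\epsilon+4b+1} = \pi b^2\,(e^\epsilon q)$, hence $\iint_D\bigl(e^\epsilon q - W(\vectorfont{z})\bigr)\,d\vectorfont{z}=0$ with a nonnegative integrand, so $W\equiv e^\epsilon q$ almost everywhere on $D$. Writing $p:=e^\epsilon q=\tfrac{e^\epsilon}{\pi b^2 e^\epsilon+4b+1}$, this is precisely the wave function of \solutionB{} in Definition~\ref{solutionA_def}, so any \solutionGeneralName{} achieving the minimum is a \solutionB{}. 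Conversely, \solutionB{} has this $q$ by definition and is a genuine \solutionGeneralName{}: condition~\ref{cond:w_1} is immediate, and condition~\ref{cond:w_2} holds since $\iint_D p\,d\vectorfont{z}=\pi b^2 p=\pi b^2 e^\epsilon q=1-(4b+1)q$.

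I do not expect a real obstacle here: beyond the bookkeeping, the only genuine step is "equal integrals together with a one-sided pointwise bound force a.e.\ equality", a standard measure-theoretic fact, and the one thing to get right is the area accounting for $D$ and for the rounded-square output domain $\tilde{\mathcal{D}}$ so that the constant $4b+1$ matches. Alternatively, the attainment direction could be routed through Theorem~\ref{thrm:area}: since $\partial SW_2^1/\partial\theta = \Delta\bigl(1-(\pi b^2+4b+1)q\bigr)$ is strictly decreasing in $q$, maximizing the sliced-Wasserstein objective is exactly the problem of choosing $q$ as small as the \solutionGeneralName{} constraints permit, which by the above is $\tfrac{1}{\pi b^2 e^\epsilon+4b+1}$, attained uniquely by \solutionB{}.
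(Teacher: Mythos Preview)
Your proposal is correct and follows essentially the same approach as the paper: bound $\iint_D W \le \pi b^2 e^\epsilon q$ and combine with the normalization $1=(4b+1)q+\iint_D W$ to obtain $q\ge \frac{1}{\pi b^2 e^\epsilon+4b+1}$. You are in fact more thorough than the paper's proof, which stops at the lower bound and does not spell out the ``only if'' direction; your a.e.-equality argument and explicit verification that \solutionB{} is a valid \solutionGeneralName{} fill that gap.
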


\begin{proof}
	For any point $\vectorfont{v}\in \mathcal{D}$, let $\mathcal{\tilde{D}}_{\vectorfont{v}}=\{\vectorfont{\tilde{v}}|\parallel \vectorfont{\tilde{v}} - \vectorfont{v}\parallel_1\leq b\}$. Then the area of $\mathcal{\tilde{D}}_{\vectorfont{v}}$ is $\pi b^2$. Therefore, we have $\iint_{\mathcal{\tilde{D}}_{\vectorfont{v}}}W(\vectorfont{\tilde{v}} - \vectorfont{v})\leq \pi b^2\cdot e^\epsilon q$.
	And we have
	\begin{equation}\label{AM1_Proof_LOWERBOUND}
		{\scriptsize
			\begin{aligned}
				1   &= (4b+1)q+\iint_{\mathcal{\tilde{D}}_{\vectorfont{v}}}W(\vectorfont{\tilde{v}} - \vectorfont{v}) \\
				&\leq (4b+1)q + \pi b^2e^\epsilon q \\
				&= (\pi b^2e^\epsilon+4b+1)q. \\
			\end{aligned}
		}
	\end{equation}
	Thus, we have $q \geq \frac{1}{\pi b^2e^\epsilon+4b+1}$.
\end{proof}

\subsection{Choosing Radius b}\label{subsection:b_analysis}

In our \solutionB{}, a value within a distance of $b$ from the true value is reported with a probability that is $e^\epsilon$ times as large as the one outside of $b$. The optimal choice of $b$ depends on the privacy parameter $\epsilon$.

Intuitively, as $\epsilon$ approaches infinity, $b$ needs to approach $0$ to fully recover the input distribution. Additionally, when the probability density of the private distribution is concentrated at one point, a smaller $b$ is suitable, whereas when the probability has a more evenly distributed density, a larger $b$ is appropriate.
\revision{However, we do not know the distribution of the private points.
	There is a silver lining that we can choose a $b$ value independent of the distribution while still performing reasonably well over different distributions.
	Similar to the case of one dimension~\cite{DBLP:conf/sigmod/Li0LLS20},} we choose $b$ by maximizing the upper bound of mutual information between the input and output of our \solutionB{} in the 2-Dim case.

\noindent
\revision{
	\textbf{Unit Side Length Input.}}
We consider the case where the input domain is a unit square. Let $\vectorfont{V}$ and $\tilde{\vectorfont{V}}$ be the random variables representing the input and output in our \solutionB{}. We can express the mutual information as the difference between the differential entropy of $\tilde{\vectorfont{V}}$ and the conditional differential entropy of $\vectorfont{V}$ and $\tilde{\vectorfont{V}}$:
{\scriptsize $$I(\vectorfont{V},\tilde{\vectorfont{V}})=h(\tilde{\vectorfont{V}})-h(\tilde{\vectorfont{V}}|\vectorfont{V}).$$
}
$h(\tilde{\vectorfont{V}})$ is maximized when $\tilde{\vectorfont{V}}$ is uniformly distributed on $\tilde{\vectorfont{D}}$.
For \solutionB{}, we have
\begin{equation}\label{Chosen_b2}
	{\scriptsize
		\begin{aligned}
			I(\vectorfont{V},\tilde{\vectorfont{V}}) &\leq h(\tilde{\vectorfont{U}})-h(\tilde{\vectorfont{V}}|\vectorfont{V})\\
			&= \log{(\pi b^2+4b+1)} + (\pi b^2p\log{p}+(4b+1)q\log{q})\\
			&= \log{(\frac{\pi b^2+4b+1}{\pi b^2e^\epsilon+4b+1})}+\epsilon\log{e}-\frac{(4b+1)\epsilon\log{e}}{\pi b^2e^\epsilon+4b+1}.\\
		\end{aligned}
	}
\end{equation}
We denote the expression on the right side in Equation~\eqref{Chosen_b2} as $g(b)$.
Then we have:
\begin{equation}\label{Partial_Chosen_b2}
	{\scriptsize
		\begin{aligned}
			\frac{dg(b)}{db} &= \frac{2\pi b(2b+1)(-\pi e^\epsilon m_1 b^2+4m_2 b + m_2)}{(\pi b^2+4b+1)(\pi b^2e^\epsilon+4b+1)^2\ln{2}}\\
		\end{aligned}
	}
\end{equation}
where $m_1=e^\epsilon-1-\epsilon$ and $m_2=1-e^\epsilon+\epsilon e^\epsilon$.
Because $\epsilon$ and $b$ are both positive, when $b=\frac{2m_2+\sqrt{4m_2^2+\pi e^\epsilon m_1 m_2}}{\pi e^\epsilon m_1}$, it achieves maximum.
We can see that when $\epsilon\to 0$, $b\to \frac{2+\sqrt{4+\pi}}{\pi}$, and when $\epsilon\to +\infty$, $b\to 0$.

\noindent
\revision{
	\textbf{General Side Length Input.}}
When the input domain is a square with length $L$, the mutual information of $\tilde{\vectorfont{V}}$ and $\vectorfont{V}$ can be express as follow:
\begin{equation}\label{Chosen_b_d}
	{\scriptsize
		\begin{aligned}
			I(\vectorfont{V},\tilde{\vectorfont{V}}) &\leq \log{(\pi b^2+4Lb+L^2)} + (\pi b^2p\log{p}+(4Lb+L^2)q\log{q})\\
			&= \log{(\frac{\pi b^2+4Lb+L^2}{\pi b^2e^\epsilon+4Lb+L^2})}+\frac{\pi b^2e^\epsilon\epsilon\log{e}}{\pi b^2e^\epsilon+4Lb+L^2}.\\
		\end{aligned}
	}
\end{equation}
And we have
\begin{equation}\label{Partial_Chosen_b2_general}
	{\scriptsize
		\begin{aligned}
			\frac{dg(b)}{db} &= \frac{2\pi Lb(2b+L)(-\pi e^\epsilon m_1 b^2+4dm_2 b + L^2m_2)}{(\pi b^2+4Lb+L^2)(\pi b^2e^\epsilon+4Lb+L^2)^2\ln{2}}.\\
		\end{aligned}
	}
\end{equation}
Let $m_1=e^\epsilon-1-\epsilon$ and $m_2=1-e^\epsilon+\epsilon e^\epsilon$.
Let $\frac{dg(b)}{db}=0$, we can get the best $b=\frac{2m_2+\sqrt{4m_2^2+\pi e^\epsilon m_1 m_2}}{\pi e^\epsilon m_1}\cdot L$.

\section{Bucketizing and Post-processing}\label{Implement}
\revision{When we use any \solutionGeneralName{} (e.g., \solutionD{} or \solutionB{}) in real-world scenarios, it is impossible to count the frequency of all types of points because there is infinite number of points in any continuity ranges (e.g., $\mathcal{D}$ and $\mathcal{\tilde{D}}$).
	Thus we need to bucketize the input/output domain into grids and execute \solutionGeneralName{} under grid domain.
}
Additionally, we provide the method of post-processing under the grid condition.
Finally, we give the total algorithms of solving \problemDefineSimpleName{}.

\subsection{Bucketizing}
To facilitate the reconstruction of the distribution, we need to divide the plane into grids and use our \solutionB{} on this grid plane. In other words, the problem is converted into estimating the histogram distribution on a $2$-Dim plane using \solutionB{}.

Let $g$ be the length of a grid cell.
Let $G$ be the grid input domain and $\tilde{G}$ be the grid output domain.
We denote the side length of the grid as $d=\lfloor\frac{L}{g}\rfloor$ and the high probability radius in the grid as $\hat{b}=\lfloor\frac{b}{g}\rfloor$.
Then, the coordinate unit is reset to the side length of a grid cell, and we use the central point of a cell to represent its position.
For example, in Figure~\ref{subfig:Reconstruct_A_1}, the index of the cell $a_0$ is $(0,0)$, and the index of the cell to the right of $a_0$ is $(1,0)$.

\begin{figure}[t!]\centering\vspace{2ex}
	\subfigure[][{\small Non-shrunken areas}]{
		\scalebox{0.35}[0.35]{\includegraphics{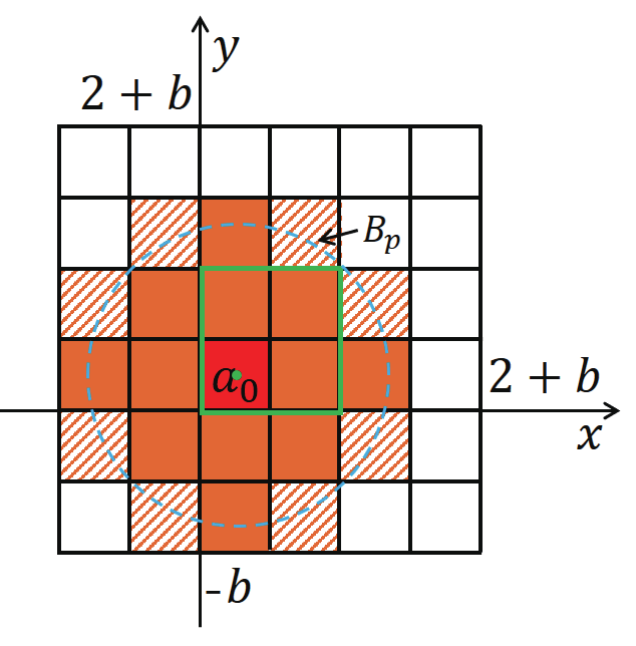}}
		\label{subfig:Reconstruct_A_1}}\hfill
	\subfigure[][{\small Shrunken areas}]{
		\scalebox{0.35}[0.35]{\includegraphics{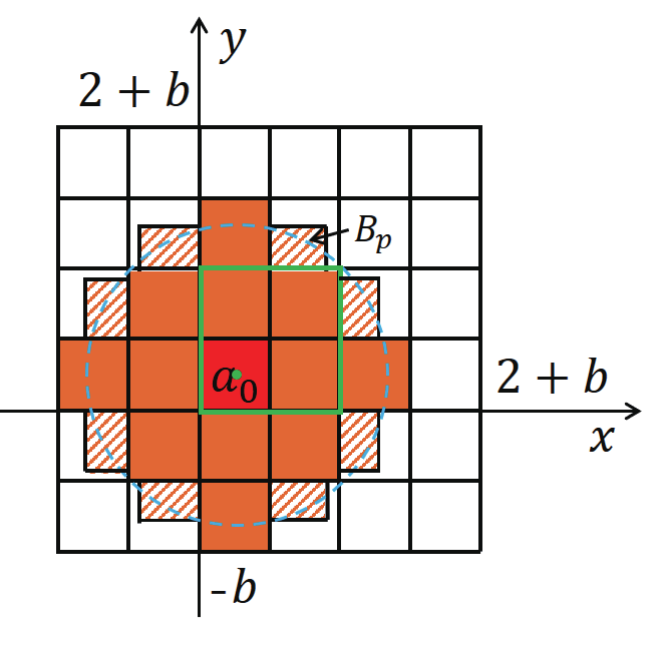}}
		\label{subfig:Reconstruct_A_2}}
	\caption{\small Non-shrunken/Shrunken areas in grid division.}
	\label{fig:Reconstruct_A}
\end{figure}

Our \solutionB{} for grids is defined in Equation~\eqref{solutionB_G_def}.

\begin{equation}\label{solutionB_G_def}
	{\scriptsize
		\begin{aligned}
			\forall v\in G, \tilde{v}\in\tilde{G}, M_{\vectorfont{v}}(\tilde{\vectorfont{v}})=
			\begin{cases}
				\hat{p},& \textrm{if}\;\;\|\tilde{\vectorfont{v}}-\vectorfont{v}\|_2\leq \hat{b},   \\
				\hat{q},& otherwise.
			\end{cases}
		\end{aligned}
	}
\end{equation}

Next, we decide how to calculate $\hat{p}$ and $\hat{q}$ to approximate the true values.

As is shown in Figure~\ref{fig:Reconstruct_A}, given an input cell $a_0$ (the red rectangle), the blue dotted line (denoted as $B_p$) represents the border of high probability reporting. Based on the positional ordinal relationship between $B_p$ and the output cells, the output domain can be divided into three areas:
\begin{enumerate}[label=(\arabic*)]
	\item \emph{the pure high probability area $A_p$} where the center of each cell is in or on $B_p$;
	\item \emph{the pure low probability area $A_q$} where each cell neither intersects with $B_p$ nor locates in $B_p$;
	\item \emph{the mixed probability area $A_m$} where each cell intersects with $B_p$, however, the center point is out of $B_p$.
\end{enumerate}
All these areas are shown as the orange cells, the white cells and shaded cells, respectively. Each cell $a^{(i)}$ in $A_m$ can be further divided into high probability part $a^{(i)}_p$ and low probability part $a^{(i)}_p$, respectively. We combine all high probability areas ($A_{m,p}=\sum_{i}a^{(i)}_p$) in $A_m$ with $A_p$ to form the total high probability area $A_H$. Similarly, we combine all low probability areas ($A_{m,q}=\sum_{i}a^{(i)}_q$) in $A_m$ with $A_q$ to form the total low probability area $A_L$.

We consider the area size of each cell to be 1 (i.e., $S_a=1$).
To determine $\hat{p}$ and $\hat{q}$, we need to solve two problems:
\begin{enumerate}[label=(\arabic*)]
	\item How can we determine the area size of $a^{(i)}_p$ for each cell in $A_m$ to satisfy $\epsilon$-LDP ?\label{prblm_1}
	\item What is the area size of $A_H$ and $A_L$ ?\label{prblm_2}
\end{enumerate}
To solve Problem~\ref{prblm_1}, we first determine the center of $a^{(i)}_p$ (denoted as $C_N$) by intersecting $B_p$ and the line between $a^{(i)}$'s center and $B_p$'s center. Then, we construct $a^{(i)}_p$ as a rectangle centered at $C_N$ satisfying $a^{(i)}_p$'s left and bottom borders overlap $a^{(i)}_p$'s left and bottom borders respectively.
To solve Problem~\ref{prblm_2}, we first partition $A_H$ and $A_L$ into several parts and reconstruct them as different types of cells. Then, we count these cell areas by category. We introduce these methods in detail below.

\begin{figure}[t!]\centering\vspace{2ex}
	\scalebox{0.32}[0.32]{\includegraphics{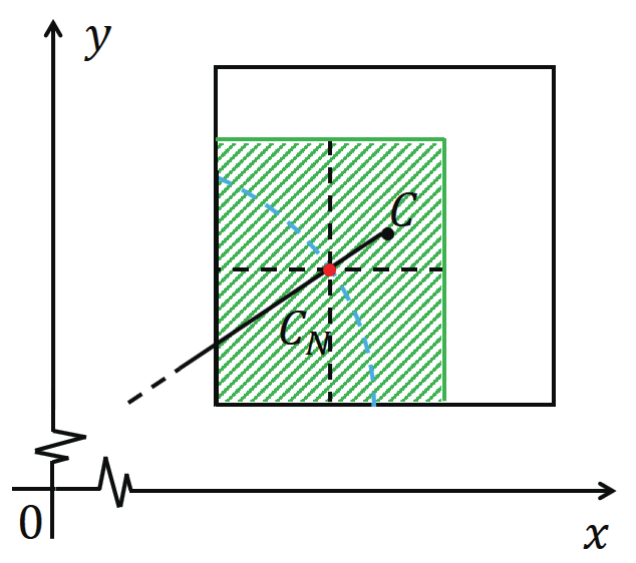}}
	\caption{\small The process of border shrinkage in discrete \solutionB{}.}
	\label{fig:Approximate_A_1}
\end{figure}

As for the Problem~\ref{prblm_1}, we determine whether the center $o^{(i)}$ of each cell $a^{(i)}\in A_m$ is inside $B_p$.
If $o^{(i)}$ is outside of $B_p$, we create a new shrunken rectangle for $a^{(i)}$ and set it as $a^{(i)}_p$ while the remain part $a^{(i)}\setminus a^{(i)}_p$ is set as $a^{(i)}_q$. The method for constructing the new shrunken rectangle is as follows.

Suppose $B_p$ crosses a grid cell $a$ (with its central point noted as $C$) as shown in Figure~\ref{fig:Approximate_A_1}, with the blue dot arc $ARC$ indicating the intersection. We connect the center of $B_p$ and the point $C$ to obtain the intersection point $C_N$ on $ARC$. We define $C_N$ as the center of the shrunken rectangle $a^{(i)}_p$, and construct $a^{(i)}_p$ as shown in the green shaded part.

Next we give Theorem~\ref{thrm:shrankArea} to calculate the area size of the shrunken rectangle $a^{(i)}_p$.

\begin{theorem}\label{thrm:shrankArea}
	Given a circle $B_{p}$ with central cell index $(0,0)$, a radius $\hat{b}$ and any cell $a$, the area size of $a$'s shrunken cell $a_p$ is
	$S_{a_p}=4(\delta\cdot x + \frac{1}{2})(\delta\cdot y + \frac{1}{2})$, where $\delta=\frac{\hat{b}}{\sqrt{x^2+y^2}}-1$, and $a$ is any cell that intersects with $B_{p}$, whose central point $(x,y)$ is outside the range of $B_{p}$.
\end{theorem}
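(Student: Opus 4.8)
The plan is to reduce the statement to an elementary coordinate computation after fixing a convenient frame. First I would place the center of $B_p$ at the origin (the hypothesis that $B_p$ has central cell index $(0,0)$) and, using the reflection symmetry of a circle across the coordinate axes, assume without loss of generality that $a$ lies in the first quadrant, so its center is $(x,y)$ with $x,y\ge 0$ and, since $S_a=1$, $a=[x-\tfrac12,x+\tfrac12]\times[y-\tfrac12,y+\tfrac12]$. In this frame the ``left'' and ``bottom'' borders of $a$ are exactly the two sides facing the origin, namely the lines $X=x-\tfrac12$ and $Y=y-\tfrac12$; holding these fixed is what ``shrinking toward the center of $B_p$'' means.

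Next I would pin down $C_N$ explicitly. By construction $C_N$ is the point where the segment from the center of $B_p$ (the origin) to $C=(x,y)$ meets the arc $\partial B_p$. Since the hypothesis that $C$ lies outside $B_p$ gives $\sqrt{x^2+y^2}>\hat b$, that segment meets the circle in the single point at distance $\hat b$ from the origin, so $C_N=\frac{\hat b}{\sqrt{x^2+y^2}}(x,y)=\big((1+\delta)x,(1+\delta)y\big)$ with $\delta=\frac{\hat b}{\sqrt{x^2+y^2}}-1\in(-1,0)$.

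Then the area falls out directly. By definition $a_p$ is the axis-aligned rectangle centered at $C_N$ whose left border coincides with $a$'s left border $X=x-\tfrac12$ and whose bottom border coincides with $a$'s bottom border $Y=y-\tfrac12$. Writing $w$ and $h$ for its width and height, its left border sits at $X=(1+\delta)x-\tfrac{w}{2}$; equating this with $x-\tfrac12$ yields $\tfrac{w}{2}=\delta x+\tfrac12$, and the analogous computation for the bottom border yields $\tfrac{h}{2}=\delta y+\tfrac12$. Hence $S_{a_p}=wh=4\big(\delta x+\tfrac12\big)\big(\delta y+\tfrac12\big)$, which is the claim. Along the way I would record the sanity checks that $\delta<0$ forces $\delta x+\tfrac12\le\tfrac12$ and $\delta y+\tfrac12\le\tfrac12$, so $a_p\subseteq a$, and that the hypothesis that $a$ actually meets $B_p$ is what keeps both half-extents positive, so $a_p$ is a genuine, non-degenerate sub-cell.

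There is no deep obstacle here: once the frame is fixed the result is a one-line computation. The two points that need a little care are (i) translating the informal description of the construction (``center at $C_N$, share the left and bottom sides of $a$'') into the precise half-width and half-height identities, making sure the shared sides are the ones facing the center of $B_p$; and (ii) handling cells in the other three quadrants, which I would dispatch either by the reflection symmetry noted above or, equivalently, by running the same argument with $|x|$ and $|y|$ in place of $x$ and $y$.
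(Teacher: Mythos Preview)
Your proposal is correct and follows essentially the same approach as the paper: identify the center $C_N=\frac{\hat b}{\sqrt{x^2+y^2}}(x,y)$ of the shrunken rectangle, use the constraints that its left and bottom borders coincide with $X=x-\tfrac12$ and $Y=y-\tfrac12$ to read off the half-width and half-height, and multiply. Your write-up is in fact more careful than the paper's, adding the first-quadrant reduction via symmetry and the sanity checks on containment and non-degeneracy, but the underlying argument is the same.
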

\begin{proof}
	As shown in Figure~\ref{fig:Approximate_A_1}, suppose the index of cell $a$ is $(x,y)$, then the index of $a_p$ is $(\hat{b}\cdot\frac{x}{\sqrt{x^2+y^2}}, \hat{b}\cdot\frac{y}{\sqrt{x^2+y^2}})$.
	The line of $a$'s left boundary is $X=x-\frac{1}{2}$ and the line of $a$'s bottom boundary is $Y=y-\frac{1}{2}$.
	Thus, $S_{a_p}=4(\hat{b}\cdot\frac{x}{\sqrt{x^2+y^2}}-(x-\frac{1}{2}))(\hat{b}\cdot\frac{y}{\sqrt{x^2+y^2}}-(y-\frac{1}{2}))$.
	Let $\delta=\frac{\hat{b}}{\sqrt{x^2+y^2}}-1$.
	Then we have $S_{a_p}=4(\delta\cdot x + \frac{1}{2})(\delta\cdot y + \frac{1}{2})$.
\end{proof}

As for the problem~\ref{prblm_2}, we decompose $A_H=A_p+A_{m,p}$ and $A_L=A_q+A_{m,q}$.
We need to calculate the area size of $A_q$, $A_p$ and $A_m$ (i.e., $A_{m,p}$ + $A_{m,q}$).
Next, we give Theorem~\ref{thrm:A_q},~\ref{thrm:SOCount} and~\ref{thrm:SICount} to calculate the sizes of these three areas.

\begin{theorem}\label{thrm:A_q}
	For any square input domain $\mathcal{D}$ with integer side length $d$ and any integer high probability radius $\hat{b}$, the area size of pure low probability area $A_q$ is $d^2+4\hat{b}d-4\hat{b}-1$.
\end{theorem}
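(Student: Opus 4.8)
The plan is to prove the identity by a direct cell count, exploiting a cancellation rather than any closed form for the number of lattice points in a disk. First I would fix coordinates so the $d\times d$ input grid has cell centers on $\{0,\dots,d-1\}^2$ and hence occupies the solid square $Q=[-\tfrac12,d-\tfrac12]^2$, consistent with the convention used in Theorem~\ref{thrm:shrankArea}. Since $\tilde{\mathcal D}=\bigcup_{\vectorfont v\in\mathcal D}DS_b(\vectorfont v)$, after bucketizing with integer radius $\hat b$ the output grid $\tilde G$ is exactly the set of cells whose center lies within distance $\hat b$ of $Q$. This region splits into the $d\times d$ core ($d^2$ cells), four axis strips of size $\hat b\times d$ ($4\hat b d$ cells), and four congruent rounded corners; writing $M(\hat b)=\#\{(s,t)\in\mathbb Z_{\ge1}^2:(s-\tfrac12)^2+(t-\tfrac12)^2\le\hat b^2\}$, each corner contributes $M(\hat b)$ cells, so $|\tilde G|=d^2+4\hat b d+4M(\hat b)$.

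Next I would count the cells of $\tilde G$ that are \emph{not} in $A_q$, i.e.\ those in $A_p\cup A_m$. By the classification just used for $\tilde G$, a cell fails to be purely low-probability exactly when its minimum distance to the input center $\vectorfont v$ is at most $\hat b$ (it then lies in, on, or straddles $B_p$). This condition is translation-invariant in $\vectorfont v$, and splitting by the center cell, the four axial rays, and the four open quadrants — using the same half-cell offsets — gives a tally of $1+4\hat b+4M(\hat b)$, in which the quadrant part is again $M(\hat b)$. A small containment lemma is needed here: every cell $c$ with $\mathrm{dist}(\vectorfont v,c)\le\hat b$ actually lies in $\tilde G$. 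I would prove this coordinate by coordinate: because $\vectorfont v$ is a lattice point in $[0,d-1]^2$ and $Q$ is exactly its half-cell neighborhood, in each coordinate the distance from $\mathrm{cen}(c)$ to $Q$'s range is bounded by the distance from $\vectorfont v$ to $c$, so $\mathrm{dist}(\vectorfont v,c)\le\hat b$ forces $\mathrm{dist}(\mathrm{cen}(c),Q)\le\hat b$. Subtracting then yields $A_q=|\tilde G|-(1+4\hat b+4M(\hat b))=d^2+4\hat b d-4\hat b-1$, with the awkward corner count $M(\hat b)$ cancelling.

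The step I expect to be the main obstacle is making the two occurrences of $M(\hat b)$ \emph{literally} the same integer: one must check that the rounded-corner cells of $\tilde G$ and the quadrant cells counted as non-pure-low carry identical $\pm\tfrac12$ offsets and identical tie-breaking for cells whose center or boundary meets $B_p$ exactly (recall that centers on $B_p$ are classified as high-probability). The containment lemma is the other delicate point, since the center of a border cell can a priori sit up to $\tfrac{\sqrt2}{2}$ farther from $Q$ than the cell itself; integrality of $\hat b$ together with the lattice alignment of the cell centers is what keeps every relevant distance at or below $\hat b$. Once those two facts are in place, the rest is elementary bookkeeping.
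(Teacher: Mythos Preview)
Your argument is correct, and it shares the paper's core observation that $|A_q|=|\tilde G|-|A_p\cup A_m|$, but the bookkeeping is different. The paper never introduces the corner count $M(\hat b)$ at all: instead of computing $|\tilde G|$ and $|A_p\cup A_m|$ separately, it builds $\tilde G$ from the single disk pattern $A_p\cup A_m$ by a Minkowski--sum sweep---first translating the pattern up $d-1$ times, then translating the resulting vertical stadium right $d-1$ times---and counts only the \emph{new} cells at each step. Since the disk pattern has exactly $2\hat b+1$ nonempty columns and the vertical stadium has exactly $2\hat b+d$ nonempty rows, the increments are $(2\hat b+1)(d-1)$ and $(2\hat b+d)(d-1)$, summing to $d^2+4\hat b d-4\hat b-1$. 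This sweep sidesteps both delicate points you flag: the ``same $M(\hat b)$ twice'' check never arises because $M(\hat b)$ is never written down, and the containment lemma is absorbed into the observation that the sweep literally constructs $\tilde G$ as $\bigcup_{v\in G}(A_p\cup A_m)(v)$. Your route, by contrast, makes the cancellation explicit and is arguably more transparent about \emph{why} the answer is independent of the unknown lattice-disk count; the paper's route is terser and avoids the tie-breaking verification by design.
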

\begin{proof}
	Please refer to details of Theorem~\ref{thrm:A_q} in Appendix~B1.
\end{proof}

Theorem~\ref{thrm:A_q} gives the method to calculate the area size of $A_q$.
As for $A_p$ and $A_m$, according to the centripetal symmetry and axial symmetry of a circle, we only need to analyze the part within angle $[0,\frac{\pi}{4}]$.
Figure~\ref{subfig:Approximate_A_2} shows the conditions that $\hat{b}$ is $1,2,...,7$.
The cells in directions of $0$ and $\frac{\pi}{4}$ are in yellow while others are in green.

We define $E_{\hat{b},\theta}$ as the cells in direction $\theta$ within the radial range of $\hat{b}$.
We define $E_{\hat{b},(\theta_1,\theta_2)}$ as the cells within the radial range of $\hat{b}$ and the direction range of $(\theta_1,\theta_2)$.
\revision{We define \textit{strict quarter $A_m$} (denoted as $E_{\hat{b},(0,\frac{\pi}{4})}^{(m)}$) as the cells belonging to $A_m$ in $E_{\hat{b},(0,\frac{\pi}{4})}$,
	and \textit{strict quarter $A_p$} (denoted as $E_{\hat{b},(0,\frac{\pi}{4})}^{(p)}$) as the cells belonging to $A_p$ in $E_{\hat{b},(0,\frac{\pi}{4})}$.
	Similar to $A_m$, the strict quarter $A_m$ can be divided into the high probability part and low probability part. 
	We call these two parts as \textit{strict quarter $A_{m,p}$} and and \textit{strict quarter $A_{m,q}$} respectively.}
For example, in Figure~\ref{subfig:Approximate_A_2},
$E_{\hat{7},\frac{\pi}{4}}=\{a_0\}\cup\{D_i|i\in[5]\}$,
$E_{7,(0,\frac{\pi}{4})}^{(m)}=\{A_1,A_2,A_3,A_4\}$, the quantity of $E_{7,(0,\frac{\pi}{4})}^{(p)}$ is  $|E_{7,(0,\frac{\pi}{4})}^{(p)}|=|\{B_i|i\in[13]\}|=13$.

\revision{In order to get the area size of strict quarter $A_m$ (i.e., strict quarter $A_{m,p}$ and strict quarter $A_{m,q}$), we need to know each cell's index in $E_{\hat{b},(0,\frac{\pi}{4})}^{(m)}$. 
	Based on these cell indexes and Theorem~\ref{thrm:shrankArea}, we can calculate each cell's shrunken area size and remaining area size in $E_{\hat{b},(0,\frac{\pi}{4})}^{(m)}$ (i.e., strict quarter $A_{m,p}$ and strict quarter $A_{m,q}$).
	We present Theorems~\ref{thrm:SOCount} to get the cell indexes in $E_{\hat{b},(0,\frac{\pi}{4})}^{(m)}$ as follows.}

\begin{theorem}\label{thrm:SOCount}
	Given a positive integer $\hat{b}$, the quantity of 
	{\scriptsize $E_{\hat{b},(0,\frac{\pi}{4})}^{(m)}$}
	is 
	$\lceil\frac{\hat{b}}{\sqrt{2}}-\frac{1}{2}\rceil-\lfloor\frac{r}{\hat{b}}\rfloor$, 
	where $r=\sqrt{r_1^2+1+\sqrt{2}r_1}$ and  $r_1=\lfloor\frac{\hat{b}}{\sqrt{2}}-\frac{1}{2}\rfloor\cdot\sqrt{2}+\frac{1}{\sqrt{2}}$.
	The index of each cell in $E_{\hat{b},(0,\frac{\pi}{4})}^m$ is  $(\lceil\sqrt{\hat{b}^2-(i-\frac{1}{2})^2}-\frac{1}{2}\rceil,i)$ for  $i\in[|E_{\hat{b},(0,\frac{\pi}{4})}^{(m)}|]$.
\end{theorem}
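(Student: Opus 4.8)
\textbf{Proof proposal for Theorem~\ref{thrm:SOCount}.}

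The plan is to work entirely in the first octant $[0,\frac{\pi}{4}]$ and characterize, column by column (i.e., for each fixed $y$-index $i$), which cells belong to the mixed-probability region $E_{\hat{b},(0,\frac{\pi}{4})}^{(m)}$. Recall that a cell with center $(x,y)$ lies in $A_m$ exactly when the cell intersects the disk $B_p$ of radius $\hat{b}$ but its center is strictly outside it; equivalently, the center satisfies $x^2+y^2>\hat{b}^2$ while the closest corner $(x-\frac12,y-\frac12)$ satisfies $(x-\frac12)^2+(y-\frac12)^2\le\hat{b}^2$. First I would argue that within the cone $0<\theta<\frac{\pi}{4}$, for each admissible row index $i$ there is \emph{exactly one} such cell, namely the one whose $x$-index is the smallest integer whose center lies outside $B_p$; this follows because $x\mapsto x^2+y^2$ is increasing, so once a center is outside it stays outside, and the ``innermost outside'' cell is the unique candidate whose inner corner can still reach into the disk. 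Solving $x^2+i^2>\hat b^2$ for the least integer $x$ gives $x=\lceil\sqrt{\hat b^2-(i-\frac12)^2}\,\rceil$ with the half-shift bookkeeping matching the stated formula $(\lceil\sqrt{\hat b^2-(i-\frac12)^2}-\frac12\rceil,\,i)$, once one is careful that ``$(x,y)$'' in the statement refers to integer cell indices while the geometric disk condition is phrased on shifted coordinates.

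Next I would determine the \emph{range} of row indices $i$ for which such a mixed cell actually exists inside the strict cone. The upper end is governed by the diagonal $\theta=\frac{\pi}{4}$: the last row that still contains a strictly-interior-cone mixed cell corresponds to $i$ roughly equal to $\hat b/\sqrt2$, and rounding gives the $\lceil\frac{\hat b}{\sqrt2}-\frac12\rceil$ term. The subtracted correction $\lfloor r/\hat b\rfloor$ accounts for the boundary case where the arc, near the diagonal, has already exited through a cell that sits \emph{on} the $\frac{\pi}{4}$ ray (a yellow cell, not a green one), so that the corresponding row contributes no strictly-interior mixed cell; the quantity $r_1$ is precisely the polar radius of the relevant diagonal cell's inner corner, $r_1=\lfloor\frac{\hat b}{\sqrt2}-\frac12\rfloor\sqrt2+\frac1{\sqrt2}$, and $r=\sqrt{r_1^2+1+\sqrt2\,r_1}$ is the radius out to the adjacent off-diagonal corner; comparing $r$ against $\hat b$ via $\lfloor r/\hat b\rfloor$ detects whether that extra row is eaten by the diagonal. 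I would verify this correction term by checking it against the worked example in Figure~\ref{subfig:Approximate_A_2}: for $\hat b=7$ the formula should return $|E_{7,(0,\frac{\pi}{4})}^{(m)}|=4$, matching $\{A_1,A_2,A_3,A_4\}$.

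The main obstacle will be the boundary analysis at the two edges of the cone, $\theta=0$ and $\theta=\frac{\pi}{4}$ — precisely the yellow cells of Figure~\ref{subfig:Approximate_A_2} — because the clean ``one mixed cell per row'' statement can fail there: a diagonal cell may straddle the disk boundary in a way that is shared between two octants, and a cell on the axis behaves differently under the symmetry reduction. Getting the floor/ceiling rounding exactly right so that these edge cells are counted once and only once (and excluded from the \emph{strict} quarter count) is where all the case-work lives; the $\lfloor r/\hat b\rfloor$ term is essentially a compact encoding of a single such edge case near the diagonal, and I would want to confirm there is no analogous correction needed near $\theta=0$ (there should not be, since the axis cell always has its center on the axis and the ``first outside'' cell there is genuinely interior to no octant boundary issue). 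Once the per-row count and the index formula are pinned down, summing the count over the valid rows gives the stated cardinality, and the explicit index formula is immediate from the per-row solve; no heavy computation is required beyond the rounding bookkeeping.
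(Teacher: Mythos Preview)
Your proposal is correct and follows essentially the same route as the paper: both arguments rest on the observation that each horizontal row within the strict octant contains exactly one mixed cell, so the cardinality equals the height $H_{\hat b,(0,\frac{\pi}{4})}$, which is $\lceil\frac{\hat b}{\sqrt2}-\frac12\rceil$ minus a $0/1$ correction at the diagonal detected by comparing $\hat b$ to the corner radius $r$ (the paper computes $r$ from $r_1$ via the law of cosines in the triangle $Ov_lv_r$, which is exactly your ``adjacent off-diagonal corner'' interpretation). If anything, you are slightly more thorough than the paper, which asserts the one-cell-per-row claim without justification and does not separately argue the index formula.
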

\begin{proof}
	Please refer to details of Theorem~\ref{thrm:SOCount} in Appendix~B2.
\end{proof}

\revision{In order to get the area size of strict quarter $A_p$, we need to count the cell number in this area.
	Theorem~\ref{thrm:SICount} gives the method to calculate this count.
}

\begin{figure}[t!]\centering
	\includegraphics[width=0.23\textwidth]{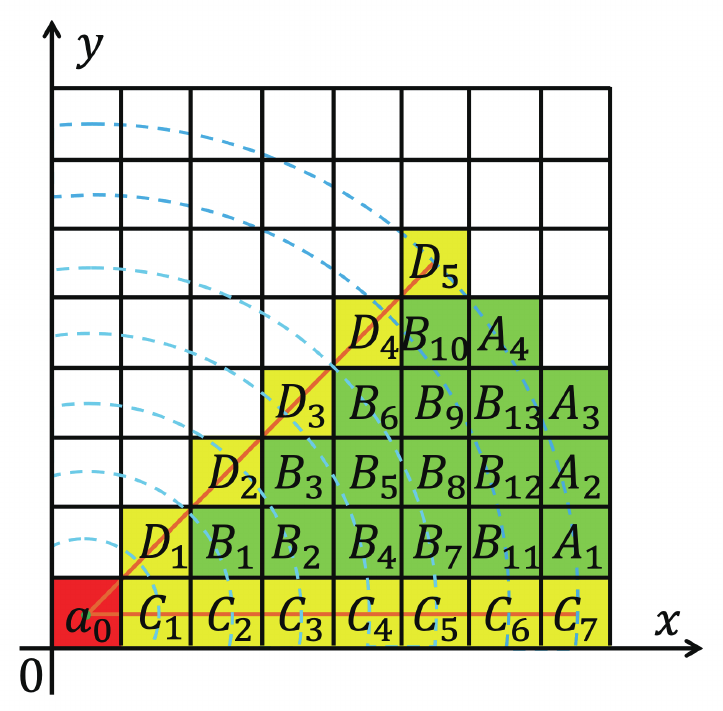}
	\caption{\small Quarter of the total area.}
	\label{subfig:Approximate_A_2}
\end{figure}

\begin{theorem}\label{thrm:SICount}
	Given a positive integer $\hat{b}$, the quantity of $E_{\hat{b},(0,\frac{\pi}{4})}^{(p)}$ is
	{\scriptsize $\frac{1}{2}\lceil\frac{\hat{b}}{\sqrt{2}}-\frac{1}{2}\rceil(\lceil\frac{\hat{b}}{\sqrt{2}}-\frac{1}{2}\rceil-2|E_{\hat{b},(0,\frac{\pi}{4})}^{(m)}|-1) +\sum_{i=1}^{|E_{\hat{b},(0,\frac{\pi}{4})}^{(m)}|}\lceil\sqrt{\hat{b}^2-(i-\frac{1}{2})^2}-\frac{1}{2}\rceil$}.
\end{theorem}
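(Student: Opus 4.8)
The plan is to count the cells of the strict quarter one row at a time, using Theorem~\ref{thrm:SOCount} to control where the arc $B_p$ leaves each row. Write a cell of the strict wedge as $(x,i)$, where $i$ is the $y$-coordinate of its centre and $x\ge i+1$ (the cells of direction $0$ and $\tfrac{\pi}{4}$ are excluded); by the centripetal and axial symmetry of $B_p$ it suffices to work in this wedge. A cell $(x,i)$ carries a high-probability part exactly when its square meets $B_p$, i.e. when its near corner satisfies $(x-\tfrac12)^2+(i-\tfrac12)^2\le\hat{b}^2$, and it lies entirely in $A_p$ exactly when its centre is in or on $B_p$, i.e. $x^2+i^2\le\hat{b}^2$; each is a single upper bound on $x$, so in every row the cells of $A_p$ (together with the boundary cells of $A_m$) form a contiguous block $\{i+1,i+2,\dots,x_i\}$ for the appropriate rounded square root $x_i$. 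Hence the quantity in question equals $\sum_{i\ge 1}(x_i-i)^{+}$, and the task is to put this sum in closed form.

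First I would pin down $x_i$. For the $m=|E_{\hat{b},(0,\frac{\pi}{4})}^{(m)}|$ rows that contain a cell of $A_m$, Theorem~\ref{thrm:SOCount} already gives the $x$-coordinate of that cell as $\lceil\sqrt{\hat{b}^2-(i-\tfrac12)^2}-\tfrac12\rceil$, and by the rounding identity $\lceil t-\tfrac12\rceil=\lfloor t+\tfrac12\rfloor$ this is precisely the largest $x$ with $(x-\tfrac12)^2+(i-\tfrac12)^2\le\hat{b}^2$, so $x_i=\lceil\sqrt{\hat{b}^2-(i-\tfrac12)^2}-\tfrac12\rceil$ on exactly those rows. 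I then have to verify that these are all the rows that contribute — a wedge row whose square meets $B_p$ always supplies such a cell — and that $x_i\ge i+1$ there. Both are short monotonicity/floor arguments: inside the wedge $i\le x$ keeps the arc flat, so $\sqrt{\hat{b}^2-(i-\tfrac12)^2}-\sqrt{\hat{b}^2-i^2}<1$, and the integrality of $\hat{b}$ does the rest (a separate easy case is needed for the rare $\hat{b}$ whose circle passes exactly through a lattice centre). Granting this, $\sum_{i\ge 1}(x_i-i)^{+}=\sum_{i=1}^{m}\bigl(\lceil\sqrt{\hat{b}^2-(i-\tfrac12)^2}-\tfrac12\rceil-i\bigr)=\sum_{i=1}^{m}\lceil\sqrt{\hat{b}^2-(i-\tfrac12)^2}-\tfrac12\rceil-\tfrac{m(m+1)}{2}$.

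It remains to recognise the correction $-\tfrac{m(m+1)}{2}$ as the first term of the stated formula. For this I would prove the auxiliary relation $N\in\{m,m+1\}$ between $m$ and $N:=\lceil\tfrac{\hat{b}}{\sqrt2}-\tfrac12\rceil$, the latter being the number of cells of direction $\tfrac{\pi}{4}$ whose square meets $B_p$ (from the same near-corner criterion that drives Theorem~\ref{thrm:A_q}): the diagonal ray stops either in the last boundary row or exactly one row past it. Since $\tfrac12\,t\,(t-2m-1)=-\tfrac{m(m+1)}{2}$ for both $t=m$ and $t=m+1$, substituting $t=N$ turns the row sum into $\tfrac12 N(N-2m-1)+\sum_{i=1}^{m}\lceil\sqrt{\hat{b}^2-(i-\tfrac12)^2}-\tfrac12\rceil$, which is exactly the claimed expression.

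The principal difficulty is the boundary bookkeeping rather than any single inequality: one must, with careful floor/ceiling reasoning, pin down precisely which rows meet $B_p$ inside the wedge, show each contributes exactly one $A_m$-cell, handle the degenerate on-the-circle case, and reconcile the three integers in play — the number of contributing rows, the diagonal reach $N$, and the boundary-row count $m$ — so that the per-row sum telescopes into the given closed form. Once those small integrality lemmas are in place, Theorem~\ref{thrm:SOCount} supplies the hard quantity and the remaining manipulation is routine arithmetic-series summation.
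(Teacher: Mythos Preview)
There is a genuine gap. You take $x_i=\lceil\sqrt{\hat b^{2}-(i-\tfrac12)^{2}}-\tfrac12\rceil$, which by Theorem~\ref{thrm:SOCount} is the $x$-coordinate of the unique $A_m$ cell in row $i$; the block $\{i+1,\dots,x_i\}$ therefore contains that boundary cell as well as the $A_p$ cells of the row. Hence $\sum_{i=1}^{m}(x_i-i)$ counts $|E_{\hat b,(0,\pi/4)}^{(p)}|+|E_{\hat b,(0,\pi/4)}^{(m)}|$, not the strict quarter $A_p$ alone, and you never remove those $m$ boundary cells. (For $\hat b=7$ your sum is $(7{-}1)+(7{-}2)+(7{-}3)+(6{-}4)=17$, whereas the paper's own worked example lists $|E_{7,(0,\pi/4)}^{(p)}|=13$.) This is exactly the step the paper's proof supplies at the outset: it begins from $|E^{(p)}|=|E_{\hat b,(0,\pi/4)}|-|E^{(m)}|$ and carries the extra $-m$ through the subsequent algebra.

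Apart from that missing subtraction, your organisation differs from the paper's. The paper splits the wedge count $|E_{\hat b,(0,\pi/4)}|$ geometrically into a triangular staircase of height $H^{L}=\lceil \hat b/\sqrt{2}-\tfrac12\rceil-1$ plus a right-hand strip whose rows have length $x_i^{V}-H^{L}$, and only then subtracts $m$. You instead sum the full row lengths $x_i-i$ directly and invoke the identity $\tfrac12\,t(t-2m-1)=-\tfrac{m(m+1)}{2}$ for $t\in\{m,m+1\}$ (precisely the dichotomy coming out of the proof of Theorem~\ref{thrm:SOCount}) to rewrite the result in the stated form. The shortcut is pleasant, but as written it delivers $|E^{(p)}|+m$ rather than $|E^{(p)}|$.
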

\begin{proof}
	Please refer to details of Theorem~\ref{thrm:SICount} in Appendix~B3.
\end{proof}

According to Theorems~\ref{thrm:shrankArea}, \ref{thrm:SOCount}, and \ref{thrm:SICount}, we can calculate $\hat{p}_2$ and $\hat{q}_2$ as follows.

Let $S_a^{(m,p)}$ be the shrunken area size of $a\in E_{\hat{b},(0,\frac{\pi}{4})}^{(m)}$, which can be calculated by Theorem~\ref{thrm:shrankArea}.
Similarly, let $S_{\frac{\pi}{4}}^{(m,p)}$ denote the shrunken area size of cell $a\in E_{\hat{b}}^{\frac{\pi}{4}}\cap A_m$. According to Theorem~\ref{thrm:shrankArea}, we have
\begin{equation}\label{A_pi_4_define}
	{\scriptsize
		\begin{aligned}
			S_{\frac{\pi}{4}}^{(m,p)}=
			\begin{cases}
				4(b'_{\frac{\pi}{4}}-\hat{b}_{\frac{\pi}{4}})^2,& \textrm{if}\;\;b'_{\frac{\pi}{4}}-\hat{b}_{\frac{\pi}{4}}< \frac{1}{2},   \\
				1,& otherwise
			\end{cases}
		\end{aligned}
	}
\end{equation}
where {\scriptsize $b'_{\frac{\pi}{4}}=\frac{\hat{b}}{\sqrt{2}}-\frac{1}{2}$} and 
{\scriptsize $\hat{b}_{\frac{\pi}{4}}=\lfloor b'_{\frac{\pi}{4}}\rfloor$}.

Finally, we can calculate the probabilities $\hat{p}$ and $\hat{q}$ as:
{\scriptsize$$\hat{p}=\frac{e^\epsilon}{S_H\cdot e^\epsilon+S_L},\;\;\;\; \hat{q}_2=\frac{1}{S_H\cdot e^\epsilon+S_L},$$}
where 
{\scriptsize $S_H=1+4(\hat{b}+\hat{b}_{\frac{\pi}{4}}+S_{\frac{\pi}{4}}^{(m,p)})+8(|E_{\hat{b},(0,\frac{\pi}{4})}^{(p)}|+\sum\limits_{a\in E_{\hat{b},(0,\frac{\pi}{4})}^{(m)}}S_a^{(m,p)})$} 
and
{\scriptsize $A_L=A_q+4(1-S_{\frac{\pi}{4}}^{(m,p)})+8\sum\limits_{a\in E_{\hat{b},(0,\frac{\pi}{4})}^{(m)}}(1-S_a^{(m,p)})$}.

Regarding the discretization of \solutionD{}, the high probability areas can be divided into $\hat{b}$ fan rings $\{FR_j\}_{j=1}^{\hat{b}}$.
For any unit cell in $FR_j$, the reported probability is $p_j^{(I)}=qe^{1-\frac{j-1}{\hat{b}}}$.
For a unit cell $a$ on the border of $FR_{j-1}$ and $FR_{j}$, the reported probability is $p_j^{(B)}=S_{a}^{(p)}\cdot p_{j-1}^{(I)}+(1-S_{a}^{(p)})\cdot p_{j}^{(I)}$, where $S_{a}^{(p)}$ is the shrunken area size of $a$.
For more details on discretization of \solutionD{}, please refer to Appendix~A.

\begin{algorithm}[t!]
	\DontPrintSemicolon
	\caption{\solutionB{} Processing Framework}
	\label{alg_PSDE}
	\KwIn{original data point set $X$, square range $L\times L$, cell side length $g$, privacy budget $\epsilon$}
	\KwOut{distribution map $R$}
	{Split square range into $\lceil\frac{L}{g}\times\frac{L}{g}\rceil$ grids with index set $\mathcal{I}=[0:\lceil\frac{L}{g}\rceil-1]\times[0:\lceil\frac{L}{g}\rceil-1]$;}\\
	{Calculate the \revision{noisy} domain index set $\hat{\mathcal{I}}$;}\\
	{Initialize \revision{noisy} map $NR$ by setting items as $(\hat{i},0)$ for each $\hat{i}\in\hat{\mathcal{I}}$;}\label{noise_domain}\\
	\For{each point $x$ in $X$}{
		{Get the grid index $\mathcal{I}(x)$;}\label{project_to_cell}\\
		{$\hat{\mathcal{I}}_x\gets\emph{GridAreaResponse}(\mathcal{I}(x))$;}\label{project_to_noise_cell}\\
		{$NM(\mathcal{I}(x))\gets NM(\mathcal{I}(x))+1$;}\\
	}
	{$R\gets\emph{PostProcess}(NR, I)$;}\\
	\Return $R$;
\end{algorithm}

\begin{algorithm}[t!]
	\DontPrintSemicolon
	\caption{GridAreaResponse}
	\label{alg_solutionB}
	\KwIn{original grid index $i$}
	\KwOut{\revision{noisy} grid index $\hat{i}$}
	{Find the pure high probability cell index set $\hat{\mathcal{I}}_{p}(i)$ and calculate its total area size $S_{p}$;}\\
	{Find the pure low probability cell index set $\hat{\mathcal{I}}_{q}(i)$ and calculate its area size $S_{q}$;}\\
	{Find the high probability border cell set $\hat{\mathcal{I}}_{m}(i)$ and calculate the sum shrunken area $S_{m,p}=\sum_{\hat{i}\in\hat{\mathcal{I}}_{m}}S_{m,p}(\hat{i})$ and the complement area $\overline{S}_{m,p}=\sum_{\hat{i}\in\hat{\mathcal{I}}_{m}}\overline{S}_{m,p}(\hat{i})$;}\\
	{Set value list $vl=<A_{PL},\overline{S}_{m,p}, S_{m,p}, A_{PH}>$;}\\
	{Set weighted list $wl=<1, 1, e^\epsilon, e^\epsilon>$;}\\
	{Sample $ind$ as $i$ with probability as $p_i=\frac{vl_i\cdot wl_i}{\sum_{j=1}^{4} vl_j\cdot wl_j}$;}\label{weighted_sample_1}\\
	\If{$ind=1$}{
		{$\hat{i}\xleftarrow{\$} \hat{\mathcal{I}}_{q}(i)$;}\label{random_1}\\
	}\ElseIf{$ind=4$}{
		{$\hat{i}\xleftarrow{\$} \hat{\mathcal{I}}_{p}(i)$;}\label{random_2}\\
	}\Else{
		{Set $vl=<ws_1,...,ws_n>$ for each $ws_j=\overline{sa}_j+sa_j\cdot e^\epsilon$;}\label{reset_area}\\
		{Set $wl=<1,...,1>$ with $n$ elements;}\\
		{Sample $ind$ as $i$ with probability as $p_i=\frac{vl_i\cdot wl_i}{\sum_{j=1}^{4} vl_j\cdot wl_j}$;}\label{weighted_sample_2}\\
		{$\hat{i}\gets$ cell with $ind$ in $vl$;}\\
	}
	\Return $\hat{i}$;
\end{algorithm}

\subsection{The \problemDefineSimpleName{} Processing Algorithm}
We give the processing framework for our \solutionB{} shown in Algorithm~\ref{alg_PSDE}. 
The input square range with an area size of $L\times L$ will be divided into grids. For each point in this area, we first project it onto a grid cell (Line~\ref{project_to_cell}), and then randomize the cell into a random \revision{noisy} cell (Line~\ref{project_to_noise_cell}) using \emph{GridAreaResponse}. All the points in each \revision{noisy} cell will be counted and stored in the \revision{noisy} map $NM$. Finally, we obtain the distribution estimation using \emph{PostProcess}.

In Algorithm~\ref{alg_PSDE}, the \emph{GridAreaResponse} process is to pick a randomized cell index satisfying $\epsilon$-LDP.
The \emph{PostProcess} process is to handle the values to obtain an accurate estimation distribution, which is the Expectation-Maximization (EM)~\cite{DBLP:conf/sigmod/Li0LLS20} Algorithm.
Next, we give the processes of \emph{GridAreaResponse} in Algorithm~\ref{alg_solutionB}.

In \emph{GridAreaResponse}, a cell point in range $b$ will have a higher probability of being responded to while those outside of it will have a lower probability.
Specifically, the areas are divided into three parts: the pure low probability area, the mixed probability area, and the pure high probability area.
Given an original grid index $i$, the algorithm calculates the area size of high probability part $\hat{\mathcal{I}}_{p}(i)$, the low probability part $\hat{\mathcal{I}}_{q}(i)$ and the mixed probability area $\hat{\mathcal{I}}_{m}(i)$.
The cells that are crossed by the circle centered at cell $(0,0)$ with radius $b$ make up $\hat{\mathcal{I}}_{m}(i)$.
All the cells in the mixed probability area need to be further divided into two parts: the shrunken part and the remain part. Therefore, there are four parts of the area that can be chosen as a candidate sample domain. The algorithm uses a weighted sample (Line~\ref{weighted_sample_1}) to determine which part to choose. The value $ind=1$ refers to the choice of the low probability part, and $ind=4$ refers to the choice of the high probability part. Both of these two cases use the uniform sample to choose $\hat{i}$ (Line~\ref{random_1} and Line~\ref{random_2}). When it comes to the case of border area containing $n$ cells, rather than using the uniform sample, the sampling probability needs to be proportional to the weighted area size $ws_j$ for each cell in the mixed probability area (Line~\ref{reset_area}). After that, the weighted sampling algorithm is used with identical weight for each cell to sample the result response cell (Line~\ref{weighted_sample_2}).

\emph{Time and Memory Complexity Analysis}.
Let $n$ be the number of users. Let $g$ be the grid number of the input domain.
The time complexity of GridAreaResponse algorithms is $O(g)$.
Let $m$ be the repeat times before converging in PostProcess algorithm.
The time complexity of PostProcess is $O(nk)$.
Therefore, the time complexity of \solutionB{} Processing Framework is $O(ng+nk)$.
The memory complexity of GridAreaResponse and PostProcess algorithms are $O(g)$.
Therefore, the memory complexity of \solutionB{} Processing Framework is $O(g)$.

\section{Experimental Evaluation}\label{Experiment}

In this section, we compare the accuracy of our mechanisms with state-of-the-art methods across various parameters. Our experiments aim to determine which method achieves the smallest Wasserstein distances between the real and recovered obfuscated density distributions under equivalent privacy levels or grid sizes.

\subsection{Experimental Setup}
\begin{table}[t!]
	\begin{center}
		{\small\scriptsize \vspace{-2ex}
			\caption{\small The range and data points of Data sets.} \label{dataset:parts}
			\begin{tabular}{|c|cc|cc|}\hline
				\multirow{2}{*}{} & \multicolumn{2}{c|}{Chicago Crimes}                                                                                            & \multicolumn{2}{c|}{NYC Green Taxies}                                                                                            \\ \cline{2-5}
				& \multicolumn{1}{c|}{Range}                                                                                        & Point size & \multicolumn{1}{c|}{Range}                                                                                          & Point size \\ \hline
				Part A            & \multicolumn{1}{c|}{\begin{tabular}[c]{@{}c@{}}{[}41.72$^{\circ}$,41,81$^{\circ}${]}\\      $\times${[}-87.68$^{\circ}$,-87.59$^{\circ}${]}\end{tabular}} & 216,595    & \multicolumn{1}{c|}{\begin{tabular}[c]{@{}c@{}}{[}40.65$^{\circ}$, 40.75$^{\circ}${]}\\      $\times${[}-73.84$^{\circ}$, -73.74$^{\circ}${]}\end{tabular}} & 10,561     \\ \hline
				Part B            & \multicolumn{1}{c|}{\begin{tabular}[c]{@{}c@{}}{[}41.82$^{\circ}$,41.91$^{\circ}${]}\\      $\times${[}-87.73$^{\circ}$,-87.64$^{\circ}${]}\end{tabular}} & 173,552    & \multicolumn{1}{c|}{\begin{tabular}[c]{@{}c@{}}{[}40.65$^{\circ}$,40.74$^{\circ}${]}\\      $\times${[}-73.95$^{\circ}$,-73.86$^{\circ}${]}\end{tabular}}   & 42,195     \\ \hline
				Part C            & \multicolumn{1}{c|}{\begin{tabular}[c]{@{}c@{}}{[}41.92$^{\circ}$,41.99$^{\circ}${]}\\      $\times${[}-87.77$^{\circ}$,-87.70$^{\circ}${]}\end{tabular}} & 69,068     & \multicolumn{1}{c|}{\begin{tabular}[c]{@{}c@{}}{[}40.82$^{\circ}$,40.89$^{\circ}${]}\\      $\times${[}-73.90$^{\circ}$,-73.83$^{\circ}${]}\end{tabular}}   & 9,186      \\ \hline
			\end{tabular}
		}
	\end{center}
\end{table}

\noindent
\textbf{Data sets.} We demonstrate all above mechanisms on the following 5 data sets.
The first two data sets are real, and the other three ones are synthetic.
\begin{table}[t!]\vspace{-3ex}
	\caption{\small Experimental Settings.} \label{tab:settings}
	\centering
		\resizebox{8.7cm}{!}{
			\begin{tabular}{l|l}
				{\bf \quad \quad Parameters} & {\bf \qquad \qquad \qquad Values} \\ \hline \hline
				the norm distance, $b$               & $\lfloor0.33\check{b}\rfloor$,  $\lfloor0.67\check{b}\rfloor$, $\mathbf{\check{b}}$, $\lfloor1.33\check{b}\rfloor$, $\lfloor1.67\check{b}\rfloor$\\\hline
				the discrete side length, $d$                   &  $1$, $2$, $3$, $4$, $\mathbf{5}$, $10$, $\underline{15}$, $20$\\\hline
				the privacy budget, $\epsilon$      & $0.7$, $1.4$, $2.1$, $2.8$, $\mathbf{3.5}$, $\underline{5}$, $6$, $7$, $8$, $9$ \\
				\hline
			\end{tabular}
		}\vspace{-1ex}
\end{table}

\emph{Chicago Crime}~\cite{ChicagoCrimes} (Crime): It is collected to monitor crime events in Chicago from January 1st to June 30th, 2022. It contains 105,453 data items, each representing a crime event. We extract events with a latitude range $[40^{\circ}, 42^{\circ}]$ and a longitude range $[-87.9^{\circ}, -87.54^{\circ}]$. Finally, we get $101,146$ items.

\emph{NYC Green Taxis}~\cite{NYCGreen2016} (NYC): It records green taxi order information in New York City in 2016. It contains $448,181$ order items. We only extract orders with a pickup location latitude and longitude within the ranges of $[40.55^{\circ}, 40.88^{\circ}]$ and $[-74.05^{\circ}, -73.73^{\circ}]$, respectively.
Finally, we get $446,110$ items.

The latitude and longitude of Chicago crime event locations and NYC green taxi pick-up locations are shown in Figure~\ref{subfig:ChicagoCrimesMarked} and~\ref{subfig:NYCGreenTaxiMarked}. We project the latitude and longitude onto a plane, which does not affect our experimental results.

To address the irregularity of these positions, we further extract three parts (marked as squares in Figure~\ref{subfig:ChicagoCrimesMarked} and~\ref{subfig:NYCGreenTaxiMarked}) of the two real data sets and estimate the distributions within each part. 
Table~\ref{dataset:parts} shows the number of data points in each area for Chicago Crimes and NYC Green Taxis. 
\revision{For the experiment on the full domain of the real data sets, please refer to Appendix~C.}

\emph{Normal(0, 0, 1, 1, 0.5)} (Normal): We generate $300,000$ 2-Dim data points in the plane, where each point follows a 2-Dim Gaussian distribution with $\mu_x = \mu_y = 0$, $\sigma_x^2 = \sigma_y^2 = 1$, and $\rho = 0.5$. The correlation between $x$ and $y$ is described by $\rho \in (-1, 1)$, where $-1 < \rho < 0$ indicates negative correlation, $0 < \rho < 1$ indicates positive correlation, and $\rho = 0$ indicates independence between $x$ and $y$. The points are all within the range $(-5,5)\times(-5,5)$. 
\revision{The range of these points is $[-4.44, 4.65]\times[-4.87, 4.58]$.}

\emph{Skew Zipf($\frac{1}{\ln{2}}$, 1,1)} (SZipf): We generate 100,000 2-Dim data points in the plane. Each dimension of each point follows a skew Zipf distribution with a CDF of $\frac{1/\ln{2}}{X+1}$. The points are limited to the range $[0,1)\times[0,1)$. We show one-tenth of the point distribution in Figure~\ref{subfig:ZipfDataset}.

\emph{Multi-center Normal} (MNormal): We generate $300,000$ 2-Dim data points in the plane. These points can be divided into three parts, each containing $100,000$ points. The parts follow a normal distribution with parameters \emph{Normal(0, 0, 1, 1, 0.5)}, \emph{Normal(0, 0, 1, 1, 0)}, and \emph{Normal(0, 0, 1, 1, -0.2)}, respectively. 
\revision{The range of these points is $[-4.25, 6.18]\times[-4.32, 6.44]$.}

\begin{figure}[t!]\centering
	\subfigure[][{\small Chicago Crimes}]{
		\scalebox{0.25}[0.25]{\includegraphics{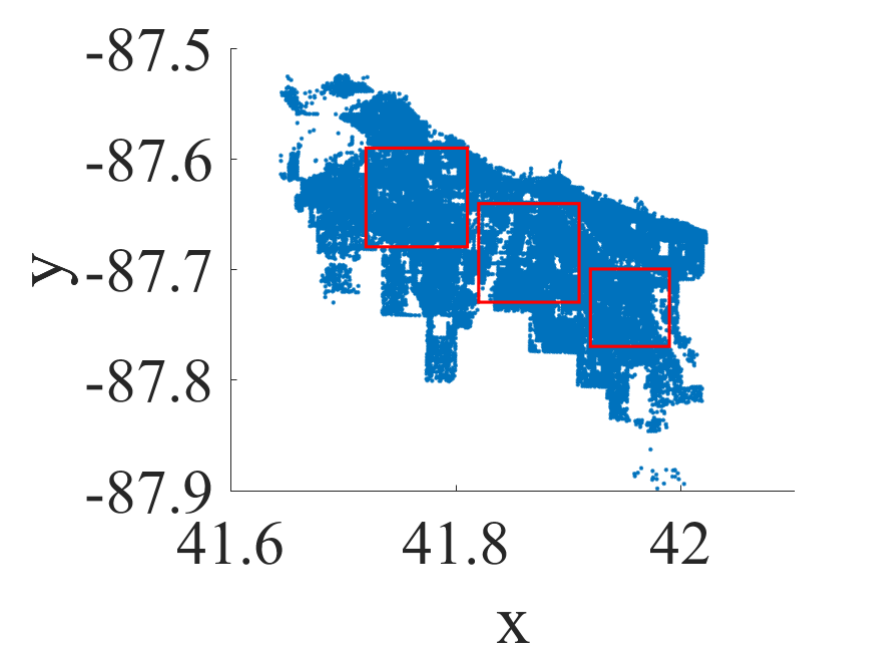}}
		\label{subfig:ChicagoCrimesMarked}}
	\subfigure[][{\small NYC Green Taxis}]{
		\scalebox{0.25}[0.25]{\includegraphics{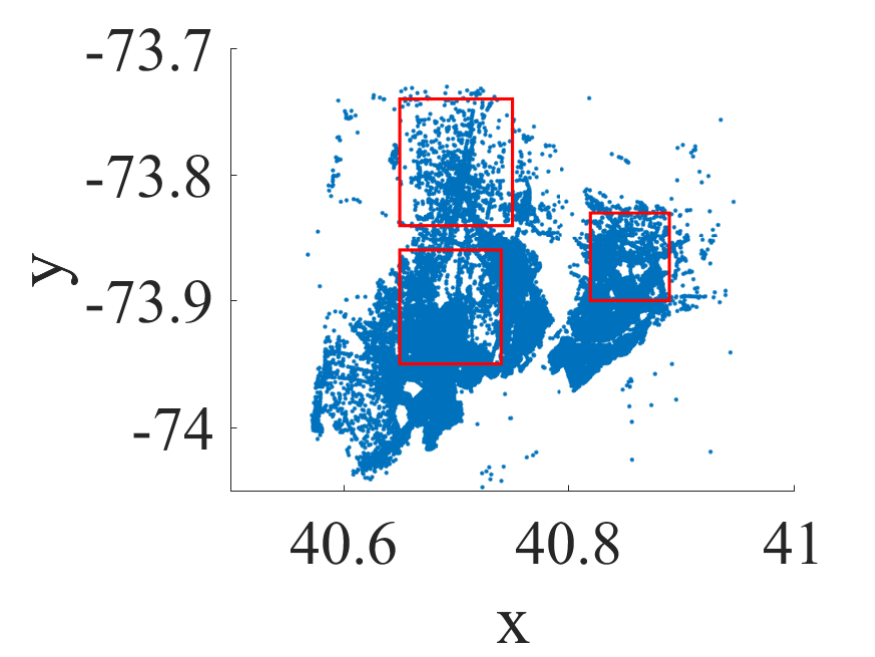}}
		\label{subfig:NYCGreenTaxiMarked}}
	\subfigure[][{\small Normal}]{
		\scalebox{0.18}[0.18]{\includegraphics{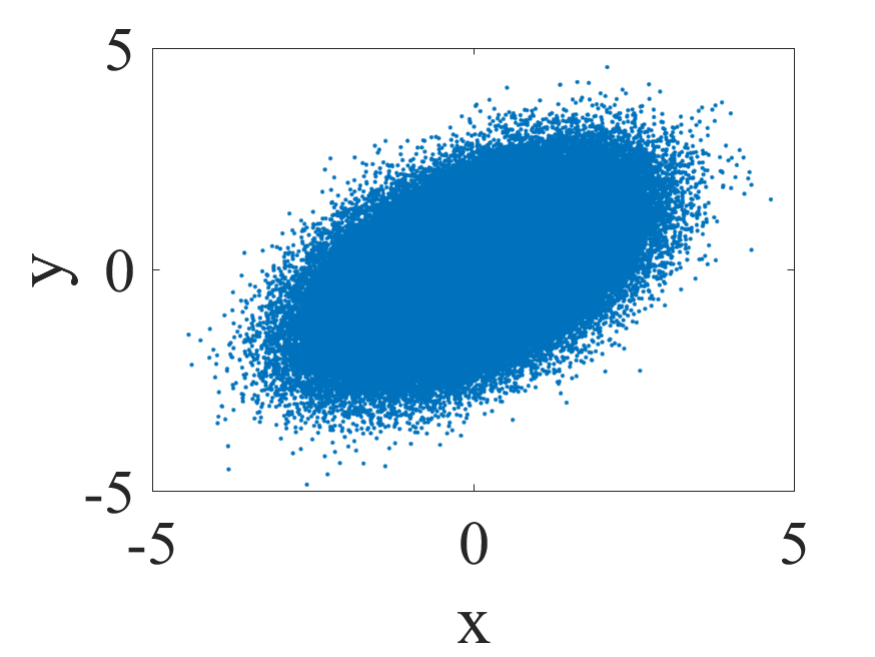}}
		\label{subfig:NormalDataset}}\hfill
	\subfigure[][{\small SZipf}]{
		\scalebox{0.18}[0.18]{\includegraphics{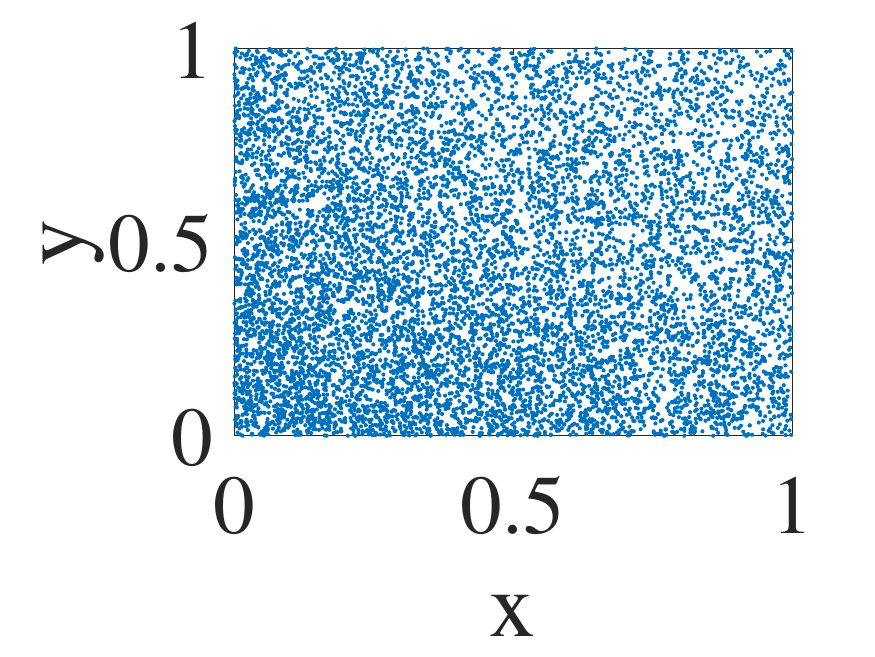}}
		\label{subfig:ZipfDataset}}\hfill
	\subfigure[][{\small MNormal}]{
		\scalebox{0.18}[0.18]{\includegraphics{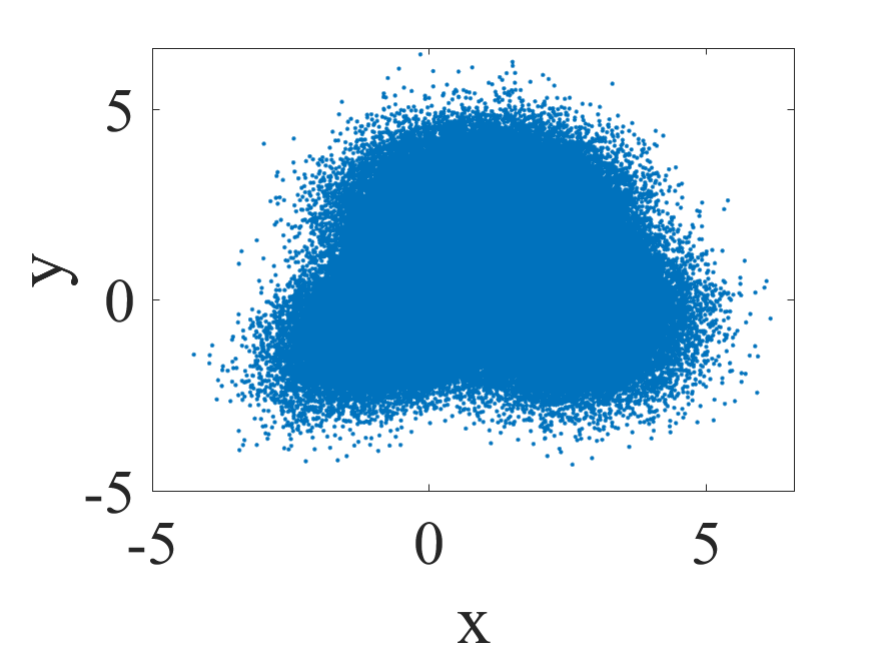}}
		\label{subfig:NormalDatasetMultipleCenters}}
	\caption{\small Data sets.}
	\label{fig:Data_sets_marked}
\end{figure}

\noindent
\textbf{Parameter Settings.}
We vary the norm distance $b$ from $0.33\check{b}$ to $1.67\check{b}$, where $\check{b}$ is the best choice of $b$.
We define the discrete side length $d$ as $L/g$, where $L$ is the side length of the input data set area, $g$ is the side length of a grid cell.
We change $d$ from $1$ to $20$ and $\epsilon$ from $0.7$ to $9$. The parameter settings are shown in Table~\ref{tab:settings} with default values marked as bold or underlined.

We conduct our experiment in Java on an Intel(R) Xeon(R) Silver 4210R CPU @ 2.4GHz with 128 GB RAM. We run our experiment $10$ times and use the average result as our final result.

\subsection{Mechanisms and Measures}\vspace{-1ex}
We compare our mechanisms \solutionD{} and \solutionB{} with \solutionCMPmTotalName{} (\solutionCMPm{})~\cite{DBLP:journals/pvldb/Yang0L0S20} and \solutionCMPgTotalName{} (\solutionCMPg{})~\cite{DBLP:conf/infocom/WangNWXYH17}.
We also compare our \solutionB{} with its version without shrinkage (i.e., \solutionBNSTotalName{}, \solutionBNS{}).
\revision{
	Furthermore, we evaluate our mechanism against recent research on private trajectory estimation: \solutionCMPTrATotalName{} (\solutionCMPTrA{})~\cite{DBLP:journals/pvldb/DuHZFCZG23} and \solutionCMPTrBTotalName{} (\solutionCMPTrB{})~\cite{DBLP:journals/pvldb/Zhang000H23}. 
	For detailed comparisons with the trajectory mechanisms, please refer to Appendix~D.}

\revision{Actually, it is hard to make \solutionB{} and \solutionCMPg{} comparable because \solutionB{} and \solutionCMPg{} are based on different privacy definition (\solutionB{} is based on LDP while \solutionCMPg{} is based on Geo-I).
	However, the definitions of both LDP and Geo-I are based on \textit{privacy loss}~\cite{DBLP:conf/icalp/Dwork06} which is a more fundamental privacy definition.
	Thus, given the same input domain, we can set the same privacy loss in \solutionB{} and \solutionCMPg{} and compare the utilities between these two mechanisms.
}

\revision{
	Givena a fixed privacy budget $\epsilon$, \solutionB{} achieves $\epsilon$-LDP which means for any $\vectorfont{v}\in\algvar{D}$, the privacy loss of randomizing $\vectorfont{v}$ to $\tilde{\vectorfont{v}}\in\tilde{\algvar{D}}$ is $\epsilon$. 
	However, \solutionCMPg{} achieves $\epsilon$-Geo-I which means for any $\vectorfont{v}\in\algvar{D}$, the privacy loss of randomizing $\vectorfont{v}$ to $\tilde{\vectorfont{v}}\in\tilde{\algvar{D}}$ is $\epsilon\cdot dis(\vectorfont{v},\tilde{\vectorfont{v}})$ where
	$dis(\vectorfont{v},\tilde{\vectorfont{v}})$ is the Euclidean distance between $\vectorfont{v}$ and $\tilde{\vectorfont{v}}$ (i.e., $dis(\vectorfont{v},\tilde{\vectorfont{v}})=\|\vectorfont{v}-\tilde{\vectorfont{v}}\|_2$). 
	We find if $dis(\vectorfont{v},\tilde{\vectorfont{v}})<1$, \solutionCMPg{} provides higher level privacy protection than \solutionB{}, if $dis(\vectorfont{v},\tilde{\vectorfont{v}})>1$, \solutionB{} provides higher level privacy protection than \solutionCMPg{}. 
}

Next, we introduce the definition of an enhanced loss privacy called Local Privacy (LP)~\cite{DBLP:conf/ccs/ShokriTTHB12} to make \solutionB{} and \solutionCMPg{} comparable on both utility and privacy.
Let $\mathcal{I}$ be the input domain, and $\mathcal{T}$ be the output domain. Let $\mathcal{\hat{I}}$ be the inferred domain of $\mathcal{I}$. Based on the unbiased results of \solutionB{} and \solutionCMPg{}, we have $\mathcal{\hat{I}}=\mathcal{I}$. Thus, the local privacy is defined as:

\vspace{-1ex}
\begin{equation}\label{Def_LP}
	{\scriptsize
		\begin{aligned}
			LP  &= \sum_{i'\in\mathcal{T}}LP_{\mathcal{I}}(i') = \sum_{i,\hat{i}\in\mathcal{I}}LP_{\mathcal{T}}(i,\hat{i})\\
			&= \sum_{i,\hat{i}\in\mathcal{I},i'\in\mathcal{T}}\Pr(i)\Pr(i'|i)\Pr(\hat{i}|i')d_p(\hat{i},i)
		\end{aligned}
	}
\end{equation}
In Equation~\eqref{Def_LP}, $\Pr(i)$ is the probability of being at location $i$ when accessing the location-based service. $\Pr(i'|i)$ is the location obfuscation function implemented by privacy mechanisms, which is defined as the probability of replacing $i$ with $i'$. $\Pr(\hat{i}|i')$ is the adversary attack function, which is defined as the probability of estimating $\hat{i}$ as the user's actual location if $i'$ is observed. $d_p(\hat{i},i)$ is the privacy of the user at location $i$, given that the adversary's estimation is $\hat{i}$. This is defined as the distance between  $\hat{i}$ and $i$, using 2-norm distance.

Suppose the truthful points obey a uniform distribution. Then, we have $\Pr(i)=\frac{1}{n}$, where $n$ is the number of truthful locations. According to the unbiased estimation for LDP, we have $\Pr(\hat{i})=\Pr(i)=\frac{1}{n}$ and $\Pr(i'|i)=\Pr(i'|\hat{i})$ when $i=\hat{i}$. Therefore, for \solutionB{} and \solutionCMPg{}, we have:
\begin{equation}\label{Def_Sub_LP_DP}
	{\scriptsize
		\begin{aligned}
			LP_{\mathcal{I}}(i')  &= \frac{1}{n}\sum_{i,\hat{i}\in\mathcal{I}}\Pr(i'|i)\Pr(\hat{i}|i')d_p(\hat{i},i)\\
			&= \frac{1}{n}\sum_{i,\hat{i}\in\mathcal{I}}\frac{\Pr(i'|i)\cdot\Pr(i'|\hat{i})\Pr(\hat{i})\cdot d_p(\hat{i},i) }{\sum_{\hat{i}_j\in\mathcal{I}}\Pr(i'|\hat{i}_j)\Pr(\hat{i}_j)}\\
			&= \frac{1}{n}\sum_{i,\hat{i}\in\mathcal{I}}\frac{\Pr(i'|i)\Pr(i'|\hat{i})d_p(\hat{i},i)}{\sum_{\hat{i}_j\in\mathcal{I}}\Pr(i'|\hat{i}_j)}\\
			&= \frac{1}{n\sum_{\hat{i}_j\in\mathcal{I}}\Pr(i'|\hat{i}_j)}\sum_{i,\hat{i}\in\mathcal{I}}\Pr(i'|i)\Pr(i'|\hat{i})d_p(\hat{i},i)
		\end{aligned}
	}
\end{equation}
To calculate the value of $LP$ for \solutionB{}, traverse all $i'\in\mathcal{I'}$, where $\mathcal{I'}$ is the output domain of \solutionB{}. For \solutionCMPg{}, calculate the value of $LP$ by traversing all $i'\in\mathcal{S}_k$, where $\mathcal{S}_k$ is the $k$-subset domain of $\mathcal{I}$.
In our experiment, we set $\epsilon$ in \solutionB{} as the values in Table~\ref{tab:settings} and calculate the corresponding $\epsilon'$ in \solutionCMPg{} with their local privacy equal.
We compare the accuracy of the mechanisms above, described by the 2-Dim Wasserstein distance.
Since there is no closed-form solution for high-dimensional Wasserstein distance~\cite{panaretos2019statistical}, we calculate the 2-Dim Wasserstein distance using Linear Programming.

Note that in our models, all distributions are finite and the input and output domains are divided into grids. Therefore, in our models, the Wasserstein distance involves multidimensional optimization in finite spaces. We can formalize this as follows:

Suppose $\mathcal{D}=\{X_1,...,X_m\}$ and $\tilde{\mathcal{D}}=\{Y_1,...,Y_n\}$.
Let $M=\{\|X_i-Y_j\|_p^p\}_{i\in[m],j\in[n]}$ be the matrix where each element is the $p$ norm to the $p$ for the pair of $X$ and $Y$.
Let $R=\{\Pr[X_i,Y_j]\}_{i\in[m],j\in[n]}$ be the matrix where each element is the joint probability of $X_i$ and $Y_j$.
Then,
\begin{equation}\label{discrete_finite_wasserstein}
	{\scriptsize
		\begin{aligned}
			W_2^p(\mathcal{D},\tilde{\mathcal{D}})=\textrm{min}\;\; & \|M\bigodot R\|_F &\\
			s.t.\;\; & \textrm{sum}_r(R)=\Pr[\mathcal{D}] \\
			& \textrm{sum}_c(R)=\Pr[\tilde{\mathcal{D}}] \\
			& R_{i,j}\geq 0;\forall i\in[m]\;\textrm{and}\;j\in[n]
		\end{aligned}
	}
\end{equation}
where $\bigodot$ is the Hadmard production, $\|\cdot\|_F$ is the Frobenius norm.

\begin{figure}[t!]
	\centering
	\scalebox{0.63}[0.63]{\includegraphics{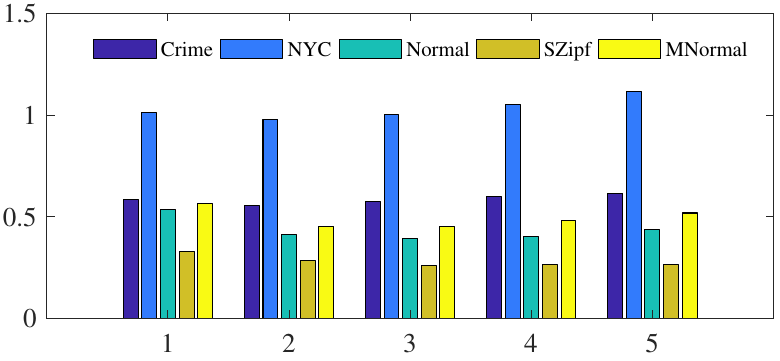}}
	\caption{Wasserstein distances with $b$ varied.}\label{fig:alter_b}
\end{figure}

\begin{figure*}[t!]\centering
	\subfigure{
		\scalebox{0.35}[0.35]{\includegraphics{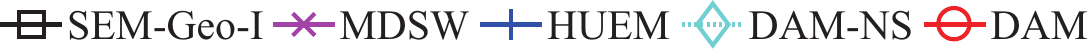}}}\hfill\\
	\addtocounter{subfigure}{-1}
	\subfigure[][{\small Crime}]{
		\scalebox{0.24}[0.24]{\includegraphics{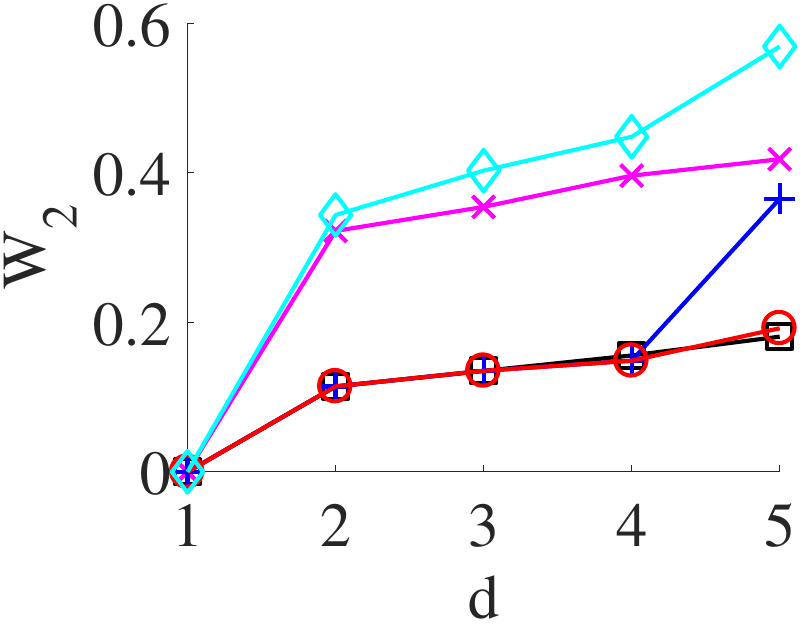}}
		\label{subfig:alter_d_crime}}\hfill	
	\subfigure[][{\small NYC}]{
		\scalebox{0.24}[0.24]{\includegraphics{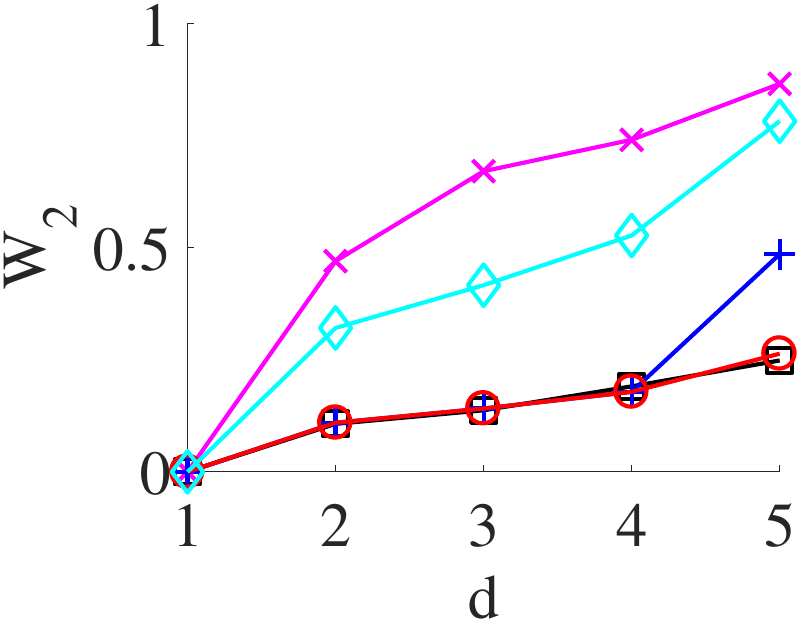}}
		\label{subfig:alter_d_nyc}}\hfill
	\subfigure[][{\small Normal}]{
		\scalebox{0.24}[0.24]{\includegraphics{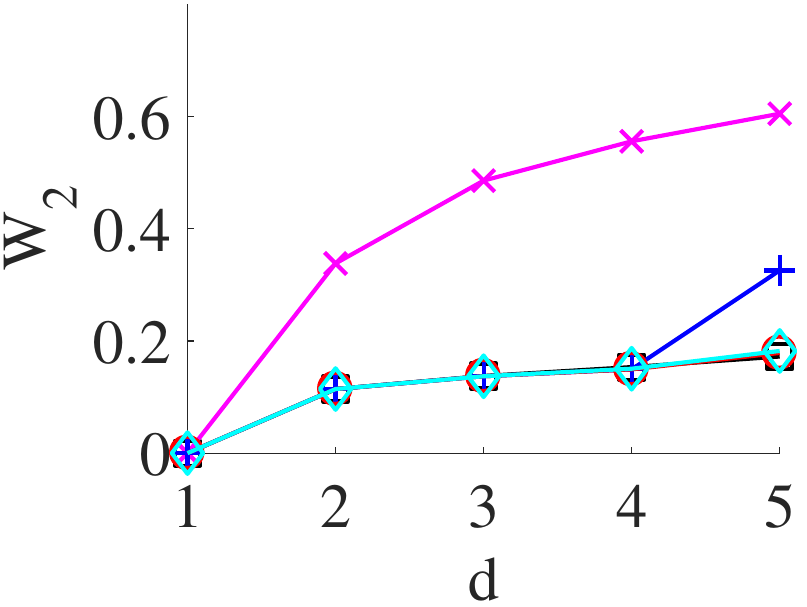}}
		\label{subfig:alter_d_normal}}\hfill 
	\subfigure[][{\small SZipf}]{
		\scalebox{0.24}[0.24]{\includegraphics{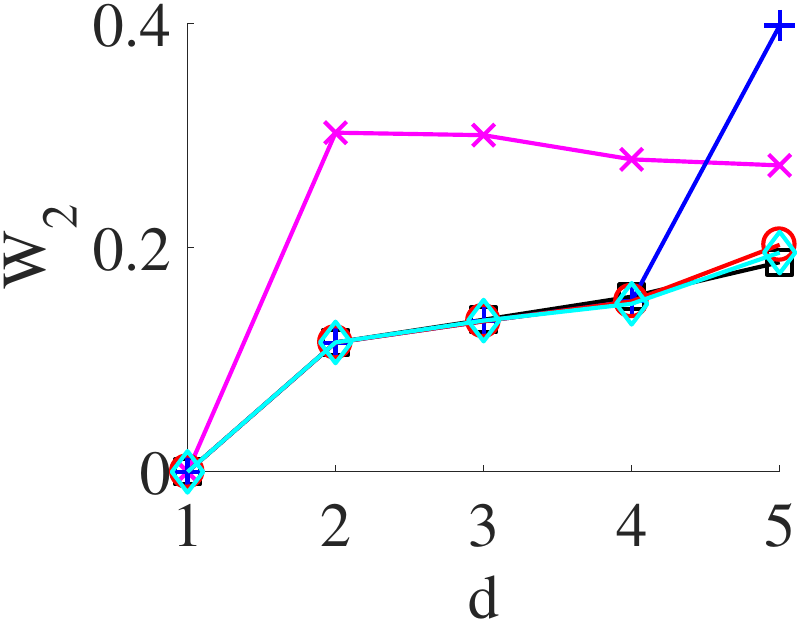}}
		\label{subfig:alter_d_zipf}}\hfill
	\subfigure[][{\small MNormal}]{
		\scalebox{0.24}[0.24]{\includegraphics{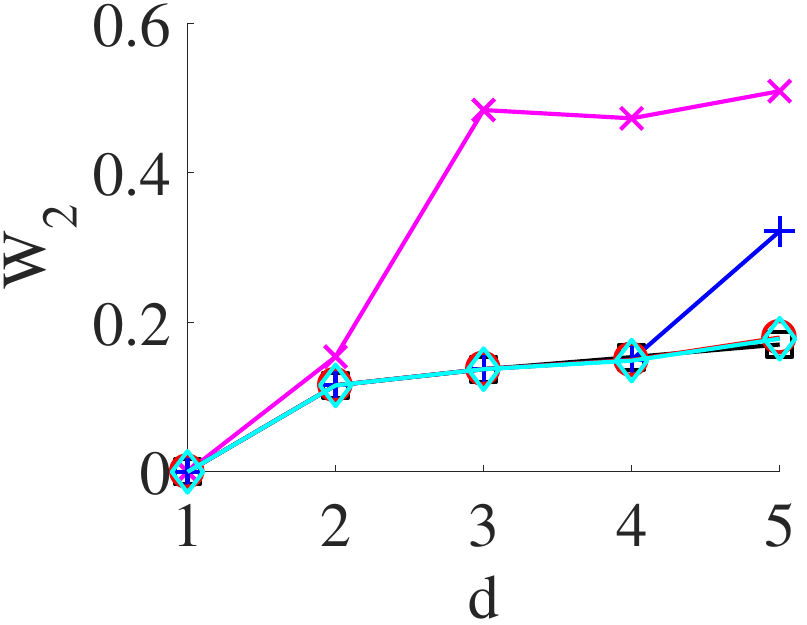}}
		\label{subfig:alter_d_normal_multiple_centers}}\newline\vspace{-1ex}
	
	\subfigure[][{\small Crime}]{
		\scalebox{0.24}[0.24]{\includegraphics{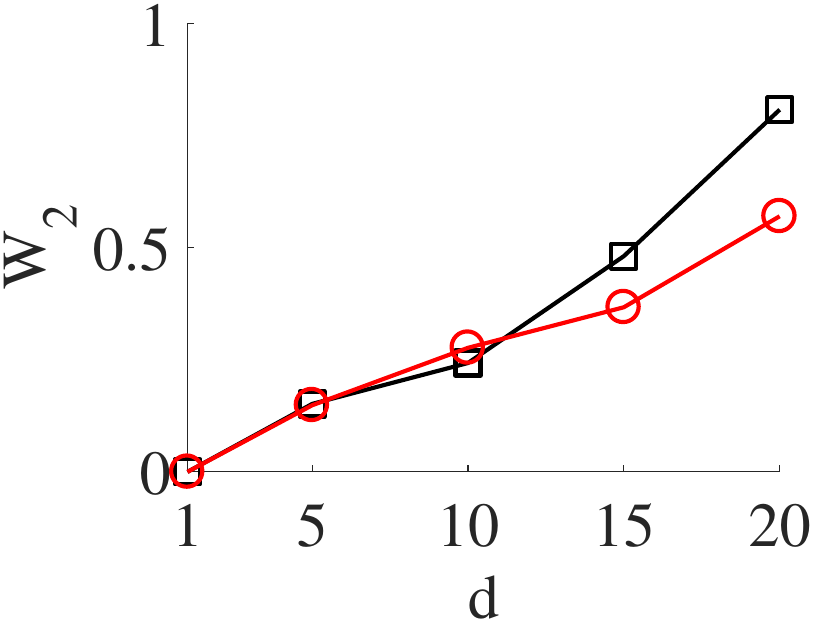}}
		\label{subfig:alter_g_crime_extended}}\hfill 
	\subfigure[][{\small NYC}]{
		\scalebox{0.24}[0.24]{\includegraphics{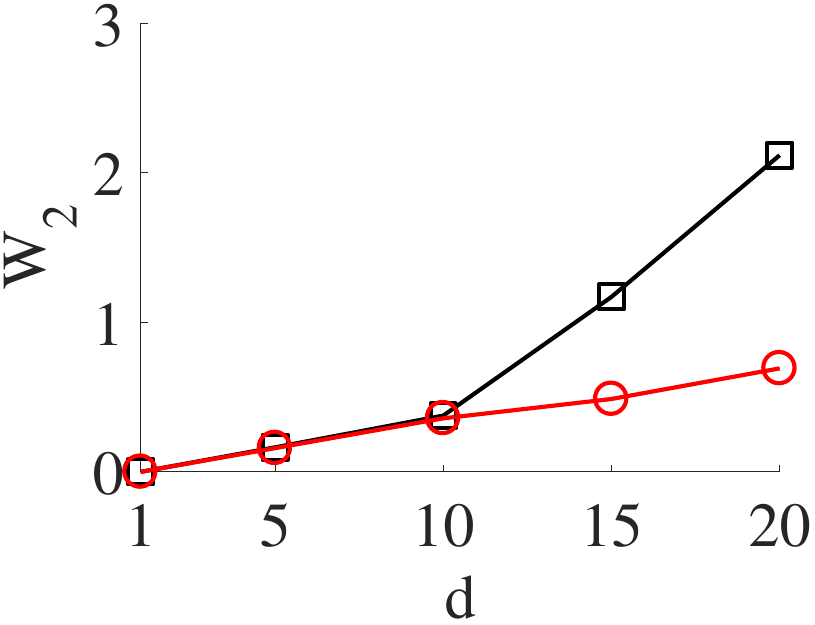}}
		\label{subfig:alter_g_nyc_extended}}\hfill 
	\subfigure[][{\small Normal}]{
		\scalebox{0.24}[0.24]{\includegraphics{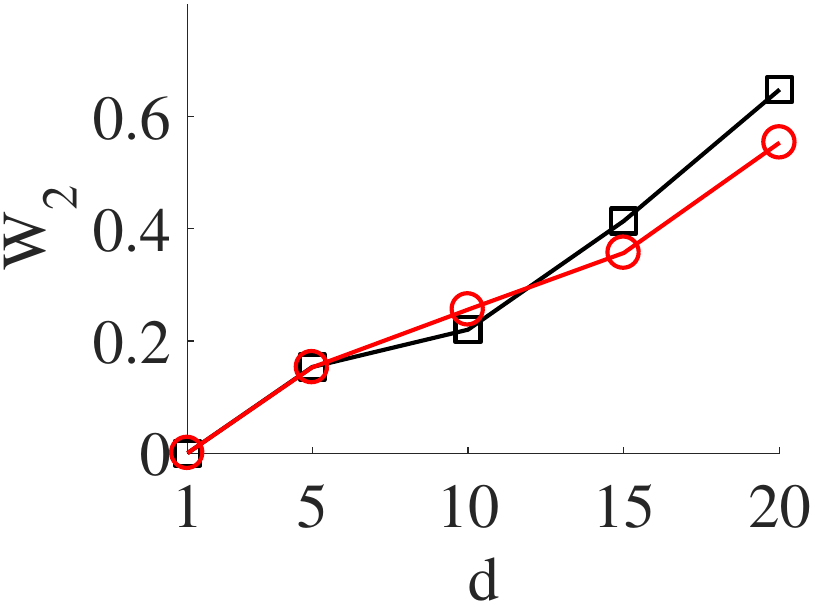}}
		\label{subfig:alter_g_normal_extended}}\hfill
	\subfigure[][{\small SZipf}]{
		\scalebox{0.24}[0.24]{\includegraphics{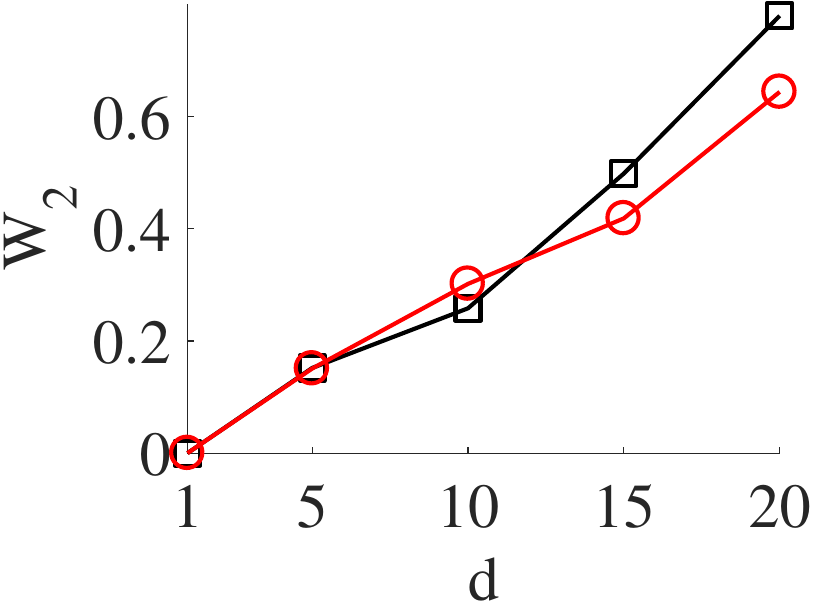}}
		\label{subfig:alter_g_zipf_extended}}\hfill 
	\subfigure[][{\small MNormal}]{
		\scalebox{0.24}[0.24]{\includegraphics{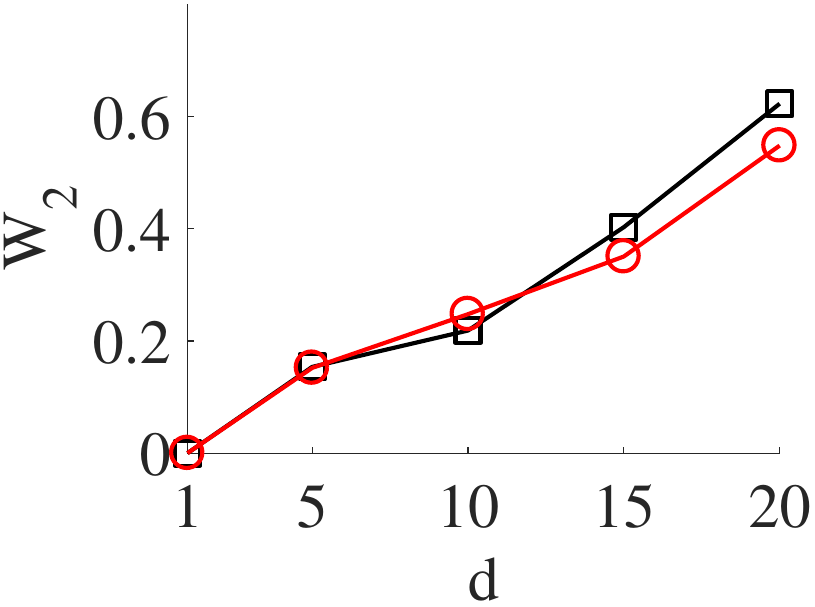}}
		\label{subfig:alter_g_normal_multiple_centers_extended}}\newline \vspace{-1ex}
	
	\subfigure[][{\small Crime}]{
		\scalebox{0.24}[0.24]{\includegraphics{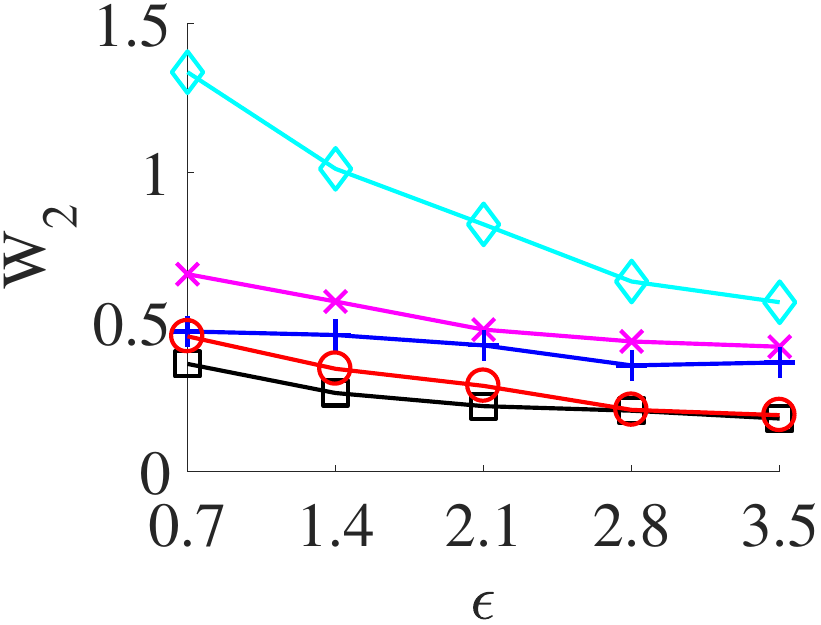}}
		\label{subfig:alter_e_crime}}\hfill 
	\subfigure[][{\small NYC}]{
		\scalebox{0.24}[0.24]{\includegraphics{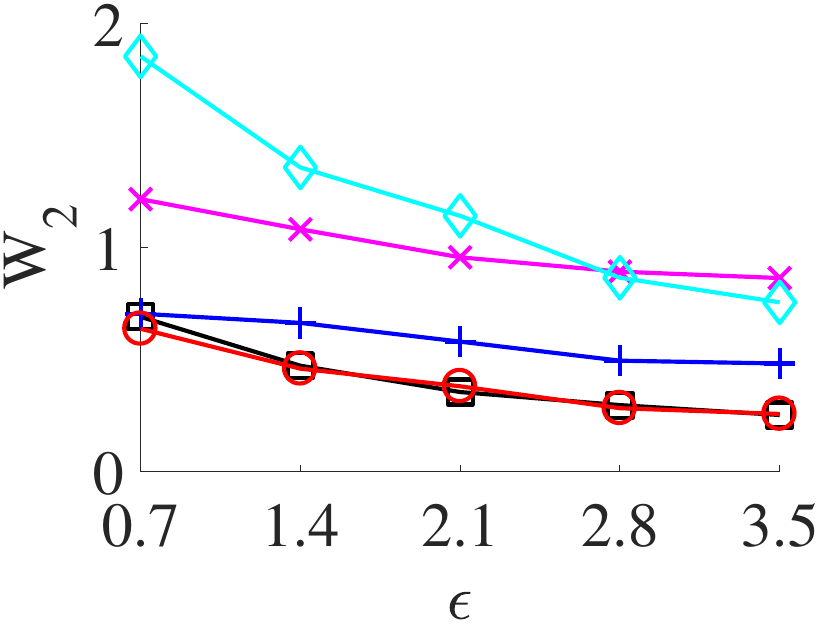}}
		\label{subfig:alter_e_nyc}}\vspace{-2ex}\hfill
	\subfigure[][{\small Normal}]{
		\scalebox{0.24}[0.24]{\includegraphics{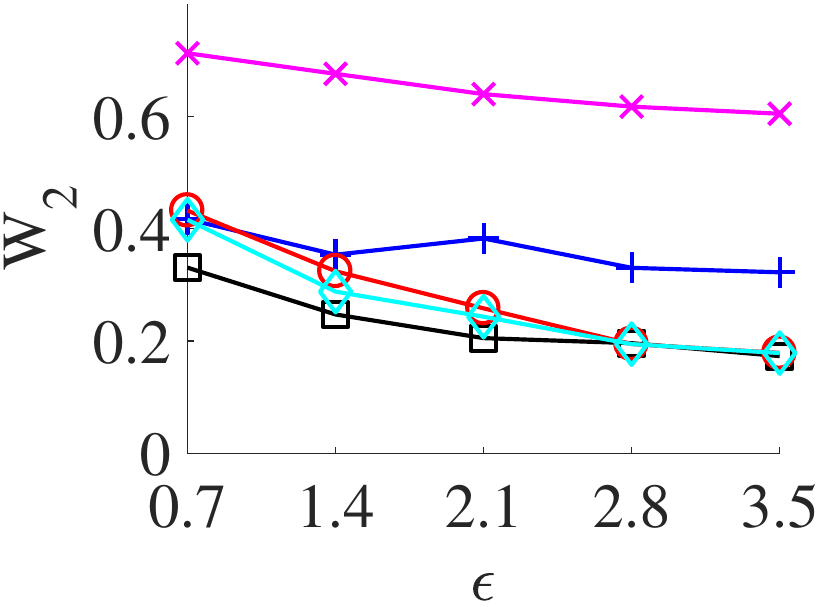}}
		\label{subfig:alter_e_normal}}\hfill
	\subfigure[][{\small SZipf}]{
		\scalebox{0.24}[0.24]{\includegraphics{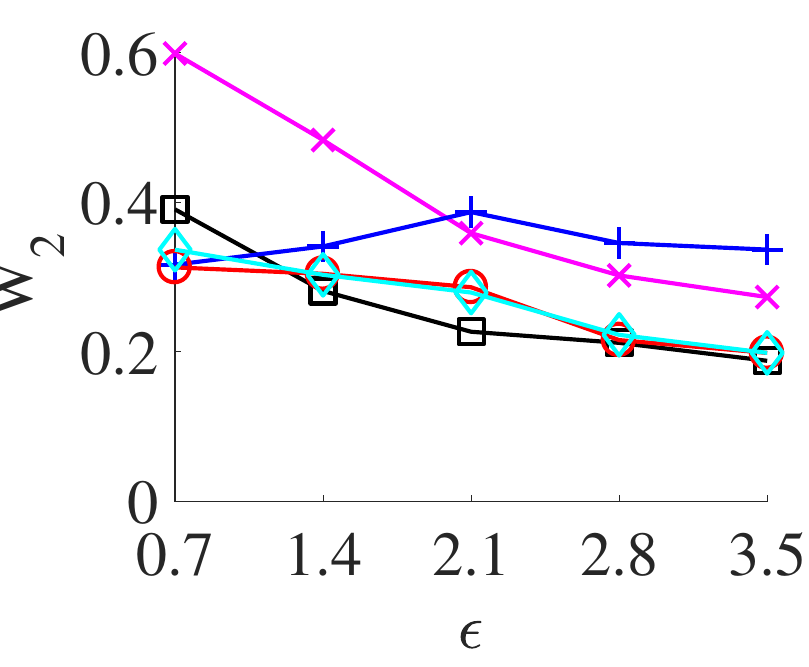}}
		\label{subfig:alter_e_zipf}}\hfill 
	\subfigure[][{\small MNormal}]{
		\scalebox{0.24}[0.24]{\includegraphics{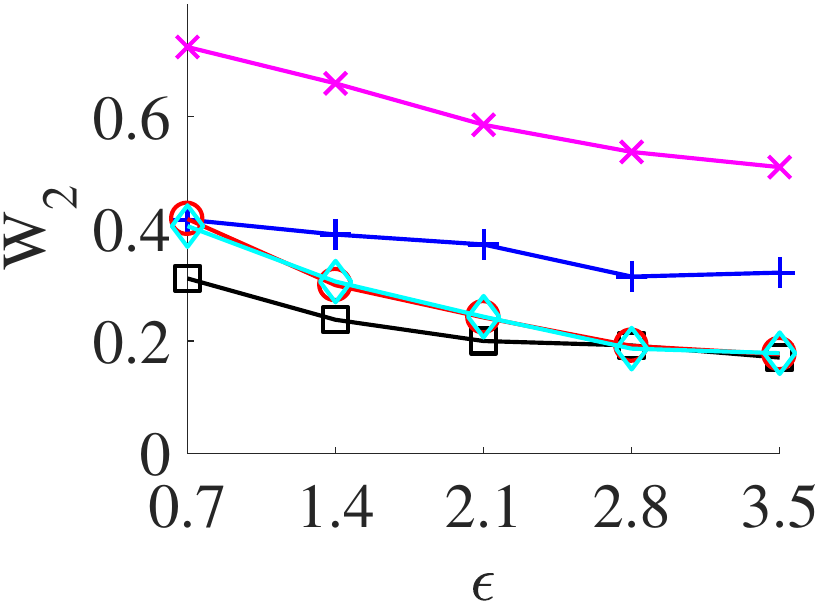}}
		\label{subfig:alter_e_normal_multiple_centers}}\newline \vspace{-1ex}

	\subfigure[][{\small Crime}]{
		\scalebox{0.24}[0.24]{\includegraphics{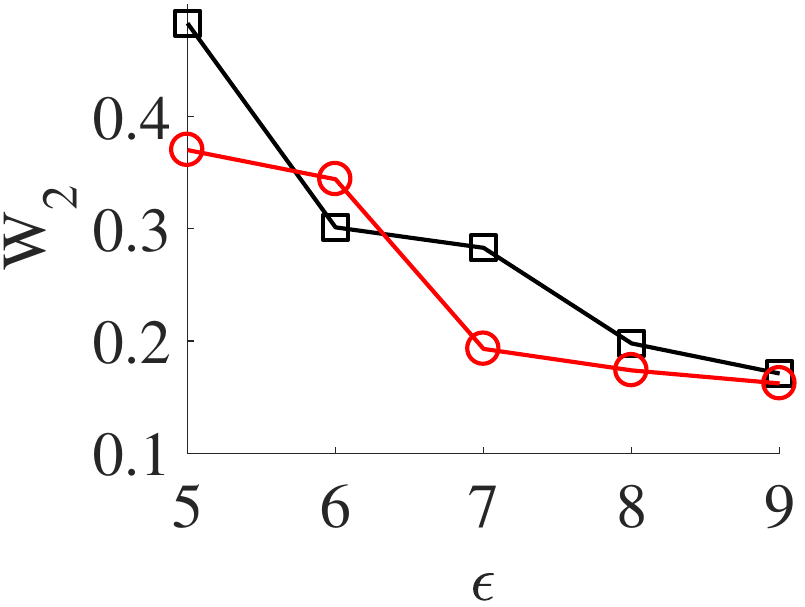}}
		\label{subfig:alter_e_crime_extended}}\hfill 
	\subfigure[][{\small NYC}]{\hspace{-2.1mm}
		\scalebox{0.24}[0.24]{\includegraphics{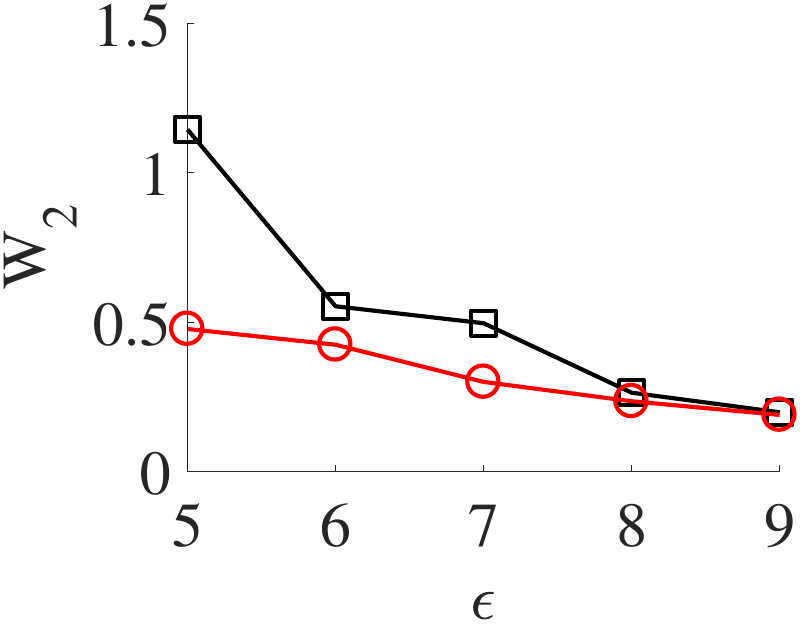}}
		\label{subfig:alter_e_nyc_extended}}\hfill 
	\subfigure[][{\small Normal}]{
		\scalebox{0.24}[0.24]{\includegraphics{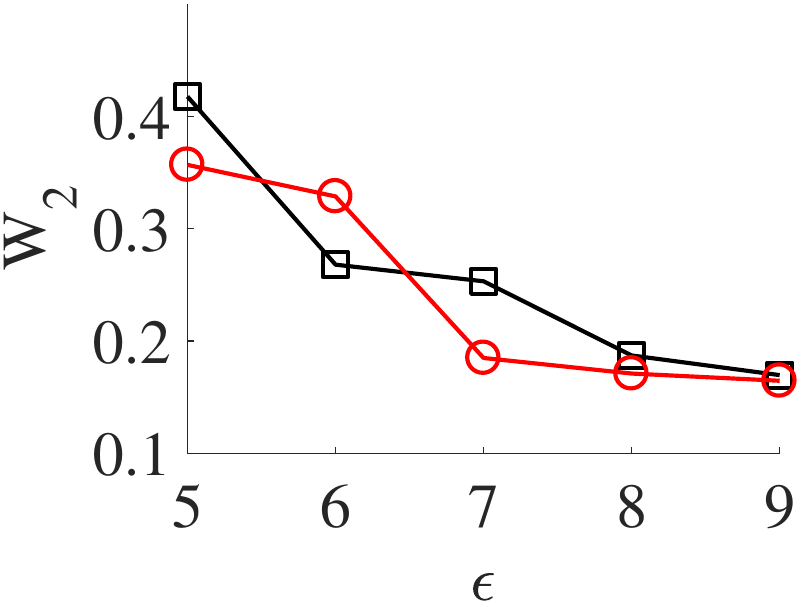}}
		\label{subfig:alter_e_normal_extended}} \hfill 
	\subfigure[][{\small SZipf}]{
		\scalebox{0.24}[0.24]{\includegraphics{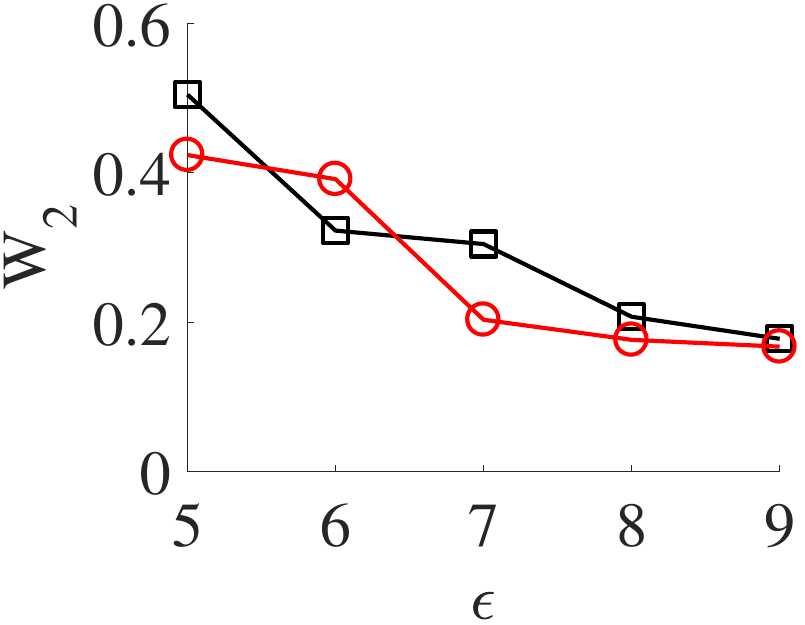}}
		\label{subfig:alter_e_zipf_extended}} \hfill 
	\subfigure[][{\small MNormal}]{
		\scalebox{0.24}[0.24]{\includegraphics{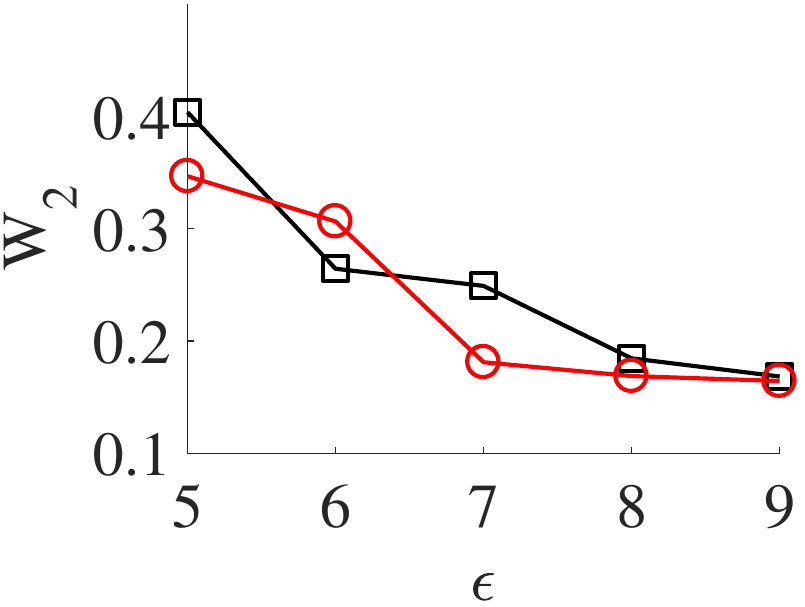}}
		\label{subfig:alter_e_normal_multiple_centers_extended}}\hfill
	\caption{\small Wasserstein distances with $d$ or $\epsilon$ varied.}\vspace{-2ex}
	\label{fig:alter_d_e}
\end{figure*}

\subsection{Experimental Results}\label{exp:result}
We compared the estimation results of methods for distribution estimation. We used the 2-norm Wasserstein distance $W_2=\sqrt{W_2^2}$ between the recovered and actual density distributions in a discrete situation. We compared the results on different data sets with varying values for the norm distance $b$, discrete side length $d$, and privacy budget $\epsilon$. For the divided data sets (Chicago Crime and NYC Green Taxis), we used the mean value of each part's $W_2$ as the estimation results.
\subsubsection{Norm Distance $b$}
Figure~\ref{fig:alter_b} shows the impact of the norm distance $b$ on $W_2$. The value of $b$ varies from $0.33\check{b}$ to $1.66\check{b}$, where $\check{b}$ is the optimal value of $b$ in a discrete situation. We set the default discrete side length $d=15$ and the default privacy budget $\epsilon=3.5$. In this case, the optimal norm distance $\check{b}$ is approximately equal to 3.
We can see that, in both the real and synthetic data sets, $W_2$ first decreases and then increases. When $b$ is approximately equal to $\check{b}$, $W_2$ achieves its minimum value, which is consistent with our analysis in section~\ref{subsection:b_analysis}. However, $W_2$ is not minimal when $b=\check{b}$ in some data sets due to the error from grid division.

\subsubsection{Discrete side length $d$}\label{exp:sub_side_d}
As the grid cells' side length $g$ decreases, $d$ increases with fixed $L$.
\revision{Here we vary $d$ from $1$ to $5$.}
As shown in Figure~\ref{subfig:alter_d_crime} to~\ref{subfig:alter_d_normal_multiple_centers}, $W_2$ increases with the increase of $d$ in most mechanisms except for \solutionCMPm{} on data set SZipf. This is because, as $d$ increases, the number of grid cells becomes larger, and the gap between the recovered and actual density distributions widens.
\revision{\solutionD{} is better than \solutionCMPm{} in most cases and \solutionB{} is always better than \solutionCMPm{}.
	It is because \solutionD{} and \solutionB{} retain the ordinal relationship of $x$-coordinate, $y$-coordinate and $(x,y)$-union among all points, however, \solutionCMPm{} only retains ordinal relationship of $x$-coordinate and $y$-coordinate.
	On the other hand, it indicates that considering the relationship between each dimension is useful. 
}
Additionally, our \solutionB{} is superior to \solutionD{}, which demonstrates its effectiveness in $2$-Dim area mechanisms. Moreover, \solutionB{} outperforms \solutionBNS{} in real data sets. This is because both data sets are road network data sets where the shrunken method has more advantages over the non-shrunken method.

\revision{The difference between \solutionCMPg{} and our \solutionB{} is small. That is because when $d$ is small, the side length of a grid cell is large, and the shape of discrete \solutionB{} is very different from that of continue \solutionB{}. This makes the discrete \solutionB{} performs worse than \solutionCMPg{}.}
We further compare these two mechanism under larger $d$.
However, as $d$ becomes larger, it becomes more difficult to calculate the Wasserstein distance within an acceptable time. Therefore, we use Sinkhorn's algorithm~\cite{DBLP:conf/nips/Cuturi13} to approximately calculate the Wasserstein distance.
We vary $d$ from $1$ to $20$ and set $\epsilon$ as $5$ to further compare \solutionCMPg{} and our \solutionB{} from Figure~\ref{subfig:alter_g_crime_extended} to Figure~\ref{subfig:alter_g_normal_multiple_centers_extended}.
We can see, as $d$ increases, the Wasserstein distances of both \solutionCMPg{} and \solutionB{} also increase. Our \solutionB{} is better than \solutionCMPg{} when $d$ is larger.
\revision{This occurs because as $d$ increases, the discrete \solutionB{} gradually approaches to the continuous \solutionB{} in shape, and its error from grid diminishes.
	Consequently, the advantages of \solutionB{} become increasingly apparent.}

\subsubsection{Privacy Budget $\epsilon$}
The privacy budget $\epsilon$ not only affects the reported probability, but also influences the norm distance $b$. Figures~\ref{subfig:alter_e_crime} to~\ref{subfig:alter_e_normal_multiple_centers} show how the value of $W_2$ changes with the change in $\epsilon$. As $\epsilon$ increases, $W_2$ decreases slightly. This is because a large $\epsilon$ leads to a high probability report of the real data set, which makes the recovered density closed to the actual one. Our solution, \solutionB{}, is always better than \solutionCMPm{}. As $\epsilon$ increases, \solutionB{} achieves better estimation than \solutionD{}. In addition, 
\revision{\solutionCMPg{} slightly outperforms our \solutionB{} when $\epsilon$ is small.
	That occurs because a small $\epsilon$ causes the high probability domain to cover the input domain, making the differences between input cells less distinguishable. This will diminishes \solutionB{}'s advantage. 
	For \solutionCMPg{}, as $\epsilon$ decreases, its output domain space complexity grows by $n^{k}=O(n^{n/e^{\epsilon}})$ ($n=d^2$), exceeding our experiments' tolerance range for large $d$. To keep \solutionCMPg{} feasible, we must set $d$ to a small value when $\epsilon$ is small. However, a small $d$ further distorts the shape of discrete \solutionB{} (see the \textit{Discrete side length $d$} analysis in Subsection~\ref{exp:sub_side_d}), leading to its poor performance.} 
We further compare our \solutionB{} to \solutionCMPg{} under larger $\epsilon$ by Sinkhorn's algorithm~\cite{DBLP:conf/nips/Cuturi13} in Figure~\ref{subfig:alter_e_crime_extended} to~\ref{subfig:alter_e_normal_multiple_centers_extended}.
We set the $d$ as $15$ and vary $\epsilon$ from $5$ to $9$. In both \solutionB{} and \solutionCMPg{}, $W_2$ decreases as $\epsilon$ increases. As $\epsilon$ becomes larger, $W_2$ of the two mechanisms approach $0$, because a larger $\epsilon$ leads to higher accuracy for private distribution estimation. \solutionB{} outperforms \solutionCMPg{} when $\epsilon$ is large.

\section{Conclusion}
In this paper, we study 
\problemDefineTotalName{}.
We propose a general framework called \solutionGeneralTotalName{} (\solutionGeneralName{}) and a simple mechanism called \solutionDTotalName{} (\solutionD{}).
We further propose the optimal solution \solutionB{} among all \solutionGeneralName{}, leveraging the ordinal relationship between each data point to improve the accuracy of private distribution estimation.
Besides, we propose a shrinkage method to improve the estimation accuracy in the grid circumstance.
What's more, 
we compare our \solutionB{} with the state-of-the-art mechanisms to demonstrate that \solutionB{} can achieve the minimum Wasserstein distance among all mechanisms.

\section*{Acknowledgment}
Peng Cheng’s work is supported by the National Natural Science Foundation of China under Grant No. 62102149. Libin Zheng’s work is supported by  National Natural Science Foundation of China No. 62472455 and U22B2060. Xiang Lian’s work is supported by Natural Science Foundation (NSF CCF-2217104). Lei Chen’s work is partially supported by National Key Research and Development Program of China Grant No. 2023YFF0725100, National Science Foundation of China (NSFC) under Grant No. U22B2060, the Hong Kong RGC GRF Project 16213620, RIF Project R6020-19, AOE Project AoE/E-603/18, Theme-based project TRS T41-603/20R, CRF Project C2004-21G, Guangdong Province Science and Technology Plan Project 2023A0505030011, Hong Kong ITC ITF grants MHX/078/21 and PRP/004/22FX, Zhujiang scholar program 2021JC02X170, Microsoft Research Asia Collaborative Research Grant, HKUST-Webank joint research lab and HKUST(GZ)-Chuanglin Graph Data Joint Lab.
Xuemin Lin’s work is supported by NSFC U2241211 and U20B2046. Corresponding Author: Peng Cheng.

\bgroup\small
\bibliographystyle{IEEEtran}
\bibliography{reference}

\begin{thebibliography}{10}
\providecommand{\url}[1]{#1}
\csname url@samestyle\endcsname
\providecommand{\newblock}{\relax}
\providecommand{\bibinfo}[2]{#2}
\providecommand{\BIBentrySTDinterwordspacing}{\spaceskip=0pt\relax}
\providecommand{\BIBentryALTinterwordstretchfactor}{4}
\providecommand{\BIBentryALTinterwordspacing}{\spaceskip=\fontdimen2\font plus
\BIBentryALTinterwordstretchfactor\fontdimen3\font minus
  \fontdimen4\font\relax}
\providecommand{\BIBforeignlanguage}[2]{{%
\expandafter\ifx\csname l@#1\endcsname\relax
\typeout{** WARNING: IEEEtran.bst: No hyphenation pattern has been}%
\typeout{** loaded for the language `#1'. Using the pattern for}%
\typeout{** the default language instead.}%
\else
\language=\csname l@#1\endcsname
\fi
#2}}
\providecommand{\BIBdecl}{\relax}
\BIBdecl

\bibitem{DBLP:conf/icalp/Dwork06}
C.~Dwork, ``Differential privacy,'' in \emph{Automata, Languages and
  Programming, 33rd International Colloquium, {ICALP} 2006, Venice, Italy, July
  10-14, 2006, Proceedings, Part {II}}, ser. Lecture Notes in Computer Science,
  M.~Bugliesi, B.~Preneel, V.~Sassone, and I.~Wegener, Eds., vol. 4052.\hskip
  1em plus 0.5em minus 0.4em\relax Springer, 2006, pp. 1--12.

\bibitem{DBLP:conf/stoc/BassilyS15}
R.~Bassily and A.~D. Smith, ``Local, private, efficient protocols for succinct
  histograms,'' in \emph{Proceedings of the Forty-Seventh Annual {ACM} on
  Symposium on Theory of Computing, {STOC} 2015, Portland, OR, USA, June 14-17,
  2015}, R.~A. Servedio and R.~Rubinfeld, Eds.\hskip 1em plus 0.5em minus
  0.4em\relax {ACM}, 2015, pp. 127--135.

\bibitem{DBLP:conf/uss/WangBLJ17}
T.~Wang, J.~Blocki, N.~Li, and S.~Jha, ``Locally differentially private
  protocols for frequency estimation,'' in \emph{26th {USENIX} Security
  Symposium, {USENIX} Security 2017, Vancouver, BC, Canada, August 16-18,
  2017}, E.~Kirda and T.~Ristenpart, Eds.\hskip 1em plus 0.5em minus
  0.4em\relax {USENIX} Association, 2017, pp. 729--745.

\bibitem{DBLP:conf/icde/WangXYHSS018_Minimax}
J.~C. Duchi, M.~J. Wainwright, and M.~I. Jordan, ``Minimax optimal procedures
  for locally private estimation,'' \emph{CoRR}, vol. abs/1604.02390, 2016.

\bibitem{DBLP:conf/icde/WangXYHSS018_PrivTrie}
N.~Wang, X.~Xiao, Y.~Yang, T.~D. Hoang, H.~Shin, J.~Shin, and G.~Yu,
  ``Privtrie: Effective frequent term discovery under local differential
  privacy,'' in \emph{34th {IEEE} International Conference on Data Engineering,
  {ICDE} 2018, Paris, France, April 16-19, 2018}.\hskip 1em plus 0.5em minus
  0.4em\relax {IEEE} Computer Society, 2018, pp. 821--832.

\bibitem{DBLP:conf/sigmod/Li0LLS20}
Z.~Li, T.~Wang, M.~Lopuha{\"{a}}{-}Zwakenberg, N.~Li, and B.~Skoric,
  ``Estimating numerical distributions under local differential privacy,'' in
  \emph{Proceedings of the 2020 International Conference on Management of Data,
  {SIGMOD} Conference 2020, online conference [Portland, OR, USA], June 14-19,
  2020}, D.~Maier, R.~Pottinger, A.~Doan, W.~Tan, A.~Alawini, and H.~Q. Ngo,
  Eds.\hskip 1em plus 0.5em minus 0.4em\relax {ACM}, 2020, pp. 621--635.

\bibitem{DBLP:journals/pvldb/CormodeMM21}
G.~Cormode, S.~Maddock, and C.~Maple, ``Frequency estimation under local
  differential privacy,'' \emph{Proc. {VLDB} Endow.}, vol.~14, no.~11, pp.
  2046--2058, 2021.

\bibitem{DBLP:conf/ccs/Du0BLJ0021}
L.~Du, Z.~Zhang, S.~Bai, C.~Liu, S.~Ji, P.~Cheng, and J.~Chen, ``{AHEAD:}
  adaptive hierarchical decomposition for range query under local differential
  privacy,'' in \emph{{CCS} '21: 2021 {ACM} {SIGSAC} Conference on Computer and
  Communications Security, Virtual Event, Republic of Korea, November 15 - 19,
  2021}, Y.~Kim, J.~Kim, G.~Vigna, and E.~Shi, Eds.\hskip 1em plus 0.5em minus
  0.4em\relax {ACM}, 2021, pp. 1266--1288.

\bibitem{DBLP:conf/sigmod/WangDZHHLJ19}
T.~Wang, B.~Ding, J.~Zhou, C.~Hong, Z.~Huang, N.~Li, and S.~Jha, ``Answering
  multi-dimensional analytical queries under local differential privacy,'' in
  \emph{Proceedings of the 2019 International Conference on Management of Data,
  {SIGMOD} Conference 2019, Amsterdam, The Netherlands, June 30 - July 5,
  2019}, P.~A. Boncz, S.~Manegold, A.~Ailamaki, A.~Deshpande, and T.~Kraska,
  Eds.\hskip 1em plus 0.5em minus 0.4em\relax {ACM}, 2019, pp. 159--176.

\bibitem{DBLP:journals/pvldb/Yang0L0S20}
J.~Yang, T.~Wang, N.~Li, X.~Cheng, and S.~Su, ``Answering multi-dimensional
  range queries under local differential privacy,'' \emph{Proc. {VLDB} Endow.},
  vol.~14, no.~3, pp. 378--390, 2020.

\bibitem{DBLP:conf/icde/WangW0N0TG023}
N.~Wang, Y.~Wang, Z.~Wang, J.~Nie, Z.~Wei, P.~Tang, Y.~Gu, and G.~Yu,
  ``Privnud: Effective range query processing under local differential
  privacy,'' in \emph{39th {IEEE} International Conference on Data Engineering,
  {ICDE} 2023, Anaheim, CA, USA, April 3-7, 2023}.\hskip 1em plus 0.5em minus
  0.4em\relax {IEEE}, 2023, pp. 2660--2672.

\bibitem{DBLP:conf/infocom/WangNWXYH17}
S.~Wang, Y.~Nie, P.~Wang, H.~Xu, W.~Yang, and L.~Huang, ``Local private ordinal
  data distribution estimation,'' in \emph{2017 {IEEE} Conference on Computer
  Communications, {INFOCOM} 2017, Atlanta, GA, USA, May 1-4, 2017}.\hskip 1em
  plus 0.5em minus 0.4em\relax {IEEE}, 2017, pp. 1--9.

\bibitem{ash2012information}
R.~B. Ash, \emph{Information theory}.\hskip 1em plus 0.5em minus 0.4em\relax
  Courier Corporation, 2012.

\bibitem{panaretos2019statistical}
V.~M. Panaretos and Y.~Zemel, ``Statistical aspects of wasserstein distances,''
  \emph{Annual review of statistics and its application}, vol.~6, pp. 405--431,
  2019.

\bibitem{helgason1980radon}
S.~Helgason and S.~Helgason, \emph{The radon transform}.\hskip 1em plus 0.5em
  minus 0.4em\relax Springer, 1980, vol.~2.

\bibitem{DBLP:conf/nips/KolouriNSBR19}
S.~Kolouri, K.~Nadjahi, U.~Simsekli, R.~Badeau, and G.~K. Rohde, ``Generalized
  sliced wasserstein distances,'' in \emph{Advances in Neural Information
  Processing Systems 32: Annual Conference on Neural Information Processing
  Systems 2019, NeurIPS 2019, December 8-14, 2019, Vancouver, BC, Canada},
  H.~M. Wallach, H.~Larochelle, A.~Beygelzimer, F.~d'Alch{\'{e}}{-}Buc, E.~B.
  Fox, and R.~Garnett, Eds., 2019, pp. 261--272.

\bibitem{DBLP:conf/ccs/ShokriTTHB12}
R.~Shokri, G.~Theodorakopoulos, C.~Troncoso, J.~Hubaux, and J.~L. Boudec,
  ``Protecting location privacy: optimal strategy against localization
  attacks,'' in \emph{the {ACM} Conference on Computer and Communications
  Security, CCS'12, Raleigh, NC, USA, October 16-18, 2012}, T.~Yu, G.~Danezis,
  and V.~D. Gligor, Eds.\hskip 1em plus 0.5em minus 0.4em\relax {ACM}, 2012,
  pp. 617--627.

\bibitem{DBLP:journals/tmc/ToGFS17}
H.~To, G.~Ghinita, L.~Fan, and C.~Shahabi, ``Differentially private location
  protection for worker datasets in spatial crowdsourcing,'' \emph{{IEEE}
  Trans. Mob. Comput.}, vol.~16, no.~4, pp. 934--949, 2017.

\bibitem{DBLP:conf/icde/Gu0XC20}
X.~Gu, M.~Li, L.~Xiong, and Y.~Cao, ``Providing input-discriminative protection
  for local differential privacy,'' in \emph{36th {IEEE} International
  Conference on Data Engineering, {ICDE} 2020, Dallas, TX, USA, April 20-24,
  2020}.\hskip 1em plus 0.5em minus 0.4em\relax {IEEE}, 2020, pp. 505--516.

\bibitem{DBLP:conf/ccs/AndresBCP13}
M.~E. Andr{\'{e}}s, N.~E. Bordenabe, K.~Chatzikokolakis, and C.~Palamidessi,
  ``Geo-indistinguishability: differential privacy for location-based
  systems,'' in \emph{2013 {ACM} {SIGSAC} Conference on Computer and
  Communications Security, CCS'13, Berlin, Germany, November 4-8, 2013},
  A.~Sadeghi, V.~D. Gligor, and M.~Yung, Eds.\hskip 1em plus 0.5em minus
  0.4em\relax {ACM}, 2013, pp. 901--914.

\bibitem{DBLP:conf/icde/CormodePSSY12}
G.~Cormode, C.~M. Procopiuc, D.~Srivastava, E.~Shen, and T.~Yu,
  ``Differentially private spatial decompositions,'' in \emph{{IEEE} 28th
  International Conference on Data Engineering {(ICDE} 2012), Washington, DC,
  {USA} (Arlington, Virginia), 1-5 April, 2012}, A.~Kementsietsidis and
  M.~A.~V. Salles, Eds.\hskip 1em plus 0.5em minus 0.4em\relax {IEEE} Computer
  Society, 2012, pp. 20--31.

\bibitem{DBLP:conf/sigmod/QardajiYL14}
W.~H. Qardaji, W.~Yang, and N.~Li, ``Priview: practical differentially private
  release of marginal contingency tables,'' in \emph{International Conference
  on Management of Data, {SIGMOD} 2014, Snowbird, UT, USA, June 22-27, 2014},
  C.~E. Dyreson, F.~Li, and M.~T. {\"{O}}zsu, Eds.\hskip 1em plus 0.5em minus
  0.4em\relax {ACM}, 2014, pp. 1435--1446.

\bibitem{DBLP:conf/focs/KasiviswanathanLNRS08}
S.~P. Kasiviswanathan, H.~K. Lee, K.~Nissim, S.~Raskhodnikova, and A.~D. Smith,
  ``What can we learn privately?'' in \emph{49th Annual {IEEE} Symposium on
  Foundations of Computer Science, {FOCS} 2008, October 25-28, 2008,
  Philadelphia, PA, {USA}}.\hskip 1em plus 0.5em minus 0.4em\relax {IEEE}
  Computer Society, 2008, pp. 531--540.

\bibitem{DBLP:conf/focs/DuchiJW13}
J.~C. Duchi, M.~I. Jordan, and M.~J. Wainwright, ``Local privacy and
  statistical minimax rates,'' in \emph{54th Annual {IEEE} Symposium on
  Foundations of Computer Science, {FOCS} 2013, 26-29 October, 2013, Berkeley,
  CA, {USA}}.\hskip 1em plus 0.5em minus 0.4em\relax {IEEE} Computer Society,
  2013, pp. 429--438.

\bibitem{bossler2010coordinates}
J.~D. Bossler, J.~Campbell, R.~Mcmaster, and C.~Rizos, ``Coordinates and
  coordinate systems,'' \emph{Manual of Geospatial Science and Technology}, pp.
  9--16, 2010.

\bibitem{DBLP:conf/ndss/0001LLSL20}
T.~Wang, M.~Lopuha{\"{a}}{-}Zwakenberg, Z.~Li, B.~Skoric, and N.~Li, ``Locally
  differentially private frequency estimation with consistency,'' in \emph{27th
  Annual Network and Distributed System Security Symposium, {NDSS} 2020, San
  Diego, California, USA, February 23-26, 2020}.\hskip 1em plus 0.5em minus
  0.4em\relax The Internet Society, 2020.

\bibitem{ChicagoCrimes}
``[online] \text{Chicago Crimes 2022},''
  \url{https://data.cityofchicago.org/Public-Safety/Crimes-2022/9hwr-2zxp},
  2024.

\bibitem{NYCGreen2016}
``[online] \text{NYC Green Taxi Trip 2016},''
  \url{https://data.cityofnewyork.us/Transportation/2016-Green-Taxi-Trip-Data/hvrh-b6nb},
  2024.

\bibitem{DBLP:journals/pvldb/DuHZFCZG23}
Y.~Du, Y.~Hu, Z.~Zhang, Z.~Fang, L.~Chen, B.~Zheng, and Y.~Gao, ``Ldptrace:
  Locally differentially private trajectory synthesis,'' \emph{Proc. {VLDB}
  Endow.}, vol.~16, no.~8, pp. 1897--1909, 2023.

\bibitem{DBLP:journals/pvldb/Zhang000H23}
Y.~Zhang, Q.~Ye, R.~Chen, H.~Hu, and Q.~Han, ``Trajectory data collection with
  local differential privacy,'' \emph{Proc. {VLDB} Endow.}, vol.~16, no.~10,
  pp. 2591--2604, 2023.

\bibitem{DBLP:conf/nips/Cuturi13}
M.~Cuturi, ``Sinkhorn distances: Lightspeed computation of optimal transport,''
  in \emph{Advances in Neural Information Processing Systems 26: 27th Annual
  Conference on Neural Information Processing Systems 2013. Proceedings of a
  meeting held December 5-8, 2013, Lake Tahoe, Nevada, United States}, C.~J.~C.
  Burges, L.~Bottou, Z.~Ghahramani, and K.~Q. Weinberger, Eds., 2013, pp.
  2292--2300.

\end{thebibliography}
\egroup

\appendix

\subsection{Grid Implement for \solutionD{}}\label{Imp:basic}
Our \solutionD{} can be regard as a union of $\hat{b}$ \solutionB{}.
Let $O_{(x,y),r}$ be the circle in position $(x,y)$ with radius $r$.
Given an integer $\hat{b}$, we can split it into $(b+1)$ parts by circle set $\{O_{(0,0),1},O_{(0,0),2},...,O_{(0,0),\hat{b}}\}$.
Figure~\ref{subfig:Appendix_grid_basic} shows the quarter parts of the division when $b=4$ and $d_1, d_2, d_3, d_4$ equal $1,2,3,4$ ,respectively.

For the part out of circle $O_{(0,0),\hat{b}}$, the reported probability of each cell is the same as that in discrete \solutionB{} with $b=\hat{b}$. For the part within circle $O_{(0,0),1}$, it is also the same as that in discrete \solutionB{} with $b=1$.
As for the middle parts, let $A_{\kappa}$ be the area $O_{(0,0),\kappa}\setminus O_{(0,0),\kappa-1}$, where $2\leq \kappa\leq \hat{b}$.
Each unit cell will be reported with probability $p_{d_{\kappa}}=q\cdot e^{(1-\frac{d_{\kappa}-1}{b})\epsilon}$.
For those cells split by the edge of $O_{(0,0),\kappa}$, each of them can be divided as two parts: shrank area $A_s$ and remain area $A_r$. Its reported probability is the weighted sum of these two parts, expressed as $p_{d_{\kappa-1}}\cdot A_s + p_{d_{\kappa}}\cdot A_r$.

\subsection{Proof of Theorems}
\subsubsection{Proof of Theorem~\ref{thrm:A_q}}\label{appendix:thrm_A_q}
\begin{proof}
	Let $S_{\mathcal{\hat{D}}}$ be the area size of $\mathcal{\hat{D}}$.
	Let $S_{p}$ be the area size of the pure high probability area $A_p$.
	Let $S_m$ be the area size of the mixed probability area $A_m$, which can be divided into the high probability area $A_{m,p}$ and low probability areas $A_{m,q}$ with area sizes $S_{m,p}$ and $S_{m,q}$.
	Then, we have $S_p+S_m+S_q=S_{\mathcal{\hat{D}}}$, $S_L=S_q+S_{m,q}$ and $S_H=S_p+S_{m,p}$.
	$S_q$ can be see as the increment from the area with $(S_p+S_m)$ area size centered at index $(0,0)$ to area $A_{\mathcal{\hat{D}}}$.
	We can divide the process of the increment $\delta$ into two sub-processes: \textit{moving up} ($\delta_u$) and \textit{moving right} ($\delta_r$).
	For the moving up, $\delta_u$ is $(2\hat{b}+1)(d-1)$ (shown in Figure~\ref{subfig:Moving_up}).
	For the moving right, $\delta_r$ is $(2\hat{b}+d)(d-1)$ (shown in Figure~\ref{subfig:Moving_right}).
	Therefore, $S_q=\delta=\delta_u+\delta_r=d^2+4\hat{b}d-4\hat{b}-1$.
\end{proof}

\begin{figure}[t!]\centering
		\subfigure[][{\small Division for \solutionD{}}]{
			\scalebox{0.43}[0.43]{\includegraphics{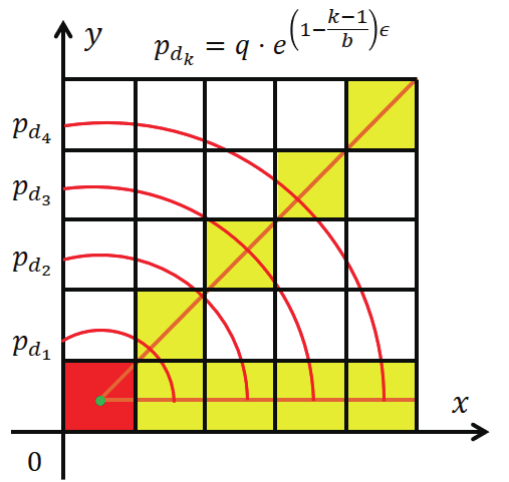}}
			\label{subfig:Appendix_grid_basic}}\hfill
		\subfigure[][{\small Division of $S_{\hat{b},(0,\frac{\pi}{4})}$}]{
			\scalebox{0.28}[0.28]{\includegraphics{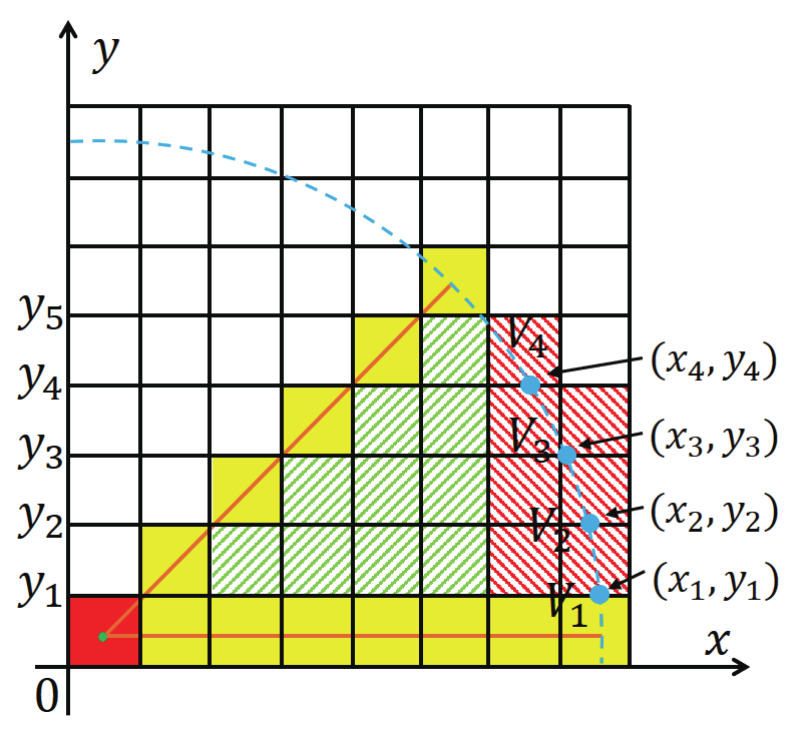}}
			\label{subfig:Appendix_so_count3}}
		\caption{\small Division for \solutionD{} and division of $S_{\hat{b},(0,\frac{\pi}{4})}$.}
		\label{fig:appendix_Approximate}
	\end{figure}
	
	\begin{figure}[t!]\centering\vspace{2ex}
		\subfigure[][{\small Moving up}]{
			\scalebox{0.37}[0.37]{\includegraphics{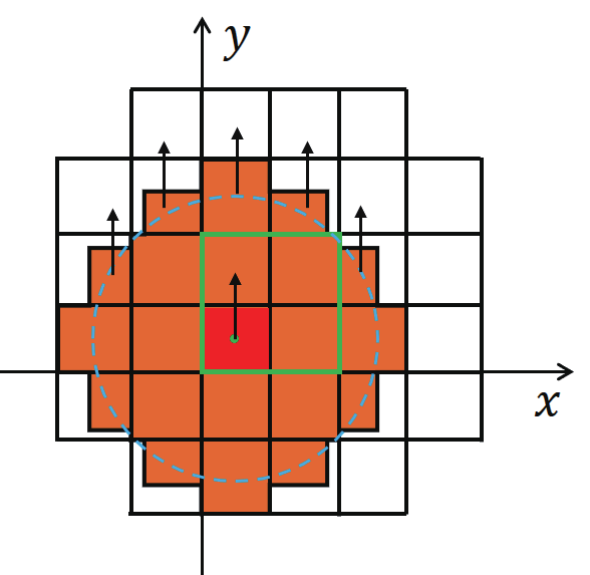}}
			\label{subfig:Moving_up}}\hfill
		\subfigure[][{\small Moving right}]{
			\scalebox{0.37}[0.37]{\includegraphics{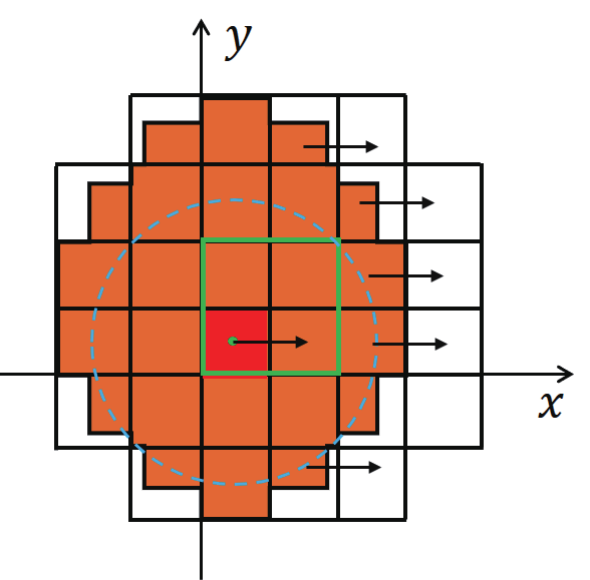}}
			\label{subfig:Moving_right}}
		\caption{\small Increment of $A_p\cup A_m$ in the process of getting $A_q$.}
		\label{fig:Moving}
	\end{figure}

	\subsubsection{Proof of Theorem~\ref{thrm:SOCount}}\label{appendix:thrm_SOCount}
	\begin{figure}[t!]\centering
		\subfigure[][{\small $H_{\hat{b},\frac{\pi}{4}}$}]{
			\scalebox{0.25}[0.25]{\includegraphics{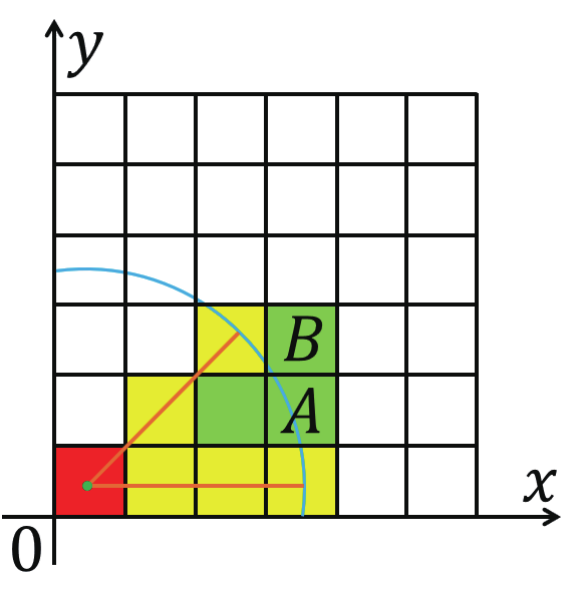}}
			\label{subfig:Appendix_so_count_A}}\hfill
		\subfigure[][{\small $H_{\hat{b},\frac{\pi}{4}} - 1$}]{
			\scalebox{0.25}[0.25]{\includegraphics{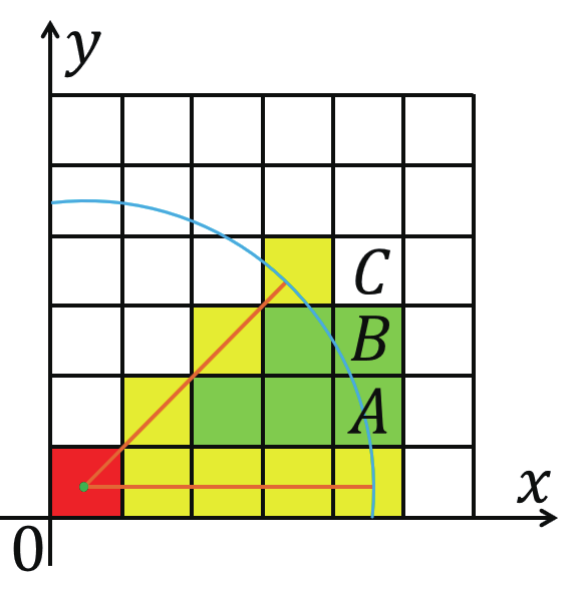}}
			\label{subfig:Appendix_so_count_B}}\hfill
		\subfigure[][{\small  The relation of $r$ and $r_1$}]{
			\scalebox{0.25}[0.25]{\includegraphics{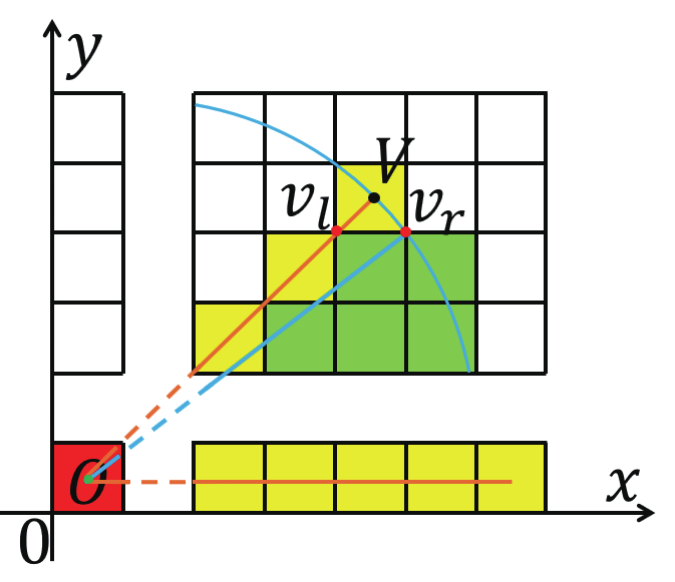}}
			\label{fig:Appendix_so_count2}}
		\caption{\small Different cross conditions and the relation.}
		\label{fig:Appendix_so_count}
	\end{figure}
	\begin{proof}
		Notice that given any an positive integer $\hat{b}$, each horizontal line in $S_{\hat{b},(0,\frac{\pi}{4})}$ contains and only contains one cell in $S_{\hat{b}}^O$.
		Let $H_{\hat{b},\frac{\pi}{4}}$ be the height of $S_{\hat{b},\frac{\pi}{4}}$ (the maximal $\hat{y}$ index in $S_{\hat{b},\frac{\pi}{4}}$).
		Then we have $H_{\hat{b},\frac{\pi}{4}}=\lceil\frac{\hat{b}}{\sqrt{2}}-\frac{1}{2}\rceil$.
		Let $H_{\hat{b},(0,\frac{\pi}{4})}$ be the height of $S_{\hat{b},(0,\frac{\pi}{4})}$.
		Then as is shown in Figure~\ref{fig:Appendix_so_count}, there are two kinds of relation between $H_{\hat{b},(0,\frac{\pi}{4})}$ and $H_{\hat{b},\frac{\pi}{4}}$.
		They are $H_{\hat{b},(0,\frac{\pi}{4})} = H_{\hat{b},\frac{\pi}{4}}$ (shown in Figure~\ref{subfig:Appendix_so_count_A}) and $H_{\hat{b},(0,\frac{\pi}{4})} = H_{\hat{b},\frac{\pi}{4}} - 1$ (shown in Figure~\ref{subfig:Appendix_so_count_B}).
		Let $V$ be the cell in $S_{\hat{b},(0,\frac{\pi}{4})}$ with largest height.
		In the condition of $H_{\hat{b},(0,\frac{\pi}{4})} = H_{\hat{b},\frac{\pi}{4}}$, the high probability circle border (blue circle) intersects the right boundary of $V$.
		In the condition of $H_{\hat{b},(0,\frac{\pi}{4})} = H_{\hat{b},\frac{\pi}{4}} - 1$, the high probability circle border intersects the lower boundary of $V$.

		Let $v_l$ and $v_r$ be the left bottom point and right bottom point of cell $V$ respectively.
		As is shown in Figure~\ref{fig:Appendix_so_count2},
		Let $r=Ov_r$, $r_1=Ov_l$, then $r=\hat{b}$, $r_1=\lfloor\frac{\hat{b}}{\sqrt{2}}-\frac{1}{2}\rfloor\cdot\sqrt{2}+\frac{1}{\sqrt{2}}$.
		In triangle $Ov_lv_r$, according to the law of cosines, $r=\sqrt{r_1^2+1^2-2r_1\cos{\frac{3\pi}{4}}}=\sqrt{r_1^2+1+\sqrt{2}r_1}$.
		When $\hat{b}<r$, the high probability circle border will intersect the bottom boundary of $V$, and $H_{\hat{b},(0,\frac{\pi}{4})} = H_{\hat{b},\frac{\pi}{4}} - 1$.
		When $\hat{b}>r$, the high probability circle border will intersect the right boundary of $V$, and $H_{\hat{b},(0,\frac{\pi}{4})} = H_{\hat{b},\frac{\pi}{4}}$.
		Notice that $\frac{r}{\hat{b}}\in [\frac{1}{\sqrt{2+\sqrt{2}}},1)\cup(1,\sqrt{2+\sqrt{2}}]\subseteq(\frac{1}{2},2)$.
		Thus, we have $H_{\hat{b},\frac{\pi}{4}}=H_{\hat{b},(0,\frac{\pi}{4})}+\lfloor\frac{r}{\hat{b}}\rfloor$.
		Therefore, we have $|S_{\hat{b}}^O|=H_{\hat{b},(0,\frac{\pi}{4})}=H_{\hat{b},\frac{\pi}{4}}-\lfloor\frac{r}{\hat{b}}\rfloor=\lceil\frac{\hat{b}}{\sqrt{2}}-\frac{1}{2}\rceil-\lfloor\frac{r}{\hat{b}}\rfloor$.
	\end{proof}
	
	\subsubsection{Proof of Theorem~\ref{thrm:SICount}}\label{appendix:thrm_SICount}
	\begin{proof}
		According to the definition of $S_{\hat{b},(0,\frac{\pi}{4})}, S_{\hat{b}}^I$ and $S_{\hat{b}}^O$, we have $|S_{\hat{b}}^I| = |S_{\hat{b},(0,\frac{\pi}{4})}| - |S_{\hat{b}}^O|$.
		Thus, we need to calculate $|S_{\hat{b},(0,\frac{\pi}{4})}|$.
		
		As is shown in Figure~\ref{subfig:Appendix_so_count3}, we can divide $S_{\hat{b},(0,\frac{\pi}{4})}$ into two parts: $S_{\hat{b},(0,\frac{\pi}{4})}^{(1)}$ (the green shadow part) and $S_{\hat{b},(0,\frac{\pi}{4})}^{(2)}$ (the red shadow part).
		Let $H_{\hat{b},(0,\frac{\pi}{4})}^L$ be the hight of $S_{\hat{b},(0,\frac{\pi}{4})}^{(1)}$.
		Then, $H_{\hat{b},(0,\frac{\pi}{4})}^L = \lceil\frac{\hat{b}}{\sqrt{2}}-\frac{1}{2}\rceil-1$,
		and $S_{\hat{b},(0,\frac{\pi}{4})}^{(1)}=\frac{H_{\hat{b},(0,\frac{\pi}{4})}^L(H_{\hat{b},(0,\frac{\pi}{4})}^L+1)}{2}$.
		Let $y_1,y_2,...,y_{|S_{\hat{b}}^{O}|}$ be the bottom borders of cells in $S_{\hat{b}}^{O}$.
		Then we have $y_i=i-\frac{1}{2}$ for $i=1,2,...|S_{\hat{b}}^{O}|$.
		The high probability circle intersects these bottom borders at points $(x_1,y_1),...(x_{|S_{\hat{b}}^{O}|}, y_{|S_{\hat{b}}^{O}|})$.
		We can calculate $x_i=\sqrt{\hat{b}^2-y_i^2}\in[\lceil\frac{\hat{b}}{\sqrt{2}}-\frac{1}{2}\rceil+\frac{1}{2},b)$ for $i=1,2,...|S_{\hat{b}}^{O}|$.
		Let the indexes of cells in $S_{\hat{b}}^{O}$ be $(x_1^V, y_1^V), (x_2^V, y_2^V),..., (x_{|S_{\hat{b}}^{O}|}^V, y_{|S_{\hat{b}}^{O}|}^V)$.
		Then we have $x_i^V = \lceil x_i-\frac{1}{2}\rceil$ and $y_i^V = i$ for $i=1,2,...|S_{\hat{b}}^{O}|$.
		Thus, $|S_{\hat{b},(0,\frac{\pi}{4})}^{(2)}|=\sum_{i=1}^{|S_{\hat{b}}^{O}|}(x_i^V-H_{\hat{b},(0,\frac{\pi}{4})}^L)$.
		Therefore, we have
		\begin{equation}
			{\scriptsize
				\begin{aligned}
					|S_{\hat{b},(0,\frac{\pi}{4})}| &= |S_{\hat{b},(0,\frac{\pi}{4})}^{(1)}|+|S_{\hat{b},(0,\frac{\pi}{4})}^{(2)}|\\
					&= \frac{H_{\hat{b},(0,\frac{\pi}{4})}^L(H_{\hat{b},(0,\frac{\pi}{4})}^L+1)}{2} + \sum_{i=1}^{|S_{\hat{b}}^{O}|}(x_i^V-H_{\hat{b},(0,\frac{\pi}{4})}^L)\\
					&= \frac{(\lceil\frac{\hat{b}}{\sqrt{2}}-\frac{1}{2}\rceil-1)\lceil\frac{\hat{b}}{\sqrt{2}}-\frac{1}{2}\rceil}{2}
					- |S_{\hat{b}}^{O}|(\lceil\frac{\hat{b}}{\sqrt{2}}-\frac{1}{2}\rceil-1)\\
					&+ \sum_{i=1}^{|S_{\hat{b}}^{O}|}\lceil \sqrt{\hat{b}^2-(i-\frac{1}{2})^2}-\frac{1}{2}\rceil\\
					&= \frac{1}{2}(\lceil\frac{\hat{b}}{\sqrt{2}}-\frac{1}{2}\rceil-1)(\lceil\frac{\hat{b}}{\sqrt{2}}-\frac{1}{2}\rceil-2|S_{\hat{b}}^{O}|)\\
					&+ \sum_{i=1}^{|S_{\hat{b}}^{O}|}\lceil\sqrt{\hat{b}^2-(i-\frac{1}{2})^2}-\frac{1}{2}\rceil,
				\end{aligned}
			}
		\end{equation}
		and
		\begin{equation}
			{\scriptsize
				\begin{aligned}
					|S_{\hat{b}}^I| &= |S_{\hat{b},(0,\frac{\pi}{4})}|-|S_{\hat{b}}^O|\\
					&= \frac{1}{2}(\lceil\frac{\hat{b}}{\sqrt{2}}-\frac{1}{2}\rceil-1)(\lceil\frac{\hat{b}}{\sqrt{2}}-\frac{1}{2}\rceil-2|S_{\hat{b}}^{O}|)\\
					&+ \sum_{i=1}^{|S_{\hat{b}}^{O}|}\lceil\sqrt{\hat{b}^2-(i-\frac{1}{2})^2}-\frac{1}{2}\rceil -|S_{\hat{b}}^{O}|\\
					&= \frac{1}{2}\lceil\frac{\hat{b}}{\sqrt{2}}-\frac{1}{2}\rceil(\lceil\frac{\hat{b}}{\sqrt{2}}-\frac{1}{2}\rceil-2|S_{\hat{b}}^{O}|-1)\\
					&+ \sum_{i=1}^{|S_{\hat{b}}^{O}|}\lceil\sqrt{\hat{b}^2-(i-\frac{1}{2})^2}-\frac{1}{2}\rceil.
				\end{aligned}
			}
		\end{equation}
	\end{proof}
	
	\subsection{Experiment on Crime with Full domain}\label{appendix:Crime_full}
	In order to makes the experiment more comprehensive, we compare all these methods on the Crime data set with full domain.
	The comparison result is shown in Figure~\ref{fig:alter_d_e_add}.
	We observe that, the difference in Wasserstein distance among all theses methods on Crime with the full domain is similar to that with the coarse domain. 
	Our \solutionB{} outperforms other methods when $d$ is large.
	The only different is that when the privacy budget $\epsilon$ increases, \solutionCMPg{} slightly outperforms our \solutionB{}.
	That is because LDP methods in coarse data have fewer non-zeros entries, resulting in low information content in the original data. Consequently, the introduction of noise can obscure the original signal, making it difficult to extract accurate information. 
	
	\begin{figure}[t!]\centering
		\subfigure{
			\scalebox{0.35}[0.35]{\includegraphics{figures/result_bar/bar_other_change_trim_new.pdf}}}\hfill\\
		\addtocounter{subfigure}{-1}
		\subfigure[][{\small $W_2$ with small $d$ varied}]{
			\scalebox{0.24}[0.24]{\includegraphics{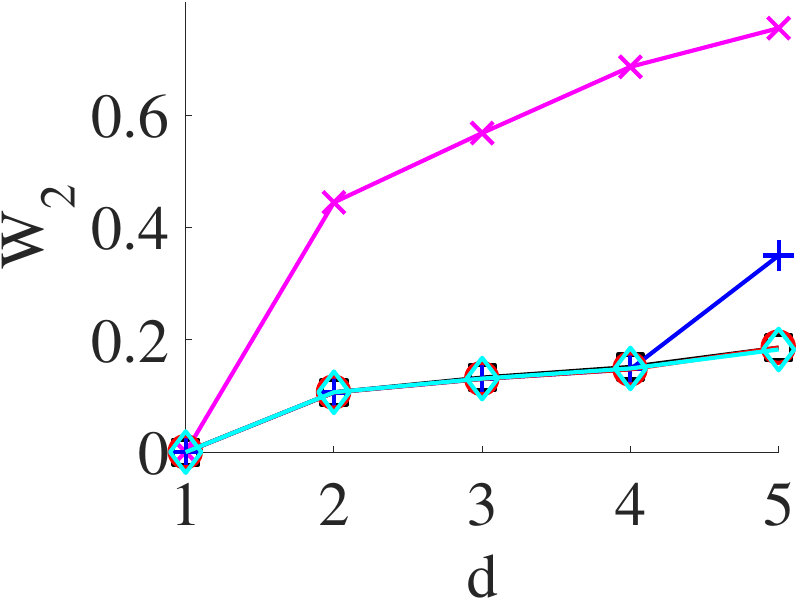}}
			\label{subfig:alter_d_crime_add}}\hspace{1cm}	
		\subfigure[][{\small $W_2$ with large $d$ varied}]{
			\scalebox{0.24}[0.24]{\includegraphics{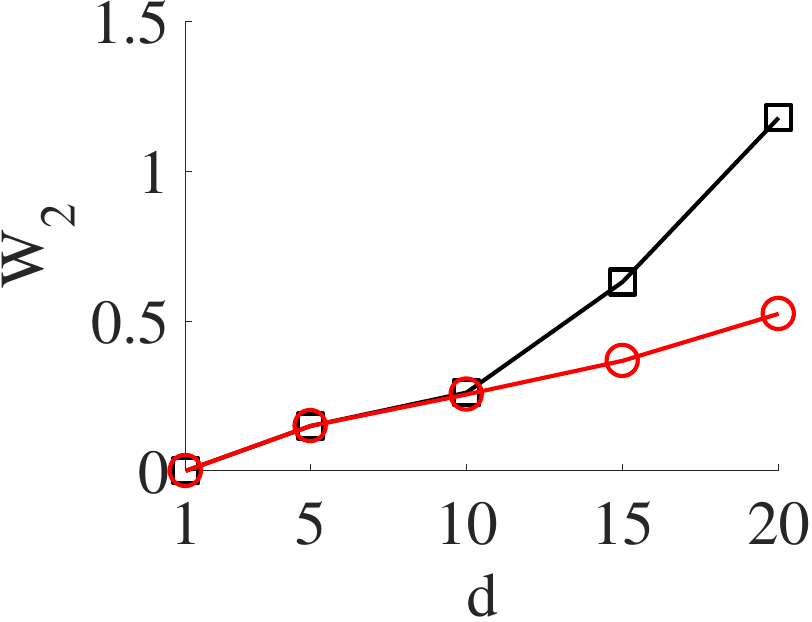}}
			\label{subfig:alter_g_crime_extended_add}}\\ 	
		\subfigure[][{\small $W_2$ with small $\epsilon$ varied}]{
			\scalebox{0.24}[0.24]{\includegraphics{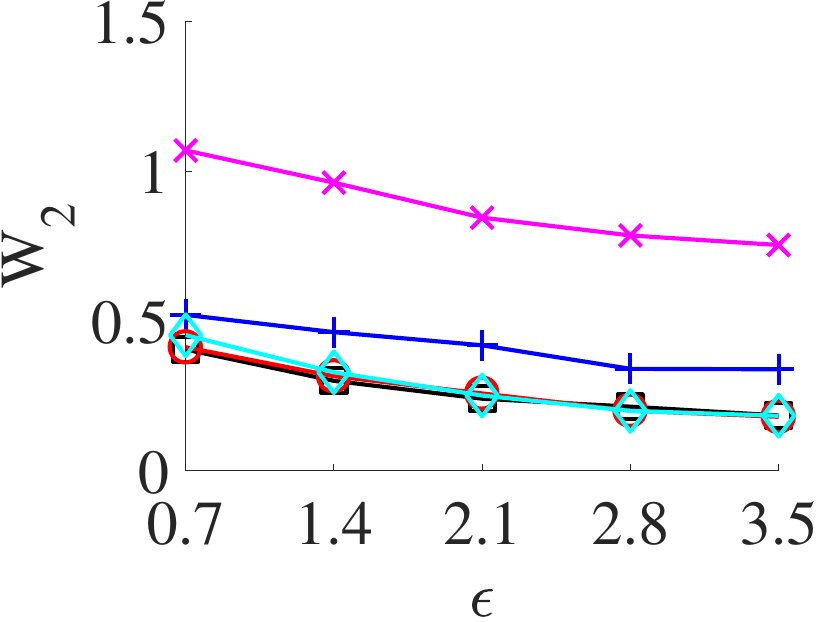}}
			\label{subfig:alter_e_crime_add}}\hspace{1cm} 
		\subfigure[][{\small $W_2$ with large $\epsilon$ varied}]{
			\scalebox{0.24}[0.24]{\includegraphics{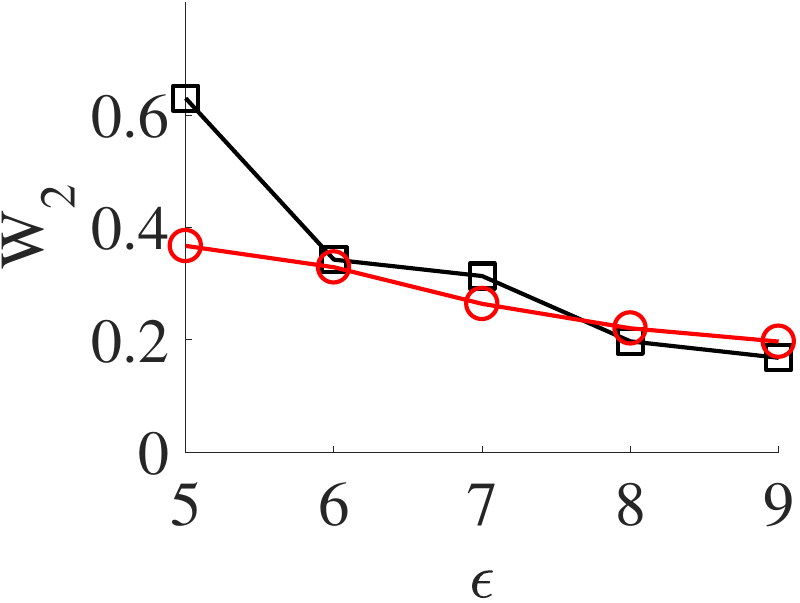}}
			\label{subfig:alter_e_crime_extended_add}}\hfill 
		\caption{\small Wasserstein distances with $d$ or $\epsilon$ varied on Crime with total domain.}
		\label{fig:alter_d_e_add}
	\end{figure}
	
	\subsection{Experiment for Trajectory on NYC}\label{appendix:TrajectoryCmp}
	We compared our \solutionB{} with \solutionCMPTrATotalName{} (\solutionCMPTrA{})~\cite{DBLP:journals/pvldb/DuHZFCZG23}
	and \solutionCMPTrBTotalName{} (\solutionCMPTrB{})~\cite{DBLP:journals/pvldb/Zhang000H23}.
	
	\noindent
	\textbf{Data sets.} We use \emph{NYC Green Taxis}~\cite{NYCGreen2016} (NYC) for our trajectory experiment.
	To generate original trajectory data, we first divide the data set domain into a $300\times 300$ grid.
	Then we map each grid cell to the points within it.
	Next, we randomly select $1,000$ trajectory start cells and $1,000$ trajectory lengths ranging from $2$ to $200$.
	For each trajectory start cell, we iteratively choose one of its neighbors with probability proportional to the point quantity within these neighboring cells until reaching the sampled trajectory length. 
	For each chosen cell, we randomly select one point within it.
	Ultimately, we obtain $1,000$ point lists, each list containing $2$ to $200$ points.
	We refer to each of these lists as a sampled trajectory.
	
	\noindent
	\textbf{Mechanisms.}
	\solutionCMPTrA{} and \solutionCMPTrB{} primarily focus on trajectory estimation.
	To make these methods comparable with \solutionB{}, we transform trajectory statistics to point statistics.
	We follow the following seven steps.
	\begin{enumerate}
		\item[(1)] Divide the trajectory input domain as $d\times d$ grids.
		\item[(2)] Count the original trajectory points in each grid cell.
		\item[(3)] Approximate the real distribution $D_T$ using the counting results.
		\item[(4)] Apply \solutionCMPTrA{} or \solutionCMPTrB{} to estimate trajectories.
		\item[(5)] Count the estimated trajectory points in each grid cell.
		\item[(6)] Approximate the estimated distribution $\hat{D}_T$ using counting results.
		\item[(7)] Calculate the Wasserstein distance $W_2$ between $D_T$ and $\hat{D}_T$.
	\end{enumerate}
	
	\noindent
	\textbf{Parameter Settings.}
	We vary the discrete side length $d$ from $1$ to $20$, with default value of $15$.
	Additionally, we adjust the privacy budget $\epsilon$ from $0.5$ to $2.5$, setting the default value as $1.5$. Table~\ref{tab:settings2} presents these parameter settings.
	
	\begin{table}[t!]
		\caption{\small Experimental Settings for Trajectory Cases.} \label{tab:settings2}
		\centering
		\resizebox{6.8cm}{!}{
			\begin{tabular}{l|l}
				{\bf \quad \quad Parameters} & {\bf \qquad \qquad Values} \\ \hline \hline
				the discrete side length, $d$                   &  $1$, $5$, $10$, $\mathbf{15}$, $20$\\\hline
				the privacy budget, $\epsilon$      & $0.5$, $1.0$, $\textbf{1.5}$, $2.0$, $2.5$\\
				\hline
			\end{tabular}
		}\vspace{-1ex}
	\end{table}
	
	\noindent
	\textbf{Experimental Results.}
	We continue use 2-norm Wasserstein distance $W_2$ as the error metric in this experiment.
	Figure~\ref{fig:alter_d_e_trajectory} presents the experimental results.
	As the grid side length $d$ increases, the Wasserstain distance for all three mechanisms increase.
	This occurs because a larger $d$ results in more grid cells, widening the gap between the recovered and actual density distributions.
	Our \solutionB{} consistently outperforms the other two mechanisms.
	\solutionCMPTrA{} and \solutionCMPTrB{} allocate more of their privacy budget to determining trajectory direction, leading to suboptimal performance.
	Furthermore, as the privacy budget increases, \solutionCMPTrA{} fluctuates, while \solutionCMPTrB and \solutionB{} decrease (\solutionCMPTrB{} decreases slightly and less noticeable). This difference likely stems from \solutionCMPTrA{}'s reliance on a constructed synthetic data set, which captures trajectory information but fail to fully represent the underlying spatial data distribution. Our \solutionB{} is always better than \solutionCMPTrA{} and \solutionCMPTrB{}. 
	
	\begin{figure}[t!]\centering
		\subfigure{
			\scalebox{0.35}[0.35]{\includegraphics{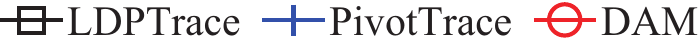}}}\hfill\\
		\addtocounter{subfigure}{-1}
		\subfigure[][{\small $W_2$ with $d$ varied in trajectory cases}]{
			\scalebox{0.24}[0.24]{\includegraphics{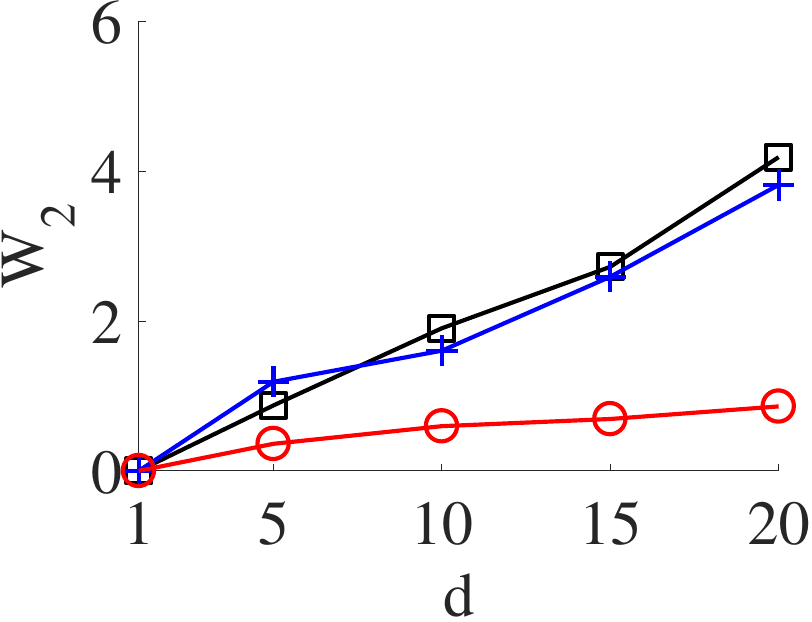}}
			\label{subfig:alter_g_nyc_extended_trajectory}}
		\hspace{1cm}
		\subfigure[][{\small $W_2$ with $\epsilon$ varied in trajectory cases}]{
			\scalebox{0.24}[0.24]{\includegraphics{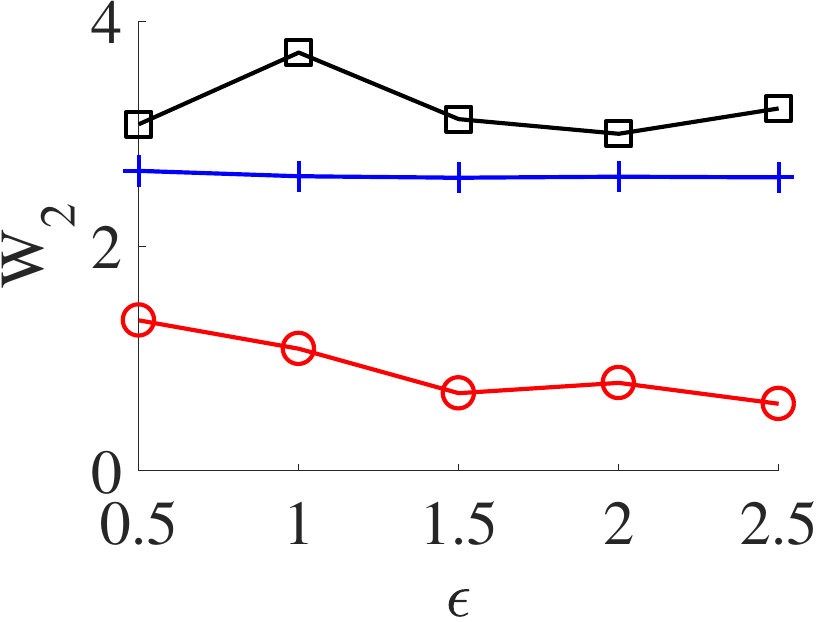}}
			\label{subfig:alter_e_nyc_trajectory}}\hfill
		\caption{\small Wasserstein distances with $d$ or $\epsilon$ varied on NYC in trajectory cases.}
		\label{fig:alter_d_e_trajectory}
	\end{figure}

\end{document}